\newtheorem{corollary}{Corollary}
\newtheorem{definition}[lemma]{Definition}
\DeclareMathOperator{\chessboard}{chessboard} 
\DeclareMathOperator{\parent}{parent} 
\DeclareMathOperator{\ctree}{ctree} 
\DeclareMathOperator{\pre}{pre} 
\DeclareMathOperator{\rpost}{post} 
\DeclareMathOperator{\cperm}{cperm} 
\DeclareMathOperator{\stretchperm}{stretch} 
\DeclareMathOperator{\augment}{augment} 
\newcommand{\R}{\mathbf{R}}
\begin{document}
\doi{} 
\Issue{0}{0}{0}{0}{0} 
\HeadingAuthor{Bannister et al.} 
\HeadingTitle{Superpatterns and Universal Point Sets} 
\title{Superpatterns and Universal Point Sets}
\Ack{This research was supported in part by the National Science Foundation under grants 0830403 and 1217322, and by the Office of Naval Research under MURI grant N00014-08-1-1015.}
\author{Michael J. Bannister}{mbannist@uci.edu}
\author{Zhanpeng Cheng}{zhanpenc@uci.edu}
\author{William E. Devanny}{wdevanny@uci.edu}
\author{David Eppstein}{eppstein@uci.edu}
\affiliation{Department of Computer Science, University of California, Irvine}

\submitted{December 2013}%
\reviewed{February 2014}%
\revised{March 2014}%
\accepted{March 2014}%
\final{April 2014}%
\published{}%
\type{Regular Paper}%
\editor{Stephen Wismath, Alexander Wolff}%

\maketitle

\begin{abstract}
An old open problem in graph drawing asks for the size of a \emph{universal point set}, a set of points that can be used as vertices for straight-line drawings of all $n$-vertex planar graphs.
We connect this problem to the theory of permutation patterns, where another open problem concerns the size of \emph{superpatterns}, permutations that contain all patterns of a given size. We generalize superpatterns to classes of permutations determined by forbidden patterns, and we construct superpatterns of size $n^2/4+\Theta(n)$ for the $213$-avoiding permutations, half the size of known superpatterns for unconstrained permutations. We use our superpatterns to construct universal point sets of size $n^2/4-\Theta(n)$, smaller than the previous bound by a 9/16 factor. We prove that every proper subclass of the $213$-avoiding permutations has superpatterns of size $O(n\log^{O(1)}n)$, which we use to prove that the planar graphs of bounded pathwidth have near-linear universal point sets.
\end{abstract}

\Body

\section{Introduction}
Fary's theorem tells us that every planar graph can be drawn with its edges as non-crossing straight line segments. As usually stated, this theorem allows the vertex coordinates of the drawing to be drawn from an uncountable and unbounded set, the set of all points in the plane. It is natural to ask how tightly we can constrain the set of possible vertices. In this direction, the \emph{universal point set problem} asks for a sequence of point sets $U_n \subseteq \R^2$ such that every planar graph with $n$ vertices can be straight-line embedded with vertices in $U_n$ and such that the size of $U_n$ is as small as possible.

So far the best known upper bounds for this problem have considered sets $U_n$ of a special form: the intersection of the integer lattice with the interior of a convex polygon. In 1988 de Fraysseix, Pach and Pollack showed that a triangular set of lattice points within a rectangular grid of $(2n-3)\times (n-1)$ points forms a universal set of size $n^2-O(n)$~\cite{FraPacPol-IJC-1990, ChrPay-IPL-1995}, and in 1990 Schnyder found more compact grid drawings within the lower left triangle of an $(n-1) \times (n-1)$ grid\cite{Sch-SODA-1990}, a set of size $n^2/2 - O(n)$. Using the method of de Fraysseix et al., Brandenburg found that a triangular subset of a $\frac{4}{3}n \times \frac{2}{3}n$ grid, of size $\frac{4}{9}n^2 + O(n)$, is universal~\cite{Bra-ENDM-08}. Until now his bound has remained the best known.

On the other side, Dolev, Leighton, and Trickey~\cite{DolLeiTri-AiCR-84} used the \emph{nested triangles graph} to show that rectangular grids that are universal must have size at least  $n/3 \times n/3$, and that grids that are universal for drawings with a fixed choice of planar embedding and outer face must have size at least $2n/3\times 2n/3$. Thus, if we wish to find subquadratic universal point sets we must consider sets not forming a grid. However, the known lower bounds that do not make this grid assumption are considerably weaker. In 1988 de Fraysseix, Pach and Pollack proved the first nontrivial lower bounds of $n + \Omega(\sqrt{n})$ for a general universal point set~\cite{FraPacPol-IJC-1990}. This was later improved to $1.098n-o(n)$ by Chrobak and Payne~\cite{ChrPay-IPL-1995}. Finally, Kurowski improved the lower bound to $1.235n$~\cite{Kur-IPL-04}, which is still the best  lower bound known.\footnote{The validity of this result was questioned by Mondal~\cite{Mon-MS-12}, but later confirmed.}

With such a large gap between these lower bounds and Brandenburg's upper bound, obtaining tighter bounds remains an important open problem in graph drawing~\cite{TOPP-45}.

Universal point sets have also been considered for subclasses of planar graphs. For instance, every set of $n$ points in general position (no three collinear) is universal for the $n$-vertex outerplanar graphs~\cite{GriMohPol-AMM-1991}. Simply-nested planar graphs (graphs that can be decomposed into nested induced cycles) have universal point sets of size $O\!\left(n (\log n / \log\log n)^2\right)$~\cite{AngDibKau-GD-2012}, and planar 3-trees have universal point sets of size $O(n^{3/2}\log n)$ \cite{FulTot-WADS-2013}. Based in part on the results in this paper, the graphs of simple line and pseudoline arrangements have been shown to have universal point sets of size $O(n\log n)$~\cite{Epp-GD-13}.

In this paper we provide a new upper bound on universal point sets for general planar graphs, and improved bounds for certain restricted classes of planar graphs. We approach these problems via a novel connection to a different field of study than graph drawing, the study of patterns in permutations.\footnote{A different connection between permutation patterns and graph drawing is being pursued independently by Bereg, Holroyd, Nachmanson, and Pupyrev, in connection with bend minimization in bundles of edges that realize specified permutations~\cite{BerHolNac-12}.} A permutation $\sigma$ is said to \emph{contain} the pattern $\pi$ (also a permutation) if $\sigma$ has a (not necessarily contiguous) subsequence with the same length as $\pi$, whose elements are in the same relative order with respect to each other as the corresponding elements of $\pi$. The permutations that contain none of the patterns in a given set $F$ of forbidden patterns are said to be \emph{$F$-avoiding}; we define $S_n(F)$ to be the set of length-$n$ permutations that avoid $F$. Researchers in permutation patterns have defined a \emph{superpattern} to be a permutation that contains all length-$n$ permutations among its patterns, and have studied bounds on the lengths of these patterns~\cite{Arr-EJC-99,ErkErkSva-AC-2007}, culminating in a proof by Miller that there exist superpatterns of length $n^2 / 2 + \Theta(n)$ \cite{Mil-JCTSA-2009}. We generalize this concept to an $S_n(F)$-superpattern, a permutation that contains all possible patterns in $S_n(F)$; we prove that for certain sets $F$, the $S_n(F)$-superpatterns are much shorter than Miller's bound.

As we show, the existence of small $S_n(213)$-superpatterns leads directly to small universal point sets for arbitrary planar graphs. In the same way, the existence of small $S_n(F)$-super\-patterns for forbidden pattern sets $F$ that contain $213$ leads to small universal point sets for subclasses of the planar graphs. Our method constructs a universal set $U$ that has one point for each element of the superpattern $\sigma$. It uses two different traversals of a depth-first-search tree of a canonically oriented planar graph $G$ to derive a permutation $\cperm(G)$ from $G$, and it uses the universality of $\sigma$ to find $\cperm(G)$ as a pattern in $\sigma$. Then, the positions of the elements of this pattern in $\sigma$ determine the assignment of the corresponding vertices of $G$ to points in $U$, and we prove that this assignment gives a planar drawing of $G$.

Specifically our contributions include proving the existence of:
\begin{itemize}
\item superpatterns for $213$-avoiding permutations of size $n^2/4 + \Theta(n)$;
\item universal point sets for planar graphs of size $n^2 / 4 - \Theta(n)$;
\item superpatterns for every proper subclass of the $213$-avoiding permutations of size $O(n\log^{O(1)}n)$;
\item universal point sets for graphs of bounded pathwidth of size $O(n\log^{O(1)}n)$; and
\item universal point sets for simply-nested planar graphs of size $O(n\log n)$.
\end{itemize}
In addition, we prove that every superpattern for $\{213, 132\}$-avoiding permutations has length $\Omega(n\log n)$, which in turn implies that every superpattern for $213$-avoiding permutations has length $\Omega(n\log n)$. It was known that $S_n$-superpatterns must have quadratic length---otherwise they would have too few length-$n$ subsequences to cover all $n!$ permutations~\cite{Arr-EJC-99}---but such counting arguments cannot provide nonlinear bounds for $S_n(F)$-superpatterns due to the now-proven Stanley--Wilf conjecture that $S_n(F)$ grows singly exponentially~\cite{MarTar-JCTA-04}. Instead, our proof finds an explicit set of $\{213, 132\}$-avoiding permutations whose copies within a superpattern cannot share many elements.

A similar but simpler reduction uses $S_n$-superpatterns to construct universal point sets for \emph{dominance drawings} of transitively reduced $st$-planar graphs.
In subsequent work \cite{BanDevEpp-ANALCO-14}, we study dominance drawings based on superpatterns for $321$-avoiding permutations and their subclasses, and we relate these subclasses to natural classes of $st$-planar graphs and nonplanar Hasse diagrams of width-2 partial orders.

\section{Preliminaries}

\subsection{Permutation patterns}
Let $S_n$ denote the set of all \emph{permutations} of the numbers from $1$ to $n$. We will normally specify a permutation as a sequence of numbers: for instance, the six permutations in $S_3$ are $123$, $132$, $213$, $231$, $312$, and $321$. If $\pi$ is a permutation, then we write $\pi_i$ for the element in the $i$th position of $\pi$,
and $|\pi|$ for the number of elements in $\pi$. The \emph{inverse} of a permutation $\pi$ is a permutation $\pi^{-1}$ such that $\pi_i = j$ if and only if $(\pi^{-1})_j = i$ for all $1 \leq i,j \leq |\pi|$.

\begin{definition}
A permutation $\pi$ is a \emph{pattern} of a permutation $\sigma$ of length $n$ if there exists a sequence of integers $1\le \ell_1 < \ell_2 < \cdots < \ell_{|\pi|}\le n$ such that $\pi_i<\pi_j$ if and only if $\sigma_{\ell_i} < \sigma_{\ell_j}$ for every $1 \le i, j \le |\pi|$.
In other words, $\pi$ is a pattern of $\sigma$ if $\pi$ is order-isomorphic to a subsequence of $\sigma$. We say that a permutation $\sigma$ \emph{avoids} a permutation $\phi$ if $\sigma$ does not contain $\phi$ as a pattern.
\end{definition}

\begin{definition}
A \emph{permutation class} is a set of permutations with the property that all patterns of all permutations in the class also belong to the class.
\end{definition}

Every permutation class may be defined by a set of \emph{forbidden patterns}, the minimal permutations that do not belong to the class; however, this set might not be finite.

\begin{definition}
$S_n(\phi_1, \ldots, \phi_k)$ denotes the set of all length-$n$ permutations that avoid all of the (forbidden) patterns $\phi_1, \ldots, \phi_k$.
\end{definition}

\begin{definition}
A \emph{$P$-superpattern}, for a set of permutations $P \subseteq S_n$, is a permutation $\sigma$ with the property that every $\pi \in P$ is a pattern of $\sigma$.
\end{definition}

One of the most important permutation classes in the study of permutation patterns is the class of \emph{stack-sortable permutations}~\cite{Rot-DM-81}, the permutations that avoid the pattern $231$. Knuth's discovery that these are exactly the permutations that can be sorted using a single stack~\cite{Knu-TAoCP-68} kicked off the study of permutation patterns. The $213$-avoiding permutations that form the focus of our research are related to the $231$-avoiding permutations by a simple transformation, the replacement of each value $i$ in a permutation by the value $n+1-i$, that does not affect the existence or size of superpatterns.

\subsection{Chessboard representation}

\begin{definition}
A \emph{run} of a permutation is a contiguous monotone subsequence of the permutation. A \emph{block} of a permutation is a set of consecutive integers that appear contiguously (but not necessarily in order) in~$\pi$. For instance, $135$ is a run in $21354$ and $\{3,4,5\}$ is a block in $14352$.
\end{definition}

\begin{definition}
Given a permutation $\pi$, a \emph{column} of $\pi$ is a maximal ascending run of~$\pi$ and a \emph{row} of~$\pi$ is a maximal ascending run in $\pi^{-1}$.
\end{definition}

A slightly different definition of rows and columns was used by Miller~\cite{Mil-JCTSA-2009}. For our definition, the intersection of a row and column is a block that could contain more than one element, whereas in Miller's definition a row and column necessarily intersect in at most one element.

\begin{definition}
The \emph{chessboard representation} of a permutation $\pi$ is an $r \times c$ matrix
\[M=\chessboard(\pi),\]
where $r$ is number of rows in $\pi$ and $c$ is the number of columns in $\pi$, such that $M(i,j)$ is the number of points in the intersection of the $i^\text{th}$ column and the $j^\text{th}$ row of $\pi$.
\end{definition}

An example of a chessboard representation can be seen in Figure~\ref{fig:chess-examp2}.  To recover a permutation from its chessboard representation, start with the lowest row and work upwards assigning an ascending subsequence of values to the squares of each row in left to right order within each row.  If a square has label $i$, allocate $i$ values for it.  Then, after this assignment has been made, traverse each column in left-to-right order, within each column listing in ascending order the values assigned to each square of the column. The sequence of values listed by this column traversal is the desired permutation.
\begin{figure}[ht]
\centering
\includegraphics[width=0.9\textwidth]{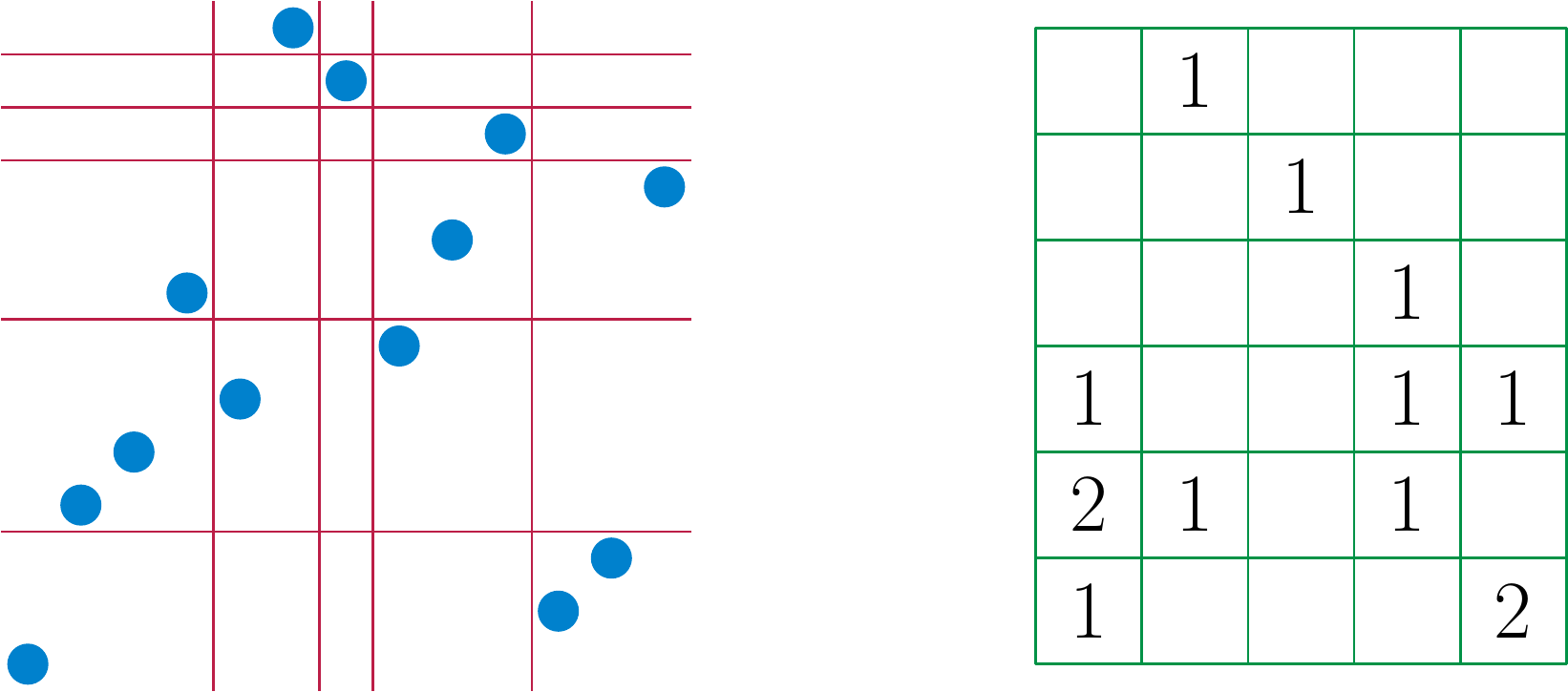}
\caption{The permutation $\pi = 1\ 4\ 5\ 8\ 6\ 13\ 12\ 7\ 9\ 11\ 2\ 3\ 10$ represented by its scatterplot (the points $(i,\pi_i)$) with lines separating its rows and columns (left), and by $\chessboard(\pi)$ (right).}
\label{fig:chess-examp2}
\end{figure}

\subsection{Subsequence majorization}
\label{sec:subsequence-majorization}

Several of our results use the fact that every sequence of positive integers can be majorized (termwise dominated) by a subsequence of another sequence with slowly growing sums,
as we detail in this section.

\begin{definition}
Let $\xi$ be the sequence of numbers $\xi_i$ ($i=1,2,3,\dots)$ given by the expression $\xi_i=i\oplus(i-1)$, where $\oplus$ denotes bitwise binary exclusive or.
\end{definition}

This gives the sequence
$$1,3,1,7,1,3,1,15,1,3,1,7,1,3,1,31,\dots$$
in which each value is one less than a power of two. This sequence has a recursive doubling construction: for any power of two, say $q$, the first $2q-1$ values of $\xi$
consist of two repetitions of the first $q-1$ values, separated by the value $2q-1$.
We confirm a conjecture of Klaus Brockhaus from 2003\footnote{See sequence A080277 in the Online Encyclopedia of Integer Sequences, \url{http://oeis.org/A080277}} by proving that,
for all $n$, the sum of the first $n$ values of $\xi$ is asymptotic to $n\log_2 n$.

\begin{lemma}
\label{lem:brockhaus}
Let $\zeta_n=\sum_{i=1}^{n}\xi_n$.
Then $n\log_2 n - 2n< \zeta_n\le n\log_2 n+n$.
\end{lemma}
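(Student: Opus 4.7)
The plan is to first obtain a closed form for $\xi_i$ in terms of the 2-adic valuation. Writing $i = m \cdot 2^k$ with $m$ odd (so $k = \nu_2(i)$), the binary representation of $i$ ends in exactly $k$ zeros while $i-1$ ends in exactly $k$ ones with the bit at position $k$ flipped. Hence $i \oplus (i-1)$ has ones in positions $0, 1, \dots, k$ and zeros elsewhere, giving
\[
\xi_i = 2^{\nu_2(i)+1} - 1 = \sum_{k=0}^{\nu_2(i)} 2^k.
\]
This is consistent with the recursive doubling structure already described: the entries one less than $2^{k+1}$ occur exactly at the positions $i$ whose 2-adic valuation is $k$.

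Next I would compute $\zeta_n$ by interchanging the order of summation. Using the identity above,
\[
\zeta_n = \sum_{i=1}^{n} \sum_{k \ge 0} 2^k \bigl[\nu_2(i) \ge k\bigr] = \sum_{k \ge 0} 2^k \cdot \#\{1 \le i \le n : 2^k \mid i\} = \sum_{k=0}^{\infty} 2^k \left\lfloor \frac{n}{2^k} \right\rfloor.
\]
This is the key identity, and everything after it is elementary arithmetic. Note only the terms with $k \le K := \lfloor \log_2 n \rfloor$ are nonzero, and for each such $k$ the term satisfies $n - 2^k < 2^k \lfloor n/2^k\rfloor \le n$.

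Finally I would establish the two bounds. For the upper bound, $\zeta_n \le (K+1)n \le (\log_2 n + 1) n = n \log_2 n + n$. For the lower bound,
\[
\zeta_n > \sum_{k=0}^{K}(n - 2^k) = (K+1) n - (2^{K+1} - 1) > n \log_2 n - 2n,
\]
where I use $K+1 > \log_2 n$ and $2^{K+1} \le 2n$. The main obstacle is simply tightening the floor inequalities to land exactly on the stated constants $-2$ and $+1$; once the identification with $\nu_2$ and the double-counting reformulation of $\zeta_n$ are in hand, the bounds drop out. No step appears to require any essentially harder idea, and the closed form $\zeta_{2^m-1} = (m-1)2^m + 1$ obtained from the recursion $\zeta_{2^m-1} = 2\zeta_{2^{m-1}-1} + (2^m - 1)$ can serve as an optional sanity check on the leading asymptotics.
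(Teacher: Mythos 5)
Your proof is correct, and it takes a genuinely different route from the paper's. The paper works from the recursive doubling structure of $\xi$: it first evaluates $\zeta_n$ exactly at powers of two via the recursion $\zeta_{2^k}=2\zeta_{2^{k-1}}+2^k$, then splits a general $n$ along its binary digits $b_i$ to get the identity $\zeta_n=\sum_i 2^i(i+1)b_i$, and bounds that sum from above and below. You instead derive the pointwise closed form $\xi_i=2^{\nu_2(i)+1}-1$ directly from the definition $\xi_i=i\oplus(i-1)$, then interchange summation to arrive at the cleaner identity $\zeta_n=\sum_{k\ge 0}2^k\lfloor n/2^k\rfloor$, from which the bounds drop out term by term using $n-2^k<2^k\lfloor n/2^k\rfloor\le n$. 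Your approach has the advantage of being fully self-contained — it proves the valuation closed form from the XOR definition rather than invoking the recursive structure, which the paper asserts but does not explicitly verify — and the double-counting reformulation is a standard, transparent identity. The paper's approach, on the other hand, makes the exact value at powers of two visible, which connects more directly to the recursion actually used in the later superpattern constructions. Both yield the stated constants $-2n$ and $+n$ without difficulty; your final inequality chain (using $K+1>\log_2 n$ and $2^{K+1}\le 2n$, plus the dropped $+1$) is verified to land on $n\log_2 n-2n<\zeta_n\le n\log_2 n+n$ as required.
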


\smallskip\noindent{\bf Proof:}
It follows from the recursive construction of sequence $\xi$ that, when $n$ is a power of two,
\begin{align*}
\zeta_n & = 2 \zeta_{n/2} - (n - 1) + (2n - 1) = 2 \zeta_{n/2} + n = n (\log_2 n + 1).
\end{align*}
More generally, let the binary representation of $n$ be $n=\sum_{i=0}^k b_i2^i$ for $b_i\in\{0,1\}$, where $k=\lfloor\log_2 n\rfloor$. Then combining the evaluation of $\zeta_n$ at powers of two with the recursive construction of sequence $\xi$ gives a formula for $\zeta_n$ in terms of the binary representation of~$n$,
\begin{equation*}
\zeta_n = \sum_{i=0}^k \zeta_{2^i} b_i = \sum_{i=0}^k 2^i (i+1) b_i,
\end{equation*}
from which it follows that
\begin{equation*}
\zeta_n\le\sum_{i=0}^k (\log_2 n + 1) b_i 2^i = n\log_2 n + n.
\end{equation*}
In the other direction,
\begin{align*}
\zeta_n &=\sum_{i=0}^k (i+1)b_i2^i =\sum_{i=0}^k \left((k+1)b_i-(k-i)b_i\right)2^i \\
&\ge \sum_{i=0}^k \left((k+1)b_i-(k-i)\right)2^i = n (\lfloor\log_2 n\rfloor + 1) - 2^k\sum_{i=0}^k (k-i)2^{i-k}\\
&> n\log_2 n - n\sum_{j=0}^\infty\frac{j}{2^j}= n\log_2 n - 2n.\makebox[0pt]{\hspace{28.5em}$\Box$}
\end{align*}

\begin{lemma}
\label{lem:sawtooth}
Let the finite sequence $\alpha_1,\alpha_2,\dots \alpha_k$ of positive integers have sum $n$. Then there is a subsequence $\beta_1,\beta_2,\dots \beta_k$ of the first $n$ terms of $\xi$ such that, for all $i$, $\alpha_i\le \beta_i$.
\end{lemma}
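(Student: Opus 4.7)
The plan is to prove this by strong induction on $n=\sum\alpha_i$, exploiting the recursive doubling structure of $\xi$ described just before Lemma~\ref{lem:brockhaus}: for any power of two $q$, the first $2q-1$ terms of $\xi$ consist of the first $q-1$ terms of $\xi$, followed by the single entry $\xi_q=2q-1$, followed by the first $q-1$ terms of $\xi$ again.

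The base case $n=1$ forces $k=1$ and $\alpha_1=1=\xi_1$, so the subsequence is trivial. For the inductive step, let $q=2^{\lfloor\log_2 n\rfloor}$ be the largest power of two with $q\le n$; then $n\le 2q-1$ and $\xi_q=2q-1$. Scanning the prefix sums $\alpha_1+\cdots+\alpha_j$, let $m$ be the smallest index for which this sum exceeds $q-1$; such an $m$ exists because the total sum is $n\ge q$. By the choice of $m$ we have $\alpha_1+\cdots+\alpha_{m-1}\le q-1$ and $\alpha_1+\cdots+\alpha_m\ge q$, so $\alpha_m\le n\le 2q-1=\xi_q$, which lets us place $\alpha_m$ at position $q$.

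The remaining work splits into two smaller instances. The prefix $\alpha_1,\dots,\alpha_{m-1}$ has sum at most $q-1<n$, and by the inductive hypothesis it can be majorized by a subsequence of the first $q-1$ terms of $\xi$, which sit in positions $1,\dots,q-1$. The suffix $\alpha_{m+1},\dots,\alpha_k$ has sum at most $n-q<n$, and by the inductive hypothesis it can be majorized by a subsequence of the first $n-q$ terms of $\xi$. Here is where the recursive doubling identity does the real work: because $n-q\le q-1$, positions $q+1,\dots,n$ of $\xi$ are a verbatim copy of positions $1,\dots,n-q$, so the suffix's subsequence can be translated rightward past position $q$. Concatenating the three pieces yields the required subsequence of the first $n$ terms of $\xi$ termwise dominating $\alpha_1,\dots,\alpha_k$.

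The only conceptual step is choosing $m$ so that the two residual subproblems both stay within the budget dictated by the split of $\{1,\dots,2q-1\}$ around position $q$; once that is done, the inductive hypothesis, together with the copy-paste structure of $\xi$ on either side of each power of two, finishes the argument. Edge cases $m=1$ and $m=k$ (empty prefix or empty suffix) are immediate. I expect the only real hazard to be sign/off-by-one bookkeeping around the boundary $n\le 2q-1$, which is why it is worth recording the inequalities $q\le n\le 2q-1$ and $\alpha_m\le n\le \xi_q$ explicitly at the start.
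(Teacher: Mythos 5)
Your proposal is correct and takes essentially the same route as the paper's own proof: induction on $n$, placing $\alpha_m$ at the central entry $\xi_q = 2q-1$ where $q$ is the largest power of two at most $n$, and using the recursive doubling identity to map the prefix into positions $1,\dots,q-1$ and the suffix into positions $q+1,\dots,n$. The only difference is that you spell out the boundary arithmetic ($q\le n\le 2q-1$, $n-q\le q-1$) explicitly, which the paper leaves implicit.
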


\begin{proof}
We use induction on $n$.
Let $q$ be the largest power of two that is less than or equal to $n$; then
$\xi_q=2q-1\ge n\ge\max_i\alpha_i$.
Let $i$ be the smallest index for which the first $i$ values of $\alpha$ sum to at least $q$.
We choose $\beta_i=\xi_q$ and continue recursively for the two subsequences of $\alpha$ on either side of $\alpha_i$ and the two subsequences of $\xi$ on either side of $\xi_q$.
The sum of each of the two subsequences of $\alpha$ is strictly less than $q$ and each of the two subsequences of $\xi$ coincides with $\xi$ itself for $q-1$ terms, so by the induction hypothesis each of these two subproblems has a solution that can be combined with the choice $\beta_i=q$ to form a solution for the given input.
\end{proof}

\section{From superpatterns to universal point sets}
\label{sec:superpattern-to-universal}

In this section, we show how 213-avoiding superpatterns can be turned into universal point sets for planar graphs. Let $G$ be a planar graph. We assume $G$ is {\em maximal planar}, meaning that no additional edges can be added to $G$ without breaking its planarity; this is without loss of generality, because a point set that is universal for maximal planar graphs is universal for all planar graphs. Additionally, we assume that $G$ has a fixed plane embedding; for maximal planar graphs, such an embedding is determined by the choice of which of the triangles of $G$ is to be the outer face, and by the orientation of the outer triangle. With this choice fixed, we say that $G$ is a \emph{maximal plane graph}.

\subsection{Canonical representation}

As in the grid drawing method of de Fraysseix, Pach and Pollack~\cite{FraPacPol-IJC-1990}, we use \emph{canonical representations} of planar graphs:

\begin{definition}
A \emph{canonical representation} of a maximal plane graph $G$ is a sequence $v_1, v_2, \dots v_n$ of the vertices of $G$ such that $G$ can be embedded with the following three properties:
\begin{itemize}
\item $v_1v_2v_n$ is the outer triangle of the embedding, and is embedded in the clockwise order $v_1$, $v_n$, $v_2$.
\item For $3 \le k \le n$, the subgraph $G_k$ induced by $\{ v_1, \dots, v_k \}$ is $2$-connected and the boundary $C_k$ of its induced embedding is a cycle containing the edge $v_1 v_2$.
\item For $4 \le k \le n$, $v_k$ is on the outer face of $G_{k-1}$ and its earlier neighbors induce a path in $C_{k-1} \setminus v_1 v_2$ with at least two elements.
\end{itemize}

\end{definition}
As de Fraysseix, Pach and Pollack proved, every embedded maximal planar graph has at least one canonical representation. For the rest of this section, we will assume that the vertices $v_i$ of the given maximal plane graph $G$ are numbered according to such a representation.

\begin{definition}
Given a canonically represented graph $G$, let
\[
\parent(v_2) = \parent(v_3) = v_1,
\]
and for $i \ge 4$, let $\parent(v_i)$ be the neighbor of $v_i$ closest to $v_1$ on $C_{i-1} \setminus v_1v_2$.
By following a path of edges from vertices to their parents, every vertex can reach $v_1$, so these edges form a tree $\ctree(G)$ having $v_1$ as its root.
\end{definition}

The same tree $\ctree(G)$ may also be obtained by orienting each edge of $G$ from lower to higher numbered vertices, and then performing a depth-first search of the resulting oriented graph that visits the children of each vertex in clockwise order, starting from~$v_1$. Although we do not use this fact, $\ctree(G)$ is also part of a Schnyder decomposition of $G$, the other two trees of which are a second tree rooted at~$v_2$ that connects each vertex to its most counterclockwise earlier neighbor and a third tree rooted at~$v_n$ that connects each vertex to the later vertex whose addition removes it from~$C_k$.

\begin{definition}
For each vertex $v_i$ of $G$, let $\pre(v_i)$ be the position of $v_i$ in a pre-order traversal of $\ctree(G)$ that visits the children of each node in clockwise order, and let $\rpost(v_i)$ be the position of~$v_i$ in a sequence of the nodes of $\ctree(G)$ formed by reversing a post-order clockwise traversal.
\end{definition}
See Figure~\ref{fig:canon-perm} for an example. Note that $\rpost(v_i)$ is also the position of $v_i$ in a pre-order traversal in counter-clockwise order.

These two numbers may be used to determine the ancestor-descendant relationships in $\ctree(G)$: a node $v_i$ is an ancestor of a node $v_j$ if and only if both $\pre(v_i)<\pre(v_j)$ and $\rpost(v_i)<\rpost(v_j)$~\cite{Die-STOC-82}.

\begin{figure}[ht]
  \begin{minipage}[t]{0.47\linewidth}
    \centering
    \includegraphics[width=\textwidth]{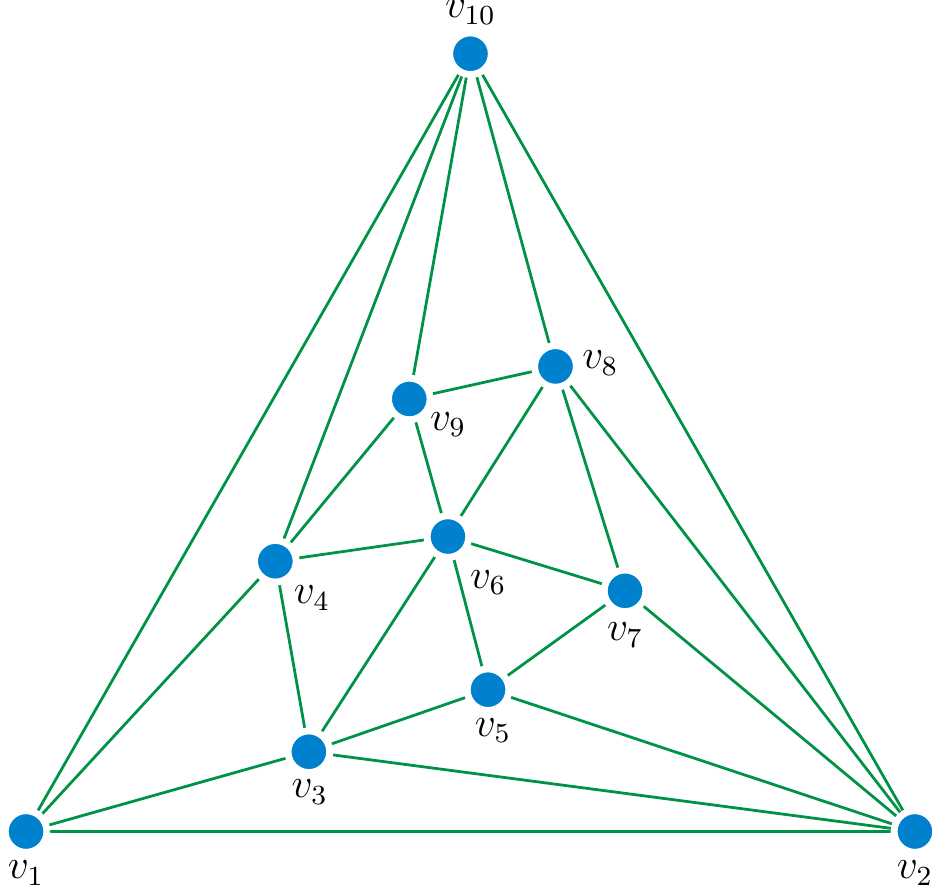}
  \end{minipage}\hfill
  \begin{minipage}[t]{0.47\linewidth}
    \centering
    \includegraphics[width=\textwidth]{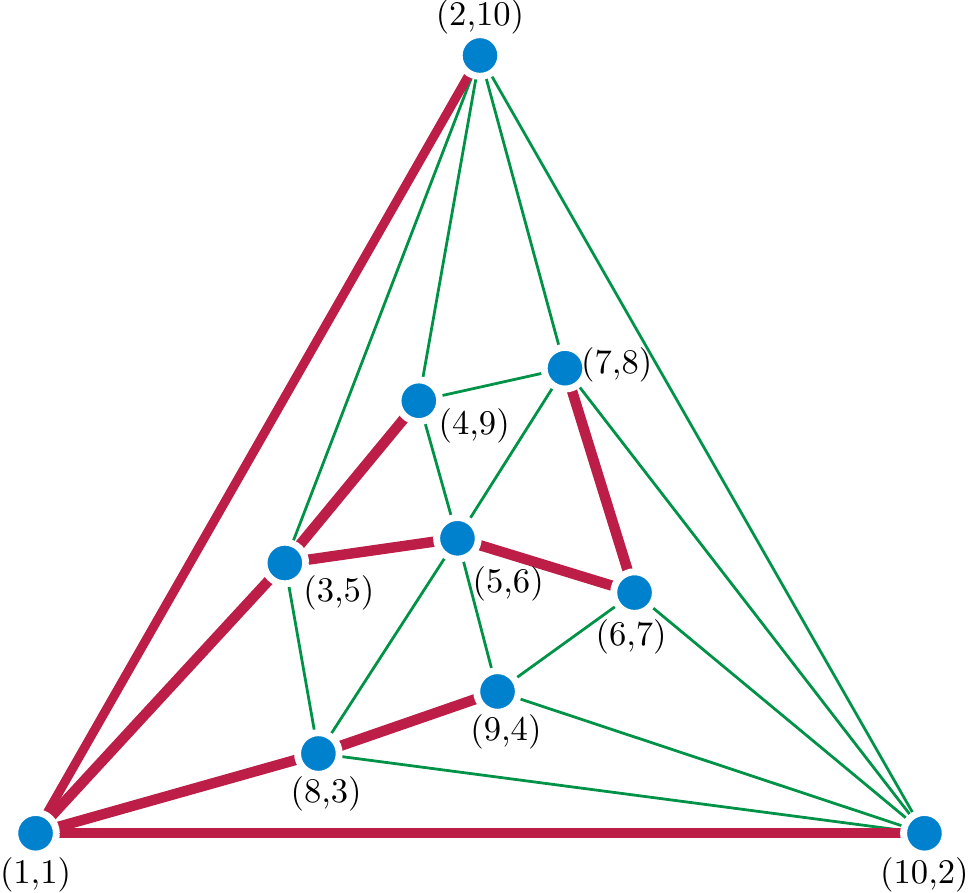}
  \end{minipage}
\caption{\label{fig:canon-perm} Left: a maximally planar graph $G$ with canonically ordered vertices. Right: $\ctree(G)$ is shown in red, and the label around each vertex $v$ indicates $\pre(v)$ and $\rpost(v)$.}
\end{figure}

\begin{lemma}\label{lem:C_k_sorted}
For every $3 \le k \le n$, the clockwise ordering of the vertices along the cycle $C_k$ is in sorted order by the values of $\pre(v_i)$.
\end{lemma}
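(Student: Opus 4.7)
The plan is to induct on $k$. The base case $k = 3$ is handled by direct inspection: the canonical representation forces $v_3$ to lie on the opposite side of edge $v_1 v_2$ from $v_n$, so $C_3$ reads $v_1, v_3, v_2$ in clockwise order; since $v_3$ is a child of $v_1$ and the clockwise DFS convention places children of $v_1$ in pre-order starting from the $v_n$ side, we obtain $\pre(v_1) < \pre(v_3) < \pre(v_2)$.

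For the inductive step I would exploit that adding $v_k$ alters $C_{k-1}$ only locally: the canonical representation tells us the earlier neighbors of $v_k$ are a contiguous subpath $v_{i_a}, v_{i_{a+1}}, \ldots, v_{i_b}$ of $C_{k-1}$ with $\parent(v_k) = v_{i_a}$, and $C_k$ is obtained from $C_{k-1}$ by replacing this subpath with the three-vertex subpath $v_{i_a}, v_k, v_{i_b}$. Since the induction already orders the unchanged portion of $C_{k-1}$ correctly by $\pre$, it suffices to prove $\pre(v_{i_a}) < \pre(v_k) < \pre(v_{i_b})$. The lower bound is immediate from $v_{i_a}$ being an ancestor of $v_k$.

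For the upper bound I would split on whether $v_{i_b}$ lies in the subtree $T_{v_{i_a}}$ of $\ctree(G)$. If $v_{i_b} \notin T_{v_{i_a}}$, then because the pre-order values of $T_{v_{i_a}}$ form a contiguous interval starting at $\pre(v_{i_a})$, and $\pre(v_{i_a}) < \pre(v_{i_b})$ by induction, $\pre(v_{i_b})$ must exceed the entire interval and in particular exceed $\pre(v_k)$. If $v_{i_b} \in T_{v_{i_a}}$, let $c$ be the child of $v_{i_a}$ on the tree path to $v_{i_b}$ (possibly $c = v_{i_b}$); it is enough to show that $v_k$ precedes $c$ in the clockwise order of $v_{i_a}$'s children, because then $\pre(v_k) < \pre(c) \le \pre(v_{i_b})$. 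This is the planarity step: the edge $v_{i_a} v_k$ is inserted on the $G_{k-1}$ outer face side of $v_{i_a}$, immediately adjacent to the boundary edge $v_{i_a} v_{i_{a+1}}$, while the child $c$ of $v_{i_a}$ whose subtree reaches the further clockwise boundary vertex $v_{i_b}$ must occupy a cyclic sector at $v_{i_a}$ lying between the parent-edge direction and the new $v_k$ edge, so the clockwise-from-parent traversal of children meets $v_k$ first.

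The main obstacle is the last planarity step. The careful bookkeeping amounts to showing that, in the embedding, repeated outer-face additions at $v_{i_a}$ insert each new child edge on the side of the current outer-face boundary away from the parent direction, so later-added children like $v_k$ appear earlier in the clockwise-from-parent convention used to define $\pre$; combined with the fact that the subtree of $c$ must lie in the planar region bounded by the tree path from $v_{i_a}$ to $v_{i_b}$ and the portion of $C_{k-1}$ between them, this pins down the required cyclic inequality. Everything else in the argument is a routine combination of the subtree-interval property of pre-order with the inductive hypothesis.
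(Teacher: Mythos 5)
Your proof follows the same inductive route as the paper, which asserts in a single sentence that $v_k$ has a larger $\pre$ value than its parent $v_{i_a}$ and a smaller $\pre$ value than its other earlier neighbors; you supply the justification for the nontrivial half, $\pre(v_k)<\pre(v_{i_b})$, via the pre-order subtree-interval property together with the cyclic-order/planarity argument about where the new edge $v_{i_a}v_k$ lands among the children of $v_{i_a}$, and that argument is sound. One small slip in the base case: $v_3$ lies on the \emph{same} side of edge $v_1v_2$ as $v_n$, not the opposite side, and it is precisely this (together with $v_1,v_n,v_2$ being clockwise) that makes $v_1,v_3,v_2$ clockwise, so the conclusion you draw is correct despite the misstatement.
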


\begin{proof}
The statement on the ordering of the vertices of $C_k$ follows by induction, from the fact that $v_k$ has a larger value of $\pre(\cdot)$ than its earliest incoming neighbor (its parent in $\ctree(G)$) and a smaller value than all of its other incoming neighbors.
\end{proof}

\begin{lemma}\label{lem:recanonize}
Let $G$ be a maximal plane graph together with a canonical representation $v_1, \dots, v_n$, and renumber the vertices of~$G$ in order by their values of $\rpost(v_i)$. Then the result is again a canonical representation of the same embedding of $G$.
\end{lemma}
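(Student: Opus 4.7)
The plan is to verify the three defining properties of a canonical representation for the renumbering $u_k := \rpost^{-1}(k)$.

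First I would identify the outer triangle in the new numbering. Since $v_1$ is the root of $\ctree(G)$, $\rpost(v_1)=1$ and so $u_1=v_1$. I claim that $v_2$ and $v_n$ are both leaves of $\ctree(G)$: $v_n$ trivially, and $v_2$ because any putative child $v_i$ of $v_2$ in $\ctree(G)$ would, by the third property of the original canonical representation, have at least two consecutive earlier neighbors on $C_{i-1}\setminus v_1v_2$, whereas $v_2$ is the endpoint of this path farthest from $v_1$, so some other earlier neighbor of $v_i$ would be strictly closer to $v_1$, contradicting $\parent(v_i)=v_2$. I would also show that in the clockwise cyclic embedding around $v_1$, the ctree-children of $v_1$ begin with $v_n$ and end with $v_2$, since the outer-face gap at $v_1$ occupies the region between edges $v_1v_n$ and $v_1v_2$ and all other neighbors of $v_1$ are sandwiched strictly between these two in clockwise order. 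Consequently the counterclockwise preorder visits $v_2$ immediately after $v_1$ (as the first CCW child, and a leaf) and $v_n$ last (as the last CCW child, also a leaf), giving $\rpost(v_2)=2$ and $\rpost(v_n)=n$. Hence $(u_1, u_n, u_2)=(v_1, v_n, v_2)$, preserving the outer triangle and its clockwise orientation.

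Second, I would verify 2-connectivity and the shelling property by induction on $k$, writing $G_k^u$ for the subgraph induced by $\{u_1,\dots,u_k\}$ and $C_k^u$ for its boundary cycle. The key intermediate claim is that the $\rpost$ renumbering is edge-order-preserving: for every edge $v_iv_j$ of $G$ with $i<j$ we also have $\rpost(v_i)<\rpost(v_j)$. Once this is established, the set of earlier neighbors of $u_{k+1}$ in the new numbering coincides with its earlier neighbor set in the original numbering, so we can transport the contiguous-arc guarantee from the third property of the original canonical representation. What remains is to verify that this arc sits on $C_k^u\setminus v_1v_2$, for which I would give an explicit description of $C_k^u$ in terms of the current frontier of the counterclockwise depth-first search of $\ctree(G)$, and use the clockwise embedding of ctree-children around each parent to show that $u_{k+1}$'s earlier neighbors occupy consecutive positions on this frontier. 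Property 2 then follows in the usual way from the shelling step.

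The main obstacle is the edge-order-preservation claim for the non-ctree edges of $G$ (which are edges of the other two trees in the Schnyder wood containing $\ctree(G)$), combined with the inductive description of $C_k^u$. For the non-ctree edges, one must show that the rightmost earlier neighbor of each vertex $v_j$ (its parent in the second Schnyder tree) and each earlier vertex covered by $v_j$ (its children in the third Schnyder tree) all appear before $v_j$ in the counterclockwise preorder; this requires a careful analysis of how the three Schnyder trees interact with the clockwise embedding, interleaved with the inductive evolution of $C_k^u$ as the counterclockwise depth-first search proceeds.
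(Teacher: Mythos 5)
Your plan correctly isolates the crux of the argument: the claim that the renumbering is \emph{edge-order-preserving} --- for every edge $v_iv_j$ of $G$ with $i<j$ we also have $\rpost(v_i)<\rpost(v_j)$ --- and you are right that once this is in hand, the ``same earlier-neighbor set'' observation transports the shelling structure. The difficulty is that you treat this claim as ``the main obstacle'' and propose to attack it by a careful case analysis of how the three Schnyder trees interact with the clockwise embedding, interleaved with an inductive description of $C_k^u$. You never actually carry that analysis out, so as written the proposal leaves the central step unproved.

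What you are missing is that the paper has already set up the machinery to dispatch this claim in one line. Immediately after defining $\ctree(G)$ via parents, the paper observes that the same tree is obtained by orienting every edge of $G$ from lower to higher canonical index and running a depth-first search from $v_1$. Once $\ctree(G)$ is recognized as a DFS tree of that DAG, the standard fact (cited to Tarjan) that reverse postorder of a DFS tree yields a topological ordering of the DAG gives edge-order preservation for \emph{all} edges --- tree edges and non-tree edges alike --- with no Schnyder-tree case analysis whatsoever. There is no need to distinguish the ``non-ctree edges'' or to track the frontier $C_k^u$ to establish the claim. Your separate bookkeeping to pin down $u_1=v_1$, $u_2=v_2$, $u_n=v_n$ (noting $v_2$ and $v_n$ are leaves and where they fall in the CCW preorder) is sound and is a detail the paper compresses, but it does not repair the missing proof of the topological-ordering claim. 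In short: right decomposition, right key lemma, but the key lemma is asserted rather than proved, and the route you gesture at is considerably harder than the DFS/reverse-postorder fact that is already on the table.
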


\begin{proof}
The fact that $\rpost(v_i)$ gives a canonical representation comes from the fact that it is a reverse postorder traversal of a depth-first search tree. A reverse postorder traversal gives a topological ordering of every directed acyclic graph~\cite{Tar-AI-76}, from which it follows that every vertex in $G$ has the same set of earlier neighbors when ordered by $\rpost(v_i)$ as it did in the original ordering.
\end{proof}

\begin{lemma}\label{lem:midpre-interior}
Let $v_1, \dots, v_n$ be the canonical representation given by the $\rpost$ values of a maximal plane graph $G$. Suppose $\pre(v_h) < \pre(v_i) < \pre(v_j)$, $v_h$ and $v_j$ are neighbors, and $l = \max\{h, j\} > i$, then $v_i$ is not on $C_l$.
\end{lemma}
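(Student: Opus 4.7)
The plan is to assume for contradiction that $v_i$ lies on $C_l$ and then derive a contradiction using Lemma~\ref{lem:C_k_sorted} together with the canonical-representation description of how $C_l$ is built from $C_{l-1}$ when the new vertex is inserted at step $l$.

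Setup. Since $v_hv_j$ is an edge, $h\ne j$, so exactly one of $h,j$ equals $l=\max\{h,j\}$; call that vertex $v_l$, and observe that the other one (which is $v_h$ when $l=j$, and $v_j$ when $l=h$) is automatically an earlier neighbor of $v_l$. By the definition of the canonical representation, the earlier neighbors of $v_l$ form a contiguous arc on $C_{l-1}$, and by Lemma~\ref{lem:C_k_sorted} this arc appears in clockwise order as $v_\alpha,\ldots,v_\beta$, where $v_\alpha=\parent(v_l)$ has the smallest $\pre$ among these neighbors and $v_\beta$ has the largest. When $v_l$ is added, the arc is collapsed and $C_l$ contains $v_\alpha, v_l, v_\beta$ as three clockwise-consecutive vertices. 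Reapplying Lemma~\ref{lem:C_k_sorted} to $C_l$ gives $\pre(v_\alpha)<\pre(v_l)<\pre(v_\beta)$ and, crucially, forbids any vertex of $C_l$ from having $\pre$ strictly inside $(\pre(v_\alpha),\pre(v_l))$ or strictly inside $(\pre(v_l),\pre(v_\beta))$: these two intervals are the ``gaps'' spanned by the two edges $v_\alpha v_l$ and $v_lv_\beta$ on $C_l$, so by sortedness there is nothing of $C_l$ in them.

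Two symmetric cases then finish. If $l=j$, the earlier neighbor $v_h$ lies in the arc, so $\pre(v_\alpha)\le\pre(v_h)$; chaining with the hypothesis yields
\[
\pre(v_\alpha)\le\pre(v_h)<\pre(v_i)<\pre(v_j)=\pre(v_l),
\]
placing $\pre(v_i)$ in the forbidden interval $(\pre(v_\alpha),\pre(v_l))$ and contradicting $v_i\in C_l$. If instead $l=h$, then $v_j$ lies in the arc, so $\pre(v_j)\le\pre(v_\beta)$; chaining gives
\[
\pre(v_l)=\pre(v_h)<\pre(v_i)<\pre(v_j)\le\pre(v_\beta),
\]
placing $\pre(v_i)$ in the forbidden interval $(\pre(v_l),\pre(v_\beta))$ and again contradicting $v_i\in C_l$. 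Hence $v_i\notin C_l$.

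The main obstacle is really just the geometric setup: one must justify that the earlier neighbors of $v_l$ form a clockwise arc on $C_{l-1}$ whose endpoints $v_\alpha,v_\beta$ are exactly the clockwise neighbors of $v_l$ on $C_l$, and combine this with Lemma~\ref{lem:C_k_sorted} to rule out intermediate $\pre$ values. After that, the case split on whether $l=h$ or $l=j$ is mechanical and symmetric, so I do not expect any additional technical difficulty.
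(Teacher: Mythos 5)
Your proof is correct and takes essentially the same approach as the paper: both use Lemma~\ref{lem:C_k_sorted} together with the fact that the earlier neighbors of the later vertex $v_l$ form a contiguous arc on $C_{l-1}$, and both conclude that $\pre(v_i)$ falls strictly between the $\pre$ values of two consecutive vertices on $C_l$, forcing $v_i$ off $C_l$. The paper phrases this slightly more loosely (observing that $v_l$'s parent lies before $v_i$ and the hypothesized neighbor lies after $v_i$, then only explicitly treating the case $h>j$), while your version is a bit more explicit about identifying the two forbidden $\pre$-intervals and handling both cases symmetrically — but it is the same argument.
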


\begin{proof}
Suppose $h > j$. Then by Lemma~\ref{lem:C_k_sorted}, the vertices along $C_{h-1}$ and $C_h$ are sorted by their $\pre$ values. If $v_i$ is not on $C_{h-1}$, then certainly it is not on $C_h$. If $v_i$ is on $C_{h-1}$, then $v_h$ must be adjacent to a vertex on $C_{h-1}$ earlier than $v_i$ in order to maintain the sorted $\pre$ values on $C_h$. Moreover, $v_h$ is adjacent to $v_j$, which must appear on $C_{h-1}$ later than $v_i$. Therefore, $v_i$ must be in the interior of $G_h$ and not on $C_h$. The case $j < h$ is proved analogously.
\end{proof}

\begin{definition}
Let $\cperm(G)$ be the permutation in which, for each vertex $v_i$, the permutation value in position $\pre(v_i)$ is $\rpost(v_i)$. That is, $\cperm(G)$ is the permutation given by traversing $\ctree(G)$ in preorder and listing for each vertex of the traversal the number $\rpost(v_i)$.
\end{definition}

\begin{lemma}\label{lem:pi_G_213_avoiding}
For every canonically-represented maximal planar graph $G$, the permutation
$\pi = \cperm(G)$ is $213$-avoiding.
\end{lemma}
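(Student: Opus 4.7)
The plan is to prove this by contradiction, exploiting only the tree structure of $\ctree(G)$ and the general properties of depth-first-search numberings, not any specific planarity property. A $213$ pattern in $\pi = \cperm(G)$ would correspond to three vertices $v_a, v_b, v_c$ whose positions and values in $\pi$ realize the $213$ shape. Since position in $\pi$ equals $\pre(\cdot)$ and value equals $\rpost(\cdot)$, such a pattern is precisely a triple satisfying
\[\pre(v_a) < \pre(v_b) < \pre(v_c) \quad \text{and} \quad \rpost(v_b) < \rpost(v_a) < \rpost(v_c).\]

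The main tool is the ancestor characterization quoted just before the lemma: $v_i$ is an ancestor of $v_j$ in $\ctree(G)$ iff $\pre(v_i) < \pre(v_j)$ and $\rpost(v_i) < \rpost(v_j)$. I will apply this twice. Comparing the outer two vertices, $\pre(v_a) < \pre(v_c)$ and $\rpost(v_a) < \rpost(v_c)$, so $v_a$ is an ancestor of $v_c$ and hence $v_c$ lies in the subtree rooted at $v_a$. Comparing $v_a$ with $v_b$, however, $\pre(v_a) < \pre(v_b)$ while $\rpost(v_a) > \rpost(v_b)$, so $v_a$ is not an ancestor of $v_b$; and $v_b$ cannot be an ancestor of $v_a$ either, since $\pre(v_a) < \pre(v_b)$.

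The key step is then to observe that in any depth-first-search tree, if $v_a$ precedes $v_b$ in preorder but is not an ancestor of $v_b$, then the entire subtree rooted at $v_a$ is completed before $v_b$ is visited. Consequently every descendant $w$ of $v_a$ satisfies $\pre(w) < \pre(v_b)$. Applying this with $w = v_c$ gives $\pre(v_c) < \pre(v_b)$, contradicting the assumed ordering $\pre(v_b) < \pre(v_c)$. This contradiction rules out any $213$ pattern, proving the lemma.

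I do not expect any real obstacle here: the only slightly delicate point is justifying the "subtree finishes first" step, but that follows immediately from the fact that preorder traversal completes each subtree before moving on to the next sibling, together with the just-established non-ancestor status. No appeal to the canonical representation, to Lemma~\ref{lem:C_k_sorted}, or to planarity itself is needed for this particular statement — the $213$-avoidance is really a structural property of pairs (preorder, reverse-postorder) on any rooted tree.
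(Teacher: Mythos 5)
Your proof is correct and uses essentially the same approach as the paper: both arguments rest on the ancestor characterization via the (preorder, reverse postorder) pair and make no use of planarity or the canonical ordering. The paper centers the argument on the middle element (showing $u_j$ is an ancestor of $u_k$ and incomparable with $u_i$, hence $u_i$ and $u_k$ are incomparable, forcing the pattern to be $312$), whereas you center it on the first element ($v_a$ is an ancestor of $v_c$ but incomparable with $v_b$, then the ``subtree finishes first'' preorder property yields a direct contradiction); these are minor reorganizations of the same idea.
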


\begin{proof}
Let $i<j<k$ be an arbitrary triple of indices in the range from $1$ to $n$, corresponding to the vertices $u_i$, $u_j$ and $u_k$. Recall that a vertex $a$ is the ancestor of $b$ in $\ctree(G)$ if and only if $\pre(a) < \pre(b)$ and $\rpost(a) < \rpost(b)$.
If $\pi_j$ is not the smallest of these three values, then $\pi_i$, $\pi_j$, and $\pi_k$ certainly do not form a $213$ pattern. If $\pi_j$ is the smallest of these three values, then $u_i$ is not an ancestor or descendant of $u_j$, and $u_j$ is an ancestor of $u_k$. Therefore $u_i$ is also not an ancestor or descendant of $u_k$, from which it follows that $\pi_i>\pi_k$ and the pattern formed by $\pi_i$, $\pi_j$, and $\pi_k$ is $312$ rather than $213$. Since the choice of indices was arbitrary, no three indices can form a $213$ pattern and $\pi$ is $213$-avoiding.
\end{proof}

We observe that $\cperm(G)$ has some additional structure, as well: its first element is $1$, its second element is $n$, and its last element is~$2$.

\subsection{Stretching a permutation}

It is natural to represent a permutation $\sigma$ by the points with Cartesian coordinates $(i,\sigma_i)$, but for our purposes we need to stretch this representation in the vertical direction; we use a transformation closely related to one used by Bukh, Matou\v{s}ek, and Nivasch~\cite{BukMatNiv-IJM-11} for weak epsilon-nets, and by Fulek and T\'oth~\cite{FulTot-WADS-2013} for universal point sets for plane 3-trees.

\begin{definition}
Letting $q=|\sigma|$, we define
$$\stretchperm(\sigma)=\bigl\{(i,q^{\sigma_i})\mid 1\le i\le q\bigr\}.$$
\end{definition}

Let $\sigma$ be an arbitrary permutation with $q = |\sigma|$, and let $p_i$ denote the point in $\stretchperm(\sigma)$ corresponding to position $i$ in $\sigma$.

\begin{lemma}
\label{lem:slope}
Let $i$ and $j$ be two indices with $\sigma_i<\sigma_j$, and let $m$ be the absolute value of the slope of line segment $p_ip_j$. Then $q^{\sigma_j-1}\le m<q^{\sigma_j}$.
\end{lemma}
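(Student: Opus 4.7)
The plan is to write the slope directly from the definition of $\stretchperm(\sigma)$ and then bound it above and below using only two elementary observations: that $\sigma_i \le \sigma_j - 1$ (since $\sigma$ is a permutation and $\sigma_i < \sigma_j$), and that the horizontal distance $|j - i|$ is at least $1$ and strictly less than $q$.

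Concretely, I would start by noting $p_i = (i, q^{\sigma_i})$ and $p_j = (j, q^{\sigma_j})$, so
\[
m = \frac{q^{\sigma_j} - q^{\sigma_i}}{|j - i|},
\]
where the numerator is positive because $\sigma_i < \sigma_j$. For the upper bound, I would drop the negative term in the numerator and use $|j - i| \ge 1$, which gives
\[
m \le q^{\sigma_j} - q^{\sigma_i} < q^{\sigma_j},
\]
since $q^{\sigma_i} > 0$.

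For the lower bound, I would use $\sigma_i \le \sigma_j - 1$ to write $q^{\sigma_j} - q^{\sigma_i} \ge q^{\sigma_j} - q^{\sigma_j - 1} = q^{\sigma_j - 1}(q - 1)$, and then use $|j - i| \le q - 1$ (since both $i$ and $j$ lie in $\{1, \dots, q\}$) to conclude
\[
m \ge \frac{q^{\sigma_j - 1}(q - 1)}{q - 1} = q^{\sigma_j - 1}.
\]

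There is essentially no obstacle here; the proof is a routine calculation that exploits the geometric-like spacing of the $y$-coordinates $q^{\sigma_i}$, which were chosen precisely so that neighboring levels differ by a factor large enough to dominate any horizontal distance within the point set. The only thing to double-check is that the edge cases (e.g.\ $q = 1$, which is vacuous since we need two distinct indices) and the strictness of the upper bound are handled correctly, both of which follow from $q^{\sigma_i} > 0$ and $|j - i| \ge 1$.
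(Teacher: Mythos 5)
Your proof is correct and takes essentially the same approach as the paper: both bound $m = (q^{\sigma_j}-q^{\sigma_i})/|j-i|$ by controlling the numerator via $\sigma_i \le \sigma_j - 1$ and the denominator via $1 \le |j-i| \le q-1$. The paper phrases this slightly differently, by identifying the extremal configurations where the minimum and maximum slopes are attained, but the arithmetic is identical; your version spells out the inequalities a bit more explicitly, which is arguably cleaner.
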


\begin{proof}
The minimum value of $m$ is obtained when $|i-j|=q-1$ and $\sigma_i=\sigma_j-1$, for which $q^{\sigma_j-1}=m$. The maximum value of $m$ is obtained when $|i-j|=1$ and $\sigma_i=1$,
for which $m=q^{\sigma_j}-q<q^{\sigma_j}$.
\end{proof}

\begin{lemma}
\label{lem:tri-orient}
Let $i$, $j$, and $k$ be three indices with $\max\{\sigma_i,\sigma_j\}<\sigma_k$ and $i<j$. Then the clockwise ordering of the three points $p_i$, $p_j$, and $p_k$ is $p_i$, $p_k$, $p_j$.
\end{lemma}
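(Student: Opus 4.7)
The plan is to reduce the clockwise-ordering claim to a single height inequality about the line $\ell$ through $p_i$ and $p_j$. Since $p_i$ is strictly to the left of $p_j$, the clockwise cyclic order $p_i, p_k, p_j$ is equivalent to $p_k$ lying strictly above $\ell$. Let $M = q^{\max(\sigma_i, \sigma_j)}$, and let $L(x)$ denote the height of $\ell$ at abscissa $x$; by Lemma~\ref{lem:slope} the absolute slope of $\ell$ is strictly less than $M$. I will show $L(k) < q^{\sigma_k}$, which finishes the proof.

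Three cases arise. If $i < k < j$, then $L(k)$ is a convex combination of $q^{\sigma_i}$ and $q^{\sigma_j}$, so $L(k) \le M < q^{\sigma_k}$ by the hypothesis $\sigma_k > \max(\sigma_i, \sigma_j)$. If $k$ lies outside $[i,j]$ and the extrapolation heads downward (positive slope with $k<i$, or negative slope with $k>j$), then $L(k)$ is strictly less than one of $q^{\sigma_i}, q^{\sigma_j}$, hence also less than $q^{\sigma_k}$. The last case is upward extrapolation: here $L(k) = M + d \cdot |\text{slope}|$, where the extrapolation distance $d$ equals $k-j$ or $i-k$. Because $i,j,k$ are distinct indices in $\{1, \ldots, q\}$ with $i < j$, a direct check gives $d \le q-2$, so
\[
L(k) < M + (q-2)M = (q-1)M \le q \cdot q^{\max(\sigma_i, \sigma_j)} = q^{\max(\sigma_i, \sigma_j) + 1} \le q^{\sigma_k},
\]
using $\sigma_k \ge \max(\sigma_i, \sigma_j) + 1$ in the last step.

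The principal obstacle is precisely this final upward-extrapolation case. Both the slope of $\ell$ and the horizontal distance $d$ can be nearly $M$ and nearly $q$ respectively, so a priori $L(k)$ could approach $q \cdot M$. The argument squeezes through only because the \emph{strict} slope bound from Lemma~\ref{lem:slope} combines with the strict bound $d \le q-2$ to keep the chain above strict, and the resulting single factor of $q$ is exactly matched by the unit-sized exponent gap $\sigma_k \ge \max(\sigma_i, \sigma_j) + 1$. This is the whole point of choosing the base $q$ for the vertical stretching.
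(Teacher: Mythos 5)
Your proof is correct. The paper's one-line argument invokes Lemma~\ref{lem:slope} for \emph{both} segments $p_ip_j$ and $p_ip_k$: the upper bound $|s_{ij}|<q^{\max(\sigma_i,\sigma_j)}\le q^{\sigma_k-1}$ together with the lower bound $|s_{ik}|\ge q^{\sigma_k-1}$ gives $|s_{ik}|>|s_{ij}|$, and since the sign of $s_{ik}$ matches the sign of $k-i$, comparing the two slopes at the shared apex $p_i$ immediately places $p_k$ on the correct side of the line. You instead apply Lemma~\ref{lem:slope} only to $p_ip_j$ and then directly bound the height $L(k)$ of the line at abscissa $k$, which forces the extra arithmetic with the horizontal distance $d\le q-2$ in the upward-extrapolation case. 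Both routes are sound and both exploit exactly the exponent gap $\sigma_k\ge\max(\sigma_i,\sigma_j)+1$; the paper's two-slope comparison is marginally cleaner because the lower bound on $|s_{ik}|$ absorbs what your argument handles via the $(q-1)M\le qM$ step, but yours has the small advantage of making explicit where the strict inequalities are needed.
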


\begin{proof}
The result follows by using Lemma~\ref{lem:slope} to compare the slopes of the two line segments $p_ip_j$ and $p_ip_k$.
\end{proof}

\begin{lemma}
\label{lem:crossing-order}
Let $h$, $i$, $j$, and $k$ be four indices with $\max\{\sigma_h, \sigma_i,\sigma_j\}<\sigma_k$ and $h<j$. Then line segments $p_hp_j$ and $p_ip_k$ cross if and only if both $h<i<j$ and $\max\{\sigma_h,\sigma_j\}>\sigma_i$.
\end{lemma}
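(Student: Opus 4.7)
The plan is to compute the orientations of the four triples $(p_h,p_j,p_k)$, $(p_h,p_j,p_i)$, $(p_i,p_k,p_h)$, and $(p_i,p_k,p_j)$ by repeated application of Lemma~\ref{lem:tri-orient}, and then invoke the standard characterization that segments $p_hp_j$ and $p_ip_k$ cross iff $p_h,p_j$ are separated by the line through $p_i,p_k$ and $p_i,p_k$ are separated by the line through $p_h,p_j$. As a first step, since $h<j$ and $\sigma_k>\max(\sigma_h,\sigma_j)$, Lemma~\ref{lem:tri-orient} fixes the clockwise order of $(p_h,p_j,p_k)$ to be $p_h,p_k,p_j$, which pins down the side of the directed line $p_hp_j$ on which $p_k$ lies.

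For the first of the two crossing conditions, I would apply Lemma~\ref{lem:tri-orient} to the triples $(p_i,p_k,p_h)$ and $(p_i,p_k,p_j)$. In each case $\sigma_k$ is again the largest coordinate, so the orientation is determined solely by whether the third index ($h$ or $j$) lies to the left or right of $i$. Consequently $p_h$ and $p_j$ land on opposite sides of line $p_ip_k$ iff exactly one of $i<h$ and $i<j$ holds, which, given $h<j$, is equivalent to $h<i<j$.

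For the second crossing condition, I would analyze $(p_h,p_j,p_i)$ by a short case analysis on which of $\sigma_h,\sigma_i,\sigma_j$ is largest. If $\sigma_i$ is the largest, Lemma~\ref{lem:tri-orient} (now with $\sigma_i$ playing the role of the maximum) places $p_i$ on the same side of line $p_hp_j$ as $p_k$, so no crossing. If instead $\sigma_j$ or $\sigma_h$ is the largest, then applying Lemma~\ref{lem:tri-orient} with that index as the maximum, using the already-established $h<i<j$ to match the lemma's index-ordering hypothesis, places $p_i$ on the opposite side of line $p_hp_j$ from $p_k$, and so the segments cross. Hence, under the hypothesis $h<i<j$, this second condition is equivalent to $\sigma_i<\max(\sigma_h,\sigma_j)$.

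Conjoining the two equivalences yields the claimed iff. The main obstacle is essentially bookkeeping: each use of Lemma~\ref{lem:tri-orient} must be aligned with its specific hypothesis that the index not holding the maximum $y$-coordinate is the smaller of the other two, so one occasionally has to read off the clockwise triple in a cyclically rotated or reversed form to extract the sign of the intended orientation. Once this is set up, each individual case follows by a direct appeal to the lemma.
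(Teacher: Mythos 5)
Your proof is correct and follows essentially the same route as the paper: both invoke the standard characterization that the segments cross exactly when each segment's endpoints are separated by the line through the other, then translate each separation into an equality/inequality of triangle orientations computed via Lemma~\ref{lem:tri-orient}. Your version spells out the case analysis on which of $\sigma_h,\sigma_i,\sigma_j$ is largest a bit more explicitly than the paper, but the underlying argument is the same.
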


\begin{proof}
A crossing occurs between two line segments if and only if the endpoints of every segment are on opposite sides of the line through the other segment. The endpoints of $p_ip_k$ are on opposite sides of line $p_hp_j$ if and only if the two triangles $p_hp_ip_j$ and $p_hp_kp_j$ have opposite orientations; analogously, the endpoints of $p_hp_j$ are on opposite sides of line $p_ip_k$ if and only if the two triangles $p_ip_hp_k$ and $p_ip_jp_k$ have opposite orientations. Note that by Lemma~\ref{lem:tri-orient} and the assumption that $\sigma_k$ is the largest, $p_hp_ip_j$ and $p_hp_kp_j$ having opposite orientations implies that $\sigma_i$ is not the second largest, and $p_ip_hp_k$ and $p_ip_jp_k$ having opposite orientations implies that $h < i < j$. Therefore, if the line segments cross, then the two conditions $h<i<j$ and $\max\{\sigma_h,\sigma_j\}>\sigma_i$ are satisfied. Conversely, if the two conditions are satisfied, then the same lemma implies that the two pairs of triangles have opposite orientations, and the line segments must cross.
\end{proof}

\subsection{Universal point sets}

\begin{definition}
If $\sigma$ is any permutation, we define $\augment(\sigma)$ to be a permutation of length
$|\sigma|+3$, in which the first element is $1$, the second element is $|\sigma|+3$, the last element is $2$, and the remaining elements form a pattern of type $\sigma$.
\end{definition}

It follows from Lemma~\ref{lem:pi_G_213_avoiding} that, if $\sigma$ is an $S_{n-3}(213)$-superpattern and if $G$ is an arbitrary $n$-vertex maximal plane graph, then $\cperm(G)$ is a pattern in $\augment(\sigma)$.

\begin{theorem}
\label{thm:U_sigma_universal}
Let $\sigma$ be an $S_{n-3}(213)$-superpattern, and let \[U_n=\stretchperm(\augment(\sigma)).\]
Then $U_n$ is a universal point set  for planar graphs on $n$ vertices.
\end{theorem}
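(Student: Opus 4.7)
The plan is to show that every $n$-vertex maximal plane graph $G$ admits a crossing-free straight-line drawing on $U_n$; the reduction to the maximal case is routine. I would first fix a canonical representation of $G$ and, by Lemma~\ref{lem:recanonize}, renumber the vertices so that $\rpost(v_i)=i$. Lemma~\ref{lem:pi_G_213_avoiding} then gives that $\cperm(G)$ is a $213$-avoiding permutation of length $n$ whose first, second, and last entries are $1$, $n$, and $2$. Its middle $n-3$ entries therefore form a pattern in $S_{n-3}(213)$, which by hypothesis occurs inside $\sigma$; gluing the three boundary entries back on yields $\cperm(G)$ itself as a pattern of $\tau := \augment(\sigma)$, exactly as remarked just before the theorem.

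Next I would use this pattern containment to place the vertices. Fix positions $L_1 < L_2 < \cdots < L_n$ witnessing that $\cperm(G)$ is a pattern of $\tau$, and map each vertex $v_i$ to the point of $U_n = \stretchperm(\tau)$ at $x$-coordinate $L_{\pre(v_i)}$. Because the $\tau$-values at $L_1,\dots,L_n$ are order-isomorphic to the $\cperm(G)$-values, and $\cperm(G)_{\pre(v_i)} = \rpost(v_i) = i$, the $x$-coordinates of the resulting $n$ points respect $\pre$-order while the $y$-coordinates respect the canonical order $v_1, v_2, \ldots, v_n$.

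To verify the absence of crossings, I would take an arbitrary pair of vertex-disjoint edges $v_a v_b$ and $v_c v_d$ of $G$, relabeled so that $d = \max\{a,b,c,d\}$, making $v_d$ the point with the largest $y$-coordinate and hence the natural choice for the index $k$ in Lemma~\ref{lem:crossing-order}. Three subcases appear, one for each way to pair the four vertices into two edges with $v_d$ in one of them. For the pairing $\{v_a v_b,\, v_c v_d\}$ the non-$v_d$ endpoint of the second edge is $v_c$, whose $y$-coordinate already dominates those of $v_a$ and $v_b$, so the second condition of Lemma~\ref{lem:crossing-order} fails and no crossing occurs. For the two remaining pairings $\{v_a v_c,\, v_b v_d\}$ and $\{v_b v_c,\, v_a v_d\}$, a crossing forces the $\pre$-value of the ``middle'' vertex (respectively $v_b$ or $v_a$) to lie strictly between the $\pre$-values of the endpoints of the other edge (respectively $v_a v_c$ or $v_b v_c$). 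In either case that other edge has both endpoints of canonical index at most $c$, so Lemma~\ref{lem:midpre-interior} applied with $l = c$ shows the middle vertex is not on $C_c$ and hence lies strictly inside the cycle $C_c$ in the plane embedding of $G$. Since $v_d$ is added to the outer face of $G_{d-1}$ at step $d>c$, and the outer faces of $G_c, G_{c+1}, \ldots, G_{d-1}$ form a nested decreasing sequence of open regions, $v_d$ lies strictly outside $C_c$; the edge joining the interior ``middle'' vertex to $v_d$ would then have to cross $C_c$ in the plane embedding of $G$, contradicting its planarity. Hence no two edges of the drawing cross.

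The main obstacle I anticipate is the bookkeeping in the crossing analysis: correctly matching the four vertices to the roles of $h, i, j, k$ in Lemma~\ref{lem:crossing-order} under each of the two possible $\pre$-orderings of the triple, and verifying that Lemma~\ref{lem:midpre-interior} applies with $l = c$ regardless of which end of the ``other edge'' carries the larger $\pre$-value. A smaller subtlety is the assertion that $v_d$ sits in the outer region of $G_c$; this is an immediate consequence of the canonical construction but is worth stating explicitly, since the planarity contradiction rests on it. The three distinguished vertices $v_1, v_n, v_2$ require no separate treatment, because the same case analysis applies uniformly to every pair of disjoint edges of~$G$.
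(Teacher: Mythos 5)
Your proposal is correct and follows the same approach as the paper: renumber by $\rpost$, embed $\cperm(G)$ into $\augment(\sigma)$, map each vertex to the corresponding stretched point, and rule out crossings by combining Lemma~\ref{lem:crossing-order} with Lemma~\ref{lem:midpre-interior}. The paper handles the case analysis uniformly (naming the edges $v_hv_j$, $v_iv_k$ with $\rpost(v_k)$ maximal and letting the iff in Lemma~\ref{lem:crossing-order} absorb your Case~1) and closes the contradiction via the canonical-representation property that all earlier neighbors of $v_k$ lie on $C_{k-1}$, whereas you close it via planarity of the embedding of $G$ --- both routes are valid, and your explicit three-case split is harmless once you make the tacit ordering $a<b<c<d$ explicit.
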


\begin{proof}
Let $G$ be an $n$-vertex maximal plane graph and $v_1,v_2,\dots v_n$ be the canonical representation of $G$ given by the $\rpost$ values. Let $x_i$ denote a sequence of positions in $\augment(\sigma)$ that
form a pattern of type $\cperm(G)$, with position $x_i$ in $\augment(\sigma)$ corresponding to position $\pre(v_i)$ in $\cperm(G)$. Let $q=|\augment(\sigma)|$, and for each $i$, let $y_i=q^j$ where $j$ is the value of $\augment(\sigma)$ at position $x_i$.
Embed $G$ by placing vertex $v_i$ at the point $(x_i,y_i)\in U_n$.

Let $v_h v_j$ and $v_i v_k$ be two edges of $G$, where we assume without loss of generality that $\rpost(v_k)$ is larger than the $\rpost$ value of the other three vertices. If these two edges crossed in the embedding of $G$, then by Lemma~\ref{lem:crossing-order} we would necessarily have $\pre(v_h)<\pre(v_i)<\pre(v_j)$, and $\rpost(v_i)<\max\{\rpost(v_h),\rpost(v_j)\}$. 
By Lemma~\ref{lem:midpre-interior}, $v_i$ would not be on the outside face of the graph induced by the vertices with $\rpost$ values at most $\max\{\rpost(v_h),\rpost(v_j)\}$, and could not be a neighbor of $v_k$. This contradiction shows that no crossing is possible, so the embedding is planar.
\end{proof}

\section{Superpatterns for $S_n(213)$}
In this section we construct a $S_n(213)$-superpattern of size $n^2/4 + n + ((-1)^n - 1)/8$ and then use these superpatterns to find small superpatterns for the family of all $n$-vertex planar graphs.

We verified using computer searches that this size is minimal for $n \leq 6$. For $n\le 5$, this verification was done by exhaustively searching all permutations up to the given size. For $n=6$, for which the optimal superpattern size is~15, we used a slightly more sophisticated search strategy: we exhaustively generated all length-$13$ permutations, formed the subset of length-$13$ $S_5(213)$-superpatterns, and then verified for each of these that there was no way of adding one more element to form an $S_6(213)$-superpattern.

Our construction begins with a lemma demonstrating the recursive structure of $S_n(213)$-super\-patterns that have exactly $n$ rows and $n$ columns. The $S_n(213)$-superpattern that we construct will have this property. Note that an $S_n(213)$-superpattern must have at least $n$ rows and $n$ columns in order to embed the pattern consisting of a single decreasing run.
\begin{lemma}\label{lem:row-col-prop}
If $\sigma$ is a $S_n(213)$-superpattern and has $n$ rows and $n$ columns, then the permutation described by the intersection of columns  $n-j + 1$ to $n-i + 1$  and rows $i$ to $j$  of $\sigma$ is a $S_{j-i+1}(213)$-superpattern, for every $1 \le i \le j \le n$.
\end{lemma}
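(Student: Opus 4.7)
The plan is to prove the lemma by, for each $\pi\in S_{j-i+1}(213)$, constructing an auxiliary permutation $\pi^{*}\in S_n(213)$ whose embedding into $\sigma$ is constrained enough that the copy of $\pi$ inside $\pi^{*}$ is forced to land in the sub-block of the chessboard indexed by rows $i$ through $j$ and columns $n-j+1$ through $n-i+1$. Since $\sigma$ is an $S_n(213)$-superpattern, $\pi^{*}$ is automatically a pattern of $\sigma$, so the forced middle then witnesses $\pi$ inside the sub-permutation the lemma is about.

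The construction I would use is
\[
\pi^{*} \;=\; (n,\,n-1,\,\ldots,\,j+1)\ \ \tilde{\pi}\ \ (i-1,\,i-2,\,\ldots,\,1),
\]
where $\tilde{\pi}$ is $\pi$ rewritten on the consecutive value set $\{i,i+1,\ldots,j\}$. The first step is a short case analysis to verify that $\pi^{*}$ avoids $213$: any hypothetical $213$-pattern at indices $a<b<c$ whose entries fall in more than one of the blocks $P=(n,\ldots,j+1)$, $\tilde\pi$, $S=(i-1,\ldots,1)$ is ruled out by the value separation $P>\tilde\pi>S$ together with $P$ and $S$ being themselves decreasing; the remaining all-in-$\tilde\pi$ case is handled by the assumption that $\pi$ already avoids $213$.

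The heart of the argument is bookkeeping of descents and inverse descents of $\pi^{*}$. Let $\ell_1<\cdots<\ell_n$ be the positions in $\sigma$ realising $\pi^{*}$, and write $c_k$ for the column of $\sigma$ containing $\ell_k$ and $r_m$ for the row containing the $m$-th smallest of $\sigma_{\ell_1},\ldots,\sigma_{\ell_n}$. A descent of $\pi^{*}$ at position $k$ forces $c_{k+1}>c_k$, because otherwise $\ell_k$ and $\ell_{k+1}$ would lie in the same ascending run of $\sigma$ and the values would be out of order; symmetrically an inverse descent forces $r_{m+1}>r_m$. The prefix $P$ together with the $P$-to-$\tilde\pi$ drop supplies $n-j$ consecutive descents at positions $1,\ldots,n-j$, and the $\tilde\pi$-to-$S$ drop together with $S$ itself supplies $i-1$ consecutive descents at positions $n-i+1,\ldots,n-1$. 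Combined with $c_1\ge 1$ and $c_n\le n$ (which is precisely where the hypothesis that $\sigma$ has exactly $n$ columns is used), these two chains give $c_{n-j+1}\ge n-j+1$ and $c_{n-i+1}\le n-i+1$, and monotonicity of $c_k$ then pins every middle column, for $k\in[n-j+1,n-i+1]$, into $[n-j+1,n-i+1]$. An entirely parallel argument applied to inverse descents and the $n$ rows of $\sigma$ pins the middle values of $\pi^{*}$ into rows $[i,j]$.

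Assembling these two conclusions, the middle block $\tilde\pi$ of $\pi^{*}$ embeds into the intersection of columns $n-j+1$ to $n-i+1$ with rows $i$ to $j$ of $\sigma$, and by construction it realises the pattern $\pi$, so $\pi$ is indeed a pattern of the sub-permutation described by that intersection. The main obstacle I anticipate is the descent bookkeeping: one has to design $\pi^{*}$ so that the forced descents occur in two long \emph{contiguous} runs flanking $\tilde\pi$, so that the chain inequalities on $c_k$ and on $r_m$ propagate all the way out to the extremes $c_1=1$ and $c_n=n$, which is where the assumption that $\sigma$ has exactly $n$ rows and $n$ columns actually bites; padding $\tilde\pi$ with anything less rigidly decreasing (for instance, interleaving small and large values) loses the tight chain and the middle is no longer localised in the required sub-block.
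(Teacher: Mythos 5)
Your construction of $\pi^{*}$ is exactly the auxiliary permutation $\tau$ used in the paper's proof (your $\tilde\pi$ is precisely the sequence $(\pi_1+i-1),\dots,(\pi_{j-i+1}+i-1)$), and your descent/inverse-descent chain argument for forcing the middle block into columns $n-j+1,\dots,n-i+1$ and rows $i,\dots,j$ is the same counting argument the paper gives, just spelled out more explicitly. The proposal is correct and follows essentially the same approach as the paper.
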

\begin{proof}
Let $\pi$ be an arbitrary $213$-avoiding permutation of length $j-i+1$ and consider the
$n$-element $213$-avoiding permutation
$$\tau=n(n-1)\dots(j+1)(\pi_1+i-1)(\pi_2+i-1)\dots(\pi_{j-i+1}+i-1)(i-1)(i-2)\dots 321.$$
(See Fig.~\ref{fig:sigmahat}.) By the assumption that $\sigma$ is a superpattern, $\tau$ has an embedding into $\sigma$. Because there are $n-j$ descents in $\tau$ before the first element of the form $\pi_k+i-1$, this embedding cannot place any element $\pi_k+i-1$ into the first $n-j$ columns of $\sigma$.  Similarly because there are $i$ descents in $\tau$ after the last element of the form $\pi_k+i-1$,
this embedding cannot place any element $\pi_k+i-1$ into the last $i-1$ columns of $\sigma$.  By a symmetric argument, the elements of the form $\pi_k+i-1$ cannot be embedded into the $i-1$ lowest rows nor the $n-j$ highest rows of $\sigma$.  Therefore these elements, which form a pattern of type $\pi$,  must be embedded into $\sigma$ between column $n-j+1$ and column $n-i+1$ (inclusive) and between row $i$ and row $j$ (inclusive). Since $\pi$ was arbitrary, this part of $\sigma$ must be universal for permutations of length $j-i+1$, as claimed.
\end{proof}

\begin{figure}[t]
  \centering
  \includegraphics[height=1.75in]{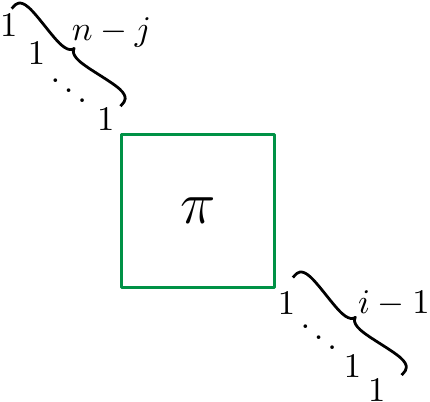}
  \caption{\label{fig:sigmahat} The permutation $\tau$ constructed from $\pi$ in the proof of Lemma~\ref{lem:row-col-prop}}
\end{figure}

\begin{figure}[t]
\centering
\includegraphics[width=0.95\textwidth]{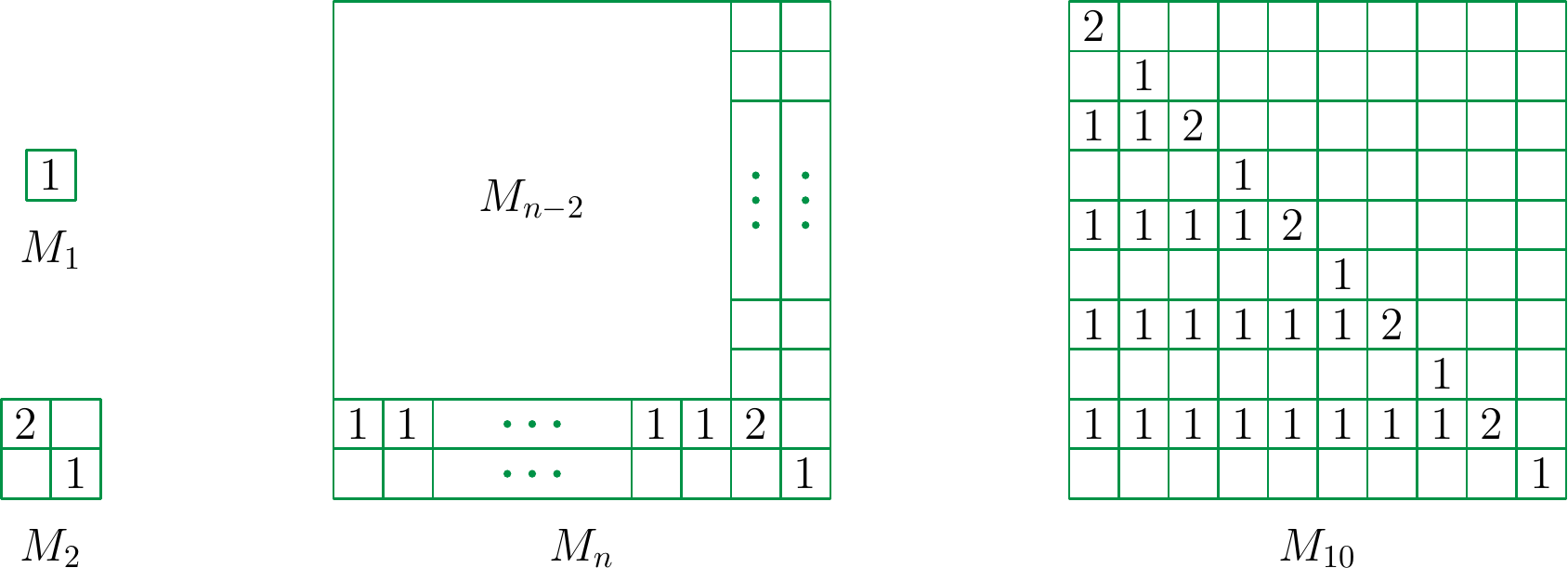}
\caption{
    \label{fig:m_n}
    The base cases, $chessboard(\mu_i)$ for $i = 1,2$, and the inductive construction of $\chessboard(\mu_n)$ from $\chessboard(\mu_{n-2})$. Cells of the matrices containing zero are shown as blank.
}
\end{figure}

\begin{figure}[t]
\centering
\includegraphics[width=0.85\textwidth]{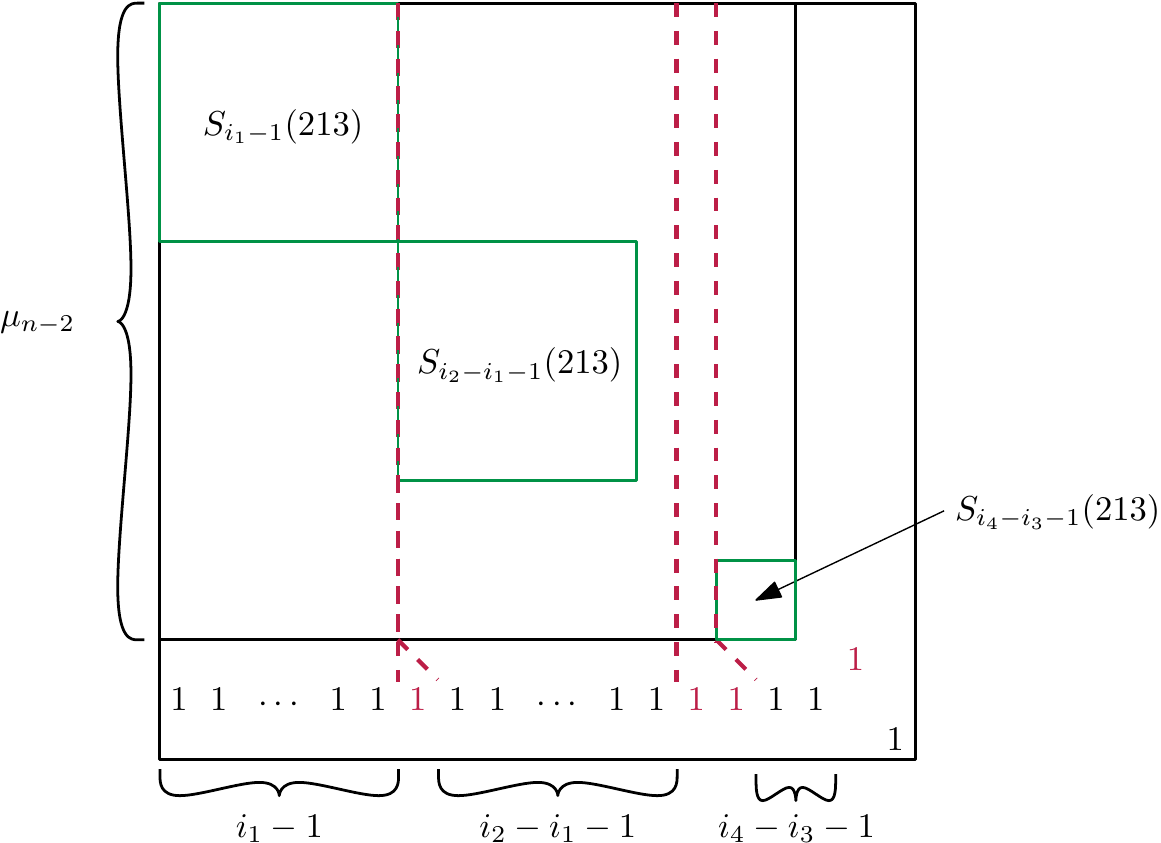}
\caption{\label{fig:embed-examp} A partial embedding of the red elements, showing where the remaining blocks can be fit into the columns of $\mu_{n-2}$. The run of length $2$ in the bottom right of the chessboard has been expanded for clarity.}
\end{figure}

We define a permutation $\mu_n$, which we will  show to be a $S_n(213)$-super\-pattern, by describing $\chessboard(\mu_n)=M_n$. Recall that in a chessboard representation $M$, $M(i,j)$ denotes the cell at the $i$th column (from left to right) and $j$th row (from bottom to top) of the chessboard. In our construction, $M_n$ will have exactly $n$ columns and $n$ rows. The bottom two rows of $M_n$ will have $M_n(n,1) = 1$, $M_n(i,2) = 1$ for all $1 \leq i \leq n-2$, $M_n(n-1,2) = 2$, and all other values in these two rows will be zero. The values in the top $n-2$ rows are given recursively by $M_n{(1\colon\! n-2, 3\colon\! n)}=M_{n-2}$, again with all values outside this submatrix being zero. The base cases of $\mu_1$ and $\mu_2$ and the inductive definition with an example are shown in Figure~\ref{fig:m_n}.

\begin{theorem}\label{thm:menorah_universal}
The permutation $\mu_n$ is a $S_n(213)$-superpattern. Thus there exists a $S_n(213)$-superpattern whose size is $n^2/4 + n +((-1)^n - 1) / 8$.
\end{theorem}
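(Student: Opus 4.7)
\smallskip\noindent\textbf{Proof proposal.}
I plan to proceed by induction on $n$. The base cases $n = 1, 2$ follow by direct inspection: $\mu_1 = 1$ contains the single element of $S_1(213)$, and $\mu_2 = 231$ contains both $12$ (as $23$) and $21$ (as $31$), covering $S_2(213) = S_2$. The size formula follows by parity from the recurrence $|\mu_n| = |\mu_{n-2}| + (n+1)$ implied by the chessboard construction---one row-$1$ cell at column $n$, a row-$2$ singleton at each column $1, \ldots, n-2$, and a double row-$2$ cell at column $n-1$, adding $n+1$ cells atop $\mu_{n-2}$---together with $|\mu_1| = 1$ and $|\mu_2| = 3$; a parity split yields the stated closed form.

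For the inductive step, fix $\pi \in S_n(213)$ and decompose $\pi = \pi_L \cdot 1 \cdot \pi_R$, where $\pi_L$ and $\pi_R$ are both $213$-avoiding and every value of $\pi_L$ strictly exceeds every value of $\pi_R$ (the standard decomposition of a $213$-avoider around its minimum); let $\ell = |\pi_L|$ and $r = |\pi_R|$, so $\ell + r + 1 = n$. The recursive identity that the top $(n-2) \times (n-2)$ block of $\chessboard(\mu_n)$ (its entries in rows $3$ through $n$ and columns $1$ through $n-2$) equals $\chessboard(\mu_{n-2})$ means the sub-$\mu_{n-2}$ positions inside $\mu_n$ carry $\mu_n$-values at least $n+2$, while the row-$1$ cell at column $n$ and the row-$2$ cells across columns $1, \ldots, n-1$ carry the small values $1, 2, \ldots, n+1$. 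My plan is to embed $\pi_L$ into the left column range of $\mu_n$ using sub-$\mu_{n-2}$ positions (invoking the inductive hypothesis on columns $1, \ldots, \ell$ of $\mu_{n-2}$), to place value $1$ of $\pi$ at the row-$2$ cell of column $\ell + 1$ (with boundary-case placements at the row-$1$ cell of column $n$ when $r = 0$ and at the row-$2$ cell of column $1$ when $\ell = 0$), and to embed $\pi_R$ into the right column range using row-$2$ cells (values $\ell + 3, \ldots, n+1$, including the double cell at column $n-1$) together with sub-$\mu_{n-2}$ positions for any overflow.

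To carry out the left and right recursive embeddings, I will strengthen the induction with the auxiliary claim that for every $1 \le a \le b \le n$, the subsequence of $\mu_n$ restricted to columns $a$ through $b$ is itself an $S_{b-a+1}(213)$-superpattern; this auxiliary claim is proven in lockstep with the main claim, since sub-ranges that fit inside columns $1, \ldots, n-2$ inherit the superpattern property from $\mu_{n-2}$, and the extra row-$1$ and row-$2$ cells of $\mu_n$ only add patterns. The main obstacle I anticipate is verifying the value-consistency at the seam of the three-way split: value $1$ of $\pi$ must map to the smallest of the $n$ chosen $\mu_n$-values, $\pi_R$ to the next $r$ smallest, and $\pi_L$ to the largest $\ell$. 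The routing of $\pi_L$ into sub-$\mu_{n-2}$ (values $\ge n+2$) and of $\pi_R$ into row-$2$ and sub-$\mu_{n-2}$ positions (with $\pi_R$'s sub-$\mu_{n-2}$ values chosen strictly below $\pi_L$'s, which is possible because $\pi_L$ sits in a disjoint column range and the $213$-avoidance hypothesis $\pi_L > \pi_R$ rules out the otherwise obstructive interleavings) delivers the required hierarchy.
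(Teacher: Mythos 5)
Your decomposition $\pi = \pi_L \cdot 1 \cdot \pi_R$ is a different, though closely related, strategy from the paper's: the paper extracts the \emph{entire} lowest row of $\pi$ (all of $1, 2, \dots, k$; the second-lowest row when $\pi_n = 1$) and places it in one shot into rows $1$--$2$ of $\mu_n$, so that what remains is a collection of blocks each strictly shorter than $n$, and every block is mapped into $\mu_{n-2}$ via Lemma~\ref{lem:row-col-prop}. Your plan instead peels off only the single minimum element, which is the step that causes trouble.

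The gap is in the disposition of $\pi_R$ (and, symmetrically, of $\pi_L$ in your boundary case $r=0$). After placing $1$ at the row-$2$ cell of column $\ell+1$, $\pi_R$ has $r = n-\ell-1$ elements that must land in $\mu_n$-positions after that cell, at $\mu_n$-values strictly between $1$'s value and the smallest value you have assigned to $\pi_L$. The available region is a mixture of row-$2$ cells of $\mu_n$ (which carry no inductive hypothesis) and a sub-block of $\mu_{n-2}$; you call this ``row-$2$ cells together with sub-$\mu_{n-2}$ positions for any overflow,'' but proving that this mixed region contains \emph{every} $213$-avoider of length $r$, with the required row ceiling, is exactly as hard as the original theorem and is not discharged by your auxiliary claim. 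The auxiliary claim as stated is also the wrong strengthening: it asserts only that the full column range $a,\dots,b$ of $\mu_n$ is an $S_{b-a+1}(213)$-superpattern, with no control on which rows are used, yet the argument needs $\pi_R$ to avoid rows $n-\ell+1,\dots,n$ (where $\pi_L$ lives). Invoking Lemma~\ref{lem:row-col-prop} on $\mu_n$ itself to get the row control would be circular, since that lemma presupposes $\mu_n$ is already a superpattern. Finally, your boundary case $r=0$ has $\ell = n-1$, and ``columns $1,\dots,\ell$ of $\mu_{n-2}$'' does not exist, since $\mu_{n-2}$ has only $n-2$ columns; the paper's Case~1 (second-lowest row) is precisely the patch for this. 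The fix for your plan is exactly the paper's: pull out the whole lowest (resp.~second-lowest) row of $\pi$ rather than only the minimum, so that each remaining block has length strictly less than $n-1$ and fits into a diagonal block of $\mu_{n-2}$ where the inductive hypothesis applies noncircularly through Lemma~\ref{lem:row-col-prop}.
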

\begin{proof}
It can be easily verified that $\mu_n$ is a $S_n(213)$-superpattern when $1 \leq i \leq 2$. Let $\pi$ be an arbitrary $213$-avoiding permutation of length $n > 2$.  We will show that $\pi$ can be embedded into $\mu_n$. We have two cases, depending on the value of the last element of $\pi$.

\begin{description}
\item[Case 1:] $\pi_n = 1$

Let $\pi_{i_1} \dots \pi_{i_k}$ be the second lowest row of $\pi$.  Observe that $i_k = n - 1$ and $\pi_{i_1} = 2$.  We claim that $\pi_{i_j} = j+1$ for all $2 \le j \le k$.  Indeed, in one direction, $\pi_{i_j} \ge j+1$, since $\pi_{i_j}$ is larger than all $j-1$ preceding elements on the second lowest row, and every such preceding element has value at least $2$.  In the other direction, $\pi_{i_j} \le j+1$, for otherwise, there is some $l < i_{j-1}$ such that $\pi_l = j+1$, forming a $213$-pattern with $\pi_l \pi_{i_{j-1}} \pi_{i_j}$.

In this case, we embed the bottom two rows of $\pi$ by mapping $\pi_n$ to the bottom right element of $\mu_n$ and $\pi_{i_j}$ to the $i_j$-th position of the second lowest row of $\mu_n$.

\item[Case 2:] $\pi_n \neq 1$

Let $\pi_{i_1} \dots \pi_{i_k}$ be the lowest row of $\pi$.  Similarly to Case~1, $i_k = n$, and because $\pi$ is $213$-avoiding, $\pi_{i_j} = j$ for all $1 \le j \le n$.  We embed this bottom row of $\pi$ by mapping $\pi_{i_j}$ to the $i_j$-th position of the second lowest row of $\mu_n$ (in the case $k = n$, the $i_{n-1}$-th and $i_n$-th positions are both at the cell $M_n({n-1},2)$ of $\mu_n$).
\end{description}

To finish the embedding, the remaining elements need to be embedded into the copy of $\mu_{n-2}$.  Recall that a block of a permutation is a contiguous subsequence formed by a set of consecutive integers.  Because $\pi$ is $213$-avoiding, the remaining elements of $\pi$ form disjoint blocks that fit in the columns between the elements embedded so far. If one block is to the right of another in $\pi$, then every element in that block has a smaller value than every element in the block to the left. Let $\pi_{i_1}$ be the leftmost element that has been embedded on the second row of $\mu_n$.  Then there are $i_1-1$ elements to the left of $\pi_{i_1}$ in $\pi$ and $i_1-1$ columns to the left of the column where $\pi_{i_1}$ was embedded in $\mu_n$.  By Lemma~\ref{lem:row-col-prop}, these elements can fit into the top $i_1-1$ rows of these columns.  Now let $\pi_{i_j}$ and $\pi_{i_{j+1}}$ be two adjacent elements embedded in the second lowest row of $\mu_n$.  Between these two there are $i_{j+1}-i_j-1$ columns of $\mu_{n-2}$ available: from the column above $\pi_{i_j}$ to the column before $\pi_{i_{j+1}}$.  So again by Lemma~\ref{lem:row-col-prop}, the $i_{j+1}-i_j-1$ elements between $\pi_{i_j}$ and $\pi_{i_{j+1}}$ can be fit into rows $n-i_j+1$ to $n-i_{j+1}$ of those columns (see Fig.~\ref{fig:embed-examp}).  Because $i_k = n-1$ in Case 1 and $n$ in Case 2, there is no block after $\pi_{i_k}$. Therefore $\pi$ can be embedded into $\mu_n$ and $\mu_n$ is a $S_n(213)$-superpattern.

It remains to compute the size of $\mu_n$. We have $|\mu_1| = 1$, $|\mu_2| = 3$ and from the recursive definition of $\mu_n$
\begin{align*}
    |\mu_n| &= (n+1) + |\mu_{n-2}| \\
            &= (n+1) + (n-2)^2/4 + (n-2) + ((-1)^{n-2} - 1)/8\\
            &= (n+1) + (n^2/4 - n + 1) + (n-2) + ((-1)^{n-2} - 1)/8\\
            &= n^2/4 + n + ((-1)^{n} - 1)/8,
\end{align*}
proving the size claimed in the theorem by induction on $n$.
\end{proof}

Combining Theorem~\ref{thm:menorah_universal} with Theorem~\ref{thm:U_sigma_universal}, the following is immediate:

\begin{theorem}
The $n$-vertex planar graphs have universal point sets of size $n^2/4 - \Theta(n)$.
\end{theorem}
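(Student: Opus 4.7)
The plan is to chain together the two preceding theorems in the obvious way and then verify the asymptotic size. By Theorem~\ref{thm:menorah_universal}, for each $m \ge 1$ there is an $S_m(213)$-superpattern $\mu_m$ of size $m^2/4 + m + ((-1)^m - 1)/8$. Specializing to $m = n-3$ gives an $S_{n-3}(213)$-superpattern of length $(n-3)^2/4 + (n-3) + ((-1)^{n-3} - 1)/8$.

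Next I would apply Theorem~\ref{thm:U_sigma_universal} with $\sigma = \mu_{n-3}$. That theorem says $U_n = \stretchperm(\augment(\mu_{n-3}))$ is universal for $n$-vertex planar graphs. Since $\stretchperm$ produces one point for each element of its input, and $\augment$ adds exactly three elements to $\sigma$, we have
\[
|U_n| = |\mu_{n-3}| + 3 = \frac{(n-3)^2}{4} + (n-3) + \frac{(-1)^{n-3}-1}{8} + 3.
\]

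Finally I would simplify this expression. Expanding $(n-3)^2/4 = n^2/4 - 3n/2 + 9/4$ and collecting constants and lower-order terms yields
\[
|U_n| = \frac{n^2}{4} - \frac{n}{2} + O(1) = \frac{n^2}{4} - \Theta(n),
\]
which matches the claimed bound. No step here is a real obstacle — the substantive work was already done in Theorems~\ref{thm:menorah_universal} and~\ref{thm:U_sigma_universal}; the only mildly delicate point is keeping careful track of the three augment elements and the index shift from $n$ to $n-3$ so that the leading coefficient remains $1/4$ rather than something smaller.
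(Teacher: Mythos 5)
Your proposal is correct and matches the paper exactly: the paper states the theorem as an immediate consequence of Theorem~\ref{thm:menorah_universal} and Theorem~\ref{thm:U_sigma_universal}, which is precisely the chaining you carry out. Your explicit bookkeeping of the $n\mapsto n-3$ shift and the three extra elements from $\augment$ simply makes the ``immediate'' step explicit, and the arithmetic checks out.
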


\section{Specific subclasses  of 213-avoiding permutations}

In this section we investigate the size of superpatterns for certain permutation classes (defined by forbidden patterns) that are strict subsets of
the $213$-avoiding permutations. For each of these permutation classes, we prove that there exist superpatterns of linear or near-linear size. Each of these permutation classes corresponds to a subfamily of the planar graphs (although not necessarily a natural subfamily): the graphs for which the permutation given by our algorithm to map maximal plane graphs to permutations happens to be in the given class.
It follows by the same stretching construction as for our main result that these subfamilies of planar graphs have universal point sets of near-linear size.

Our results in this section include the strongest lower bound we currently know on the size of $S_n(213)$-superpatterns. Additionally, the upper bounds in this section will serve as a warm-up for the more general upper bounds in Section~\ref{sec:general-subclasses} for all subclasses of $213$-avoiding permutations.

\subsection{Superpatterns for $S_n(213, 312)$}

\begin{figure}[b]
\centering
\includegraphics[width=0.9\textwidth]{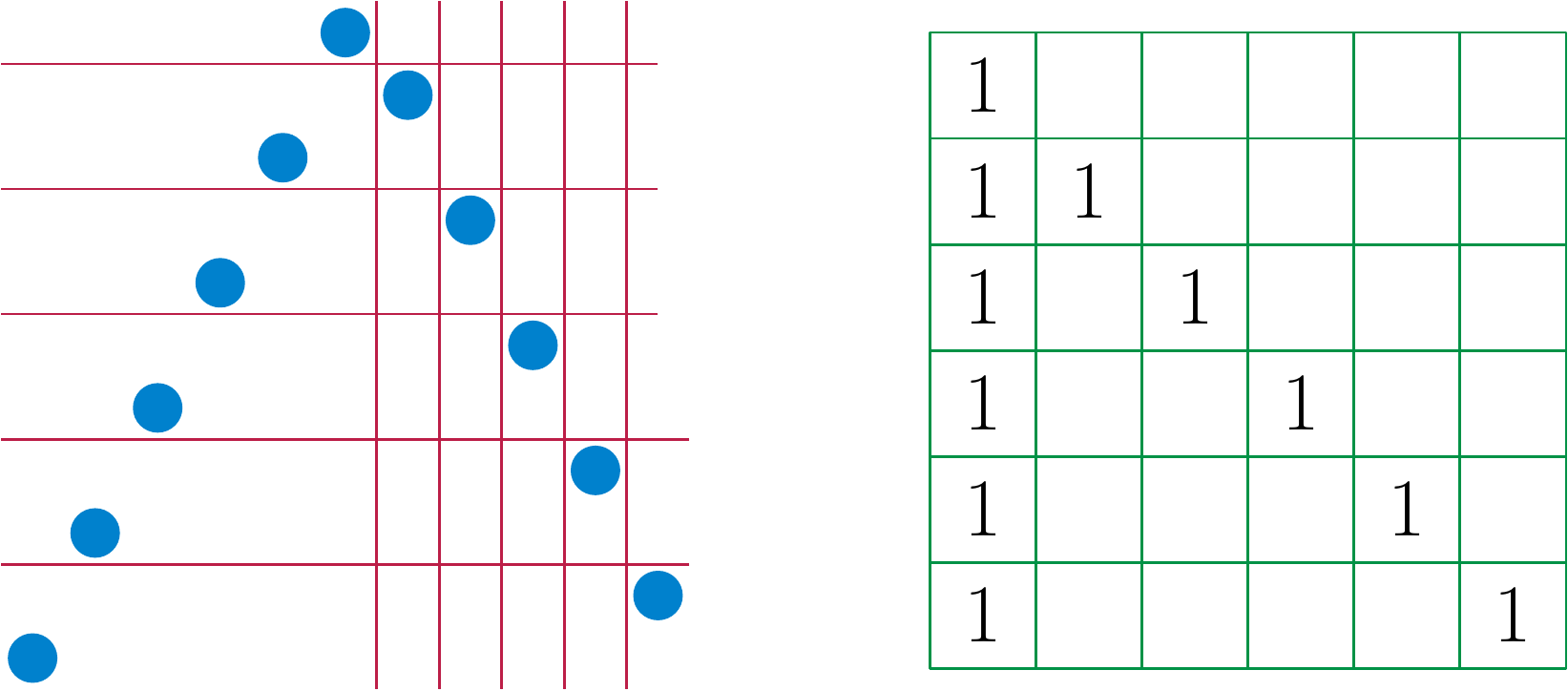}
\caption{The $S_6(213,312)$-superpattern}
\end{figure}

A permutation is in $S_n(213,312)$ if and only if it is \emph{unimodal}: all its ascents must occur earlier than all of its descents. There are exactly $2^{n-1}$ such permutations: each permutation in this family of permutations is determined by the set of elements that are earlier than the largest element~\cite{Sim-Sch-85}. A superpattern $\pi$ for $S_n(213,312)$ must have at least $2n-1$ elements, because it must allow the two permutations $123\ldots n$ and $n(n-1)(n-2)\ldots 1$ to both be embedded into $\pi$, and their embeddings can only share a single element. This bound is tight:

\begin{theorem}
$S_n(213,312)$ has a minimal superpattern of length $2n-1$.
\end{theorem}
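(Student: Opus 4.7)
The plan is to exhibit an explicit permutation of length $2n-1$ and verify it contains every unimodal length-$n$ pattern, thereby matching the lower bound already stated. The construction I would try is
\[
\sigma \;=\; 2,\,4,\,6,\,\ldots,\,2n-2,\;2n-1,\;2n-3,\,2n-5,\,\ldots,\,3,\,1,
\]
i.e., $\sigma_i=2i$ for $1\le i\le n-1$, $\sigma_n=2n-1$, and $\sigma_{n+j}=2(n-j)-1$ for $1\le j\le n-1$. The point of using even values on the ascending side and odd values on the descending side is that it decouples the relative order between the two halves: each ascending-side value can be made independently larger or smaller than each descending-side value.

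To embed a given unimodal $\pi\in S_n(213,312)$ with maximum $\pi_k=n$, I would route the maximum to position $n$ of $\sigma$, each $\pi_i$ with $i<k$ to the position where $\sigma$ carries the value $2\pi_i$, and each $\pi_j$ with $j>k$ to the position where $\sigma$ carries the value $2\pi_j-1$. Explicitly, the embedding is $\phi(i)=\pi_i$ for $i<k$, $\phi(k)=n$, and $\phi(j)=2n-\pi_j$ for $j>k$. Two checks remain: that $\phi$ is strictly increasing, and that $\sigma_{\phi(1)},\ldots,\sigma_{\phi(n)}$ is order-isomorphic to $\pi$.

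Monotonicity of $\phi$ reduces to the monotonicity of each half of $\pi$ together with the inequality $2n-\pi_j\ge n+1$, which holds because $\pi_j\le n-1$ for $j\ne k$. For the order-isomorphism, comparisons within a single half are immediate, and comparisons with the maximum are automatic since every other value used is at most $2n-2<2n-1$. The only substantive case is a comparison of $2\pi_i$ with $2\pi_j-1$ across $k$: because $\pi_i$ and $\pi_j$ are distinct integers, $2\pi_i<2\pi_j-1$ iff $\pi_i<\pi_j$, which is exactly the condition the pattern embedding demands. There is no serious obstacle here; the only design decision is the even/odd interleaving, and once that is in place the verification is a short parity computation.
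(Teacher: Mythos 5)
Your construction is the reverse of the paper's explicit superpattern $135\ldots(2n-1)(2n-2)(2n-4)\ldots 42$ (and $S_n(213,312)$ is closed under reversal), so the two are the same permutation up to a trivial symmetry, using the same odd/even-interleaving idea; your verification is correct and just spells out what the paper leaves implicit.
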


\begin{proof}
The permutation
\[ 135\ldots (2n-1)(2n-2)(2n-4)\ldots 642 \]
is a superpattern for $S_n(213,312)$ and has length exactly $2n-1$.
\end{proof}

\subsection{Superpatterns for $S_n(213, 132)$}
\label{sec:213-132}

A permutation is in $S_n(213,132)$ if and only if its chessboard notation has nonzero entries only on the diagonal from upper left to lower right of the chessboard; that is, if it is a descending sequence of ascending subsequences~\cite{Sim-Sch-85}. For instance, $789564123$ is a descending sequence of the four ascending subsequences $789$, $56$, $4$, and $123$; its chessboard notation has the lengths of these subsequences (3, 2, 1, 3) in the diagonal entries. Such a permutation is determined by a single bit of information for each pair of consecutive values: do they belong to the same ascending subsequence or not? Therefore, there are exactly $2^{n-1}$ such partitions~\cite{Rot-DM-81}, equinumerous with $S_n(213,312)$. However, as we will see, the minimum size of a superpattern for $S_n(213,132)$ is asymptotically different from the $2n-1$ bound on this size for $S_n(213,312)$.

\begin{theorem}
$S_n(213,132)$ has a superpattern of size at most $n\log_2 n+n$.
\end{theorem}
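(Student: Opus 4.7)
The plan is to exploit the structural characterization of $S_n(213, 132)$ as descending sequences of ascending subsequences, together with the subsequence majorization machinery (Lemmas~\ref{lem:brockhaus} and~\ref{lem:sawtooth}) already developed in Section~\ref{sec:subsequence-majorization}. First I would construct the superpattern $\sigma$ explicitly as a descending sequence of ascending runs whose lengths are $\xi_1,\xi_2,\dots,\xi_n$ (the first $n$ terms of the sequence from Lemma~\ref{lem:brockhaus}). Concretely, arrange the runs so that every value in the $i$-th run exceeds every value in the $(i+1)$-st run, while the $i$-th run itself is an ascending block of $\xi_i$ consecutive integers. The total length of $\sigma$ is then $\zeta_n$, which by Lemma~\ref{lem:brockhaus} is at most $n\log_2 n + n$.

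Next I would verify that $\sigma$ contains every $\pi\in S_n(213,132)$ as a pattern. Any such $\pi$ decomposes canonically into its ascending runs, say of lengths $\alpha_1,\alpha_2,\dots,\alpha_k$ with $\sum_j\alpha_j = n$, arranged in strictly descending value order. Applying Lemma~\ref{lem:sawtooth} to the composition $\alpha_1,\dots,\alpha_k$ yields indices $1\le i_1<i_2<\dots<i_k\le n$ with $\alpha_j \le \xi_{i_j}$ for every $j$. The embedding then sends the $j$-th ascending run of $\pi$ to any $\alpha_j$ of the elements of the $i_j$-th ascending run of $\sigma$ (for instance, the first $\alpha_j$ elements of that run).

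It remains to check that this embedding realizes $\pi$ as a pattern of $\sigma$. Within a single run, both $\pi$ and its image in $\sigma$ are ascending, so relative order is preserved block-by-block. Between two blocks $j<j'$, we have $i_j < i_{j'}$, so by construction of $\sigma$ every element of the $i_j$-th run of $\sigma$ has value strictly greater than every element of the $i_{j'}$-th run; this matches the fact that $\pi$ is a descending sequence of ascending subsequences, so every element of $\pi$'s $j$-th run exceeds every element of its $j'$-th run. Therefore the chosen subsequence of $\sigma$ is order-isomorphic to $\pi$.

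There is no real obstacle beyond stringing these pieces together: the structural description of $S_n(213,132)$ reduces the embedding problem to the purely numerical question of covering one composition by a subsequence of another, and Lemma~\ref{lem:sawtooth} together with Lemma~\ref{lem:brockhaus} resolves exactly this question with the desired $n\log_2 n + n$ bound. The only mild care needed is to confirm that the subsequence relation lets us pick the runs of $\sigma$ in the same left-to-right order as the runs of $\pi$, which is automatic because Lemma~\ref{lem:sawtooth} delivers an index-respecting subsequence.
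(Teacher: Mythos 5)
Your proposal is correct and follows essentially the same route as the paper: the paper's construction (a permutation whose chessboard has $\xi_1,\dots,\xi_n$ on its main diagonal and zeros elsewhere) is precisely your descending sequence of ascending runs of lengths $\xi_1,\dots,\xi_n$, and both proofs invoke Lemma~\ref{lem:sawtooth} to majorize the run-length composition of $\pi$ and Lemma~\ref{lem:brockhaus} to bound $\zeta_n$. You have simply unpacked the details that the paper's two-sentence proof leaves implicit.
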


\begin{proof}
We form a superpattern that is itself $\{213,132\}$-avoiding, by constructing a permutation whose chessboard notation has the sequence $\xi_1,\xi_2,\dots,\xi_n$ on its main diagonal and zeros elsewhere (Figure~\ref{fig:213-132}).
The result follows from Lemma~\ref{lem:sawtooth} and from the bounds of Lemma~\ref{lem:brockhaus} for the partial sums of~$\xi$.
\end{proof}

\begin{figure}[ht]
\centering
\includegraphics[width=0.9\textwidth]{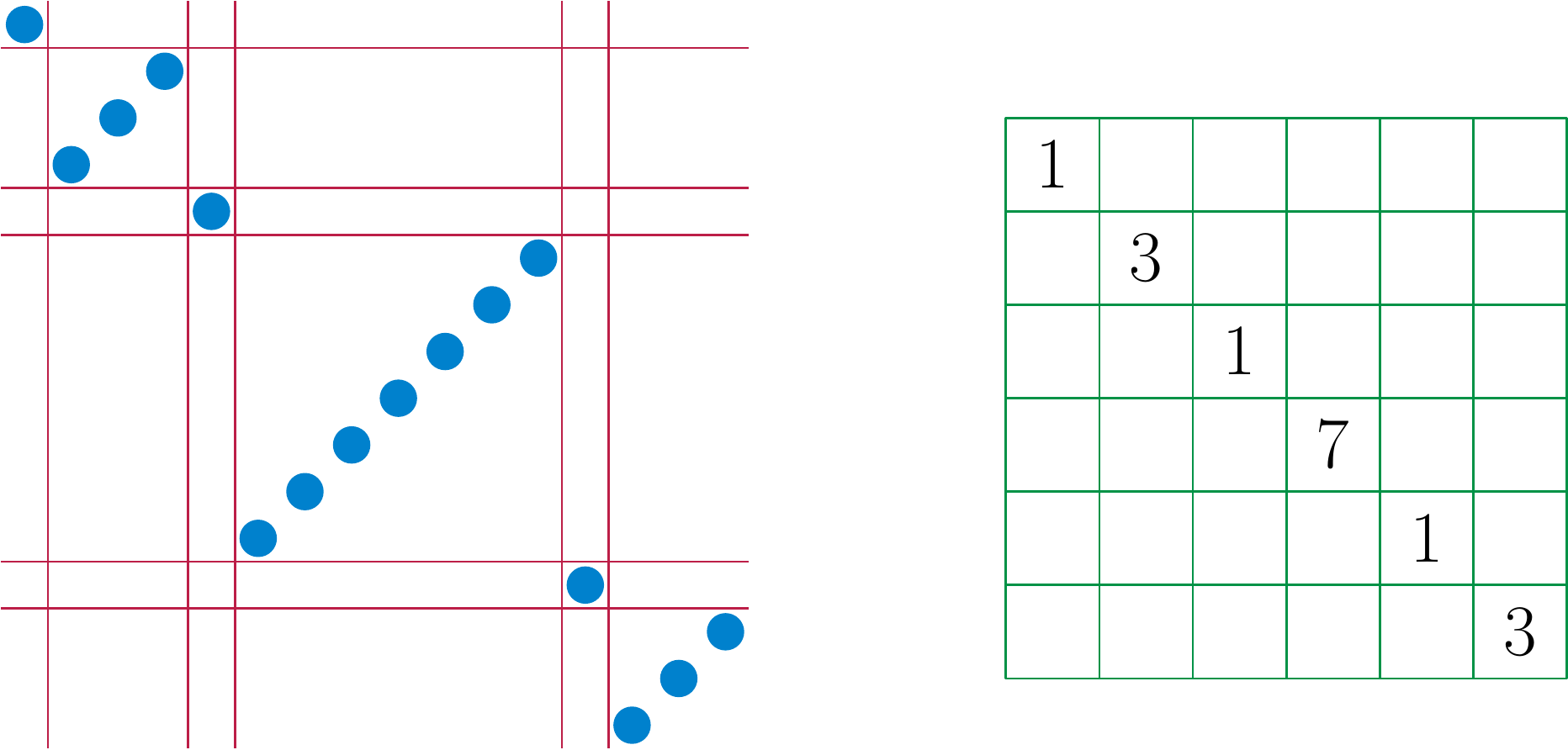}
\caption{The $S_7(213,132)$-superpattern}
\label{fig:213-132}
\end{figure}

\begin{theorem}
Every $S_n(213,132)$-superpattern has size $\Omega(n\log n)$.
\end{theorem}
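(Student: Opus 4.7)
The plan is to find an explicit family $F\subseteq S_n(213,132)$ of permutations and show that their copies within any superpattern $\sigma$ must pairwise overlap in few elements, so that an inclusion--exclusion count yields $|\sigma|=\Omega(n\log n)$.  As the paper already notes, pure counting arguments cannot give more than a linear lower bound here (because $|S_n(213,132)|=2^{n-1}$ grows only singly exponentially), so any proof of the $n\log n$ bound must exploit structural properties of copies rather than just their number.

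A natural family to try is the set of $\Theta(\log n)$ layered permutations at different scales: for each power of two $k$ between $1$ and $n$, take $\pi_k\in S_n(213,132)$ whose composition has $n/k$ equal parts of size $k$.  An embedding $E_k\subseteq[L]$ of $\pi_k$ in $\sigma$ decomposes into $n/k$ pairwise-disjoint ascending subsequences of length $k$, arranged in a staircase: each later subsequence lies in strictly later positions and strictly smaller values than the previous.  These embeddings are forced by the assumption that $\sigma$ contains $\pi_k$ as a pattern, and every one of them contributes $n$ elements to $\sigma$.

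The heart of the argument, and the step I would spend the most effort on, is a pairwise overlap bound: for distinct powers of two $k<k'$, I want to show $|E_k\cap E_{k'}|$ is small compared to $n$.  The intuition is that an element of $\sigma$ lying in both embeddings plays structurally incompatible roles (a length-$k$ block in $E_k$ versus a longer length-$k'$ block in $E_{k'}$), and the staircase rectangles of the two embeddings in position--value space can only intersect in a limited combinatorial way.  The main obstacle is that a naive bound gives only $|E_k\cap E_{k'}|\le n$, which is useless; a successful proof must carefully track how the two monotone chains of rectangles interact, and may require choosing the embeddings adversarially (or refining $F$, for instance by mixing compositions beyond the power-of-two scales) so that the overlap is provably $O(n/\log n)$ per pair.

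Once such an overlap bound is in hand, the conclusion follows from a single inclusion--exclusion calculation:
\[
|\sigma|\;\geq\;\Bigl|\bigcup_k E_k\Bigr|\;\geq\;\sum_k |E_k|\;-\;\sum_{k<k'}|E_k\cap E_{k'}|,
\]
where the first sum is $\Theta(n\log n)$ and the second, with $\binom{\Theta(\log n)}{2}$ pairs each contributing $O(n/\log n)$, is strictly dominated, leaving $|\sigma|=\Omega(n\log n)$.  This argument is purely combinatorial and applies uniformly to all superpatterns, not just the $\{213,132\}$-avoiding ones, matching the $n\log_2 n+n$ upper bound from the preceding theorem up to a constant factor.
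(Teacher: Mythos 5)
You identified the right family of patterns (layered permutations at $\log n$ geometric scales, exactly what the paper uses) and the right high-level strategy (exploit structure of overlapping copies, not mere counting), but the counting step you propose has a gap you yourself flag and which in fact cannot be filled as stated. The pairwise overlap $|E_k\cap E_{k'}|$ is \emph{not} $O(n/\log n)$ for adjacent scales: if $E_k$ has $2^i$ ascending runs of length $n/2^i$ and $E_{k'}$ has $2^{i+1}$ runs of length $n/2^{i+1}$, the best structural bound one can prove is that each run of one embedding meets at most one run of the other (since an ascending chain can cross a descending staircase in only one step), which gives $|E_k\cap E_{k'}|\le 2^i\cdot n/2^{i+1}=n/2$. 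Summing over all pairs, $\sum_{i<j} 2^i\cdot n/2^j\approx n\log_2 n$, which exactly cancels $\sum_k|E_k|\approx n\log_2 n$ in your inclusion--exclusion, leaving nothing. So the inclusion--exclusion route is a dead end even with the best available overlap bounds.

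The paper sidesteps this by never bounding overlap \emph{sizes} at all. It observes the same structural fact --- one ascending run of one pattern can intersect at most one ascending run of another --- but uses it to bound the \emph{number of runs that get blocked}, not the number of shared elements. Concretely: process scales $i=0,1,2,\dots$ in order; at scale $i$ there are $2^i$ runs, and the $2^{i-1}$ runs marked so far can collectively block at most $2^{i-1}$ of them, so one can greedily mark $2^{i-1}$ fresh runs that are \emph{entirely disjoint} from everything marked before. The marked runs are then pairwise disjoint by construction, and their total length is $n+\sum_{i\ge 1} 2^{i-1}\cdot(n/2^i)=n+\tfrac{n}{2}\log_2 n$. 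No subtraction is ever needed. If you want to salvage your approach, this greedy marking is the missing idea: replace the inclusion--exclusion sum with a direct construction of a disjoint subfamily of ascending runs, keeping only half the runs at each scale.
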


\begin{proof}
For simplicity, we assume that $n$ is a power of two; the result extends to arbitrary $n$ by rounding $n$ down to the next smaller power of two.
Let $\pi$ be a $S_n(213,132)$-superpattern and
consider the patterns in $S_n(213,132)$ whose chessboard representations have on their diagonals the sequences $(n)$, $(n/2,n/2)$, $(n/4,n/4,n/4,n/4)$, etc. Observe that each sequence length and the values within the sequences are all powers of two, and that each sequence value is $n$ divided by the sequence length. Since $\pi$ is a superpattern, it must be possible to place each of these patterns somewhere in $\pi$; fix a choice of how to place each pattern. Observe that, in this placement, each ascending subsequence of one of these patterns can only have a nonempty intersection with at most one ascending subsequence of another.

Consider the values $i=0,1,2,3,\dots$ in ascending order by $i$.
The $i$th of the set of patterns described above consists of $2^i$ ascending sequences of length $n/2^i$. For $i=0$ we mark the single ascending sequence of length $n$.
For $i>0$, we mark $2^{i-1}$ of these ascending sequences, among the ones whose placement is disjoint from all previously marked sequences. This is always possible because there are exactly $1+\sum_{j=1}^{i-1}2^{j-1}=2^{i-1}$ previously marked sequences and each of them can prevent only one of the $2^i$ length-$(n/2^i)$ ascending subsequences from being marked.

The total number of distinct elements of $\pi$ in the marked subsequences is
\[n+\frac{n}{2}+2\,\frac{n}{4}+4\,\frac{n}{8}+\dots=\frac{n}{2}\log_2 n +n,\]
so $\pi$ must have at least that many elements in total.
\end{proof}

\begin{corollary}
Every superpattern for $S_n(213)$ must  have size $\Omega(n\log n)$.
\end{corollary}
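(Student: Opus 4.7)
The plan is to derive this corollary directly from the preceding theorem by a simple set-inclusion argument, with essentially no additional work required.

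First I would observe that $S_n(213,132) \subseteq S_n(213)$, because any permutation that avoids both $213$ and $132$ in particular avoids $213$. Next I would unpack the definition of a $P$-superpattern: a permutation $\sigma$ is a $P$-superpattern if every $\pi \in P$ occurs as a pattern in $\sigma$, with no requirement that $\sigma$ itself lie in $P$. Consequently, if $\sigma$ is an $S_n(213)$-superpattern, then $\sigma$ contains every length-$n$ permutation avoiding $213$, and in particular every length-$n$ permutation avoiding both $213$ and $132$. Hence $\sigma$ is also an $S_n(213,132)$-superpattern.

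Applying the preceding theorem, which gives an $\Omega(n\log n)$ lower bound on the length of any $S_n(213,132)$-superpattern, I conclude that $|\sigma| = \Omega(n\log n)$, as claimed.

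There is no real obstacle here: the argument is a one-line monotonicity observation about superpattern sizes under subclass inclusion, and all the work is already done in the $\Omega(n\log n)$ lower bound for the smaller class $S_n(213,132)$ established in Section~\ref{sec:213-132}.
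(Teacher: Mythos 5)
Your argument is exactly the one the paper intends: the corollary follows immediately from the theorem because $S_n(213,132)\subseteq S_n(213)$, so any $S_n(213)$-superpattern is automatically an $S_n(213,132)$-superpattern and inherits the $\Omega(n\log n)$ lower bound. The paper leaves this monotonicity step implicit (stating the result as a corollary with no written proof), and your write-up correctly fills it in.
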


\subsection{Superpatterns for $S_n(213, 3412)$}
\label{sec:213-3412}

Our next result depends on a structural characterization of the $213$-avoiding permutations in terms of their chessboard representations.

\begin{definition}
The \emph{chessboard graph} of a permutation is a directed acyclic graph that has a vertex for each nonzero square in the chessboard representation of the permutation, and that has an edge between each pair of nonzero squares if they belong to the same row or column of the representation and there is no nonzero square between them. These edges are oriented upwards for squares of the same column and rightwards for squares of the same row. 
\end{definition}

\begin{figure}[ht]
\centering
\includegraphics[width=0.9\textwidth]{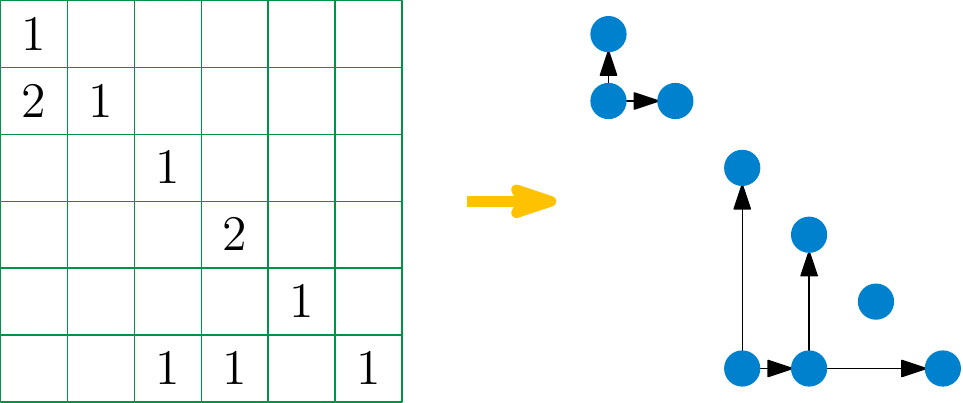}
\caption{The chessboard of a permutation and its corresponding chessboard graph.}
\label{fig:chessboard-graph}
\end{figure}

\begin{definition}
We say that a chessboard graph is a \emph{directed plane forest} if every vertex has at most one incoming neighbor, and the embedding of the graph with respect to the chessboard (i.e., embedding each vertex at the corresponding nonzero square of the chessboard) has no edge crossings.
\end{definition}

\begin{lemma} \label{lem:213-characterization}
A permutation $\pi$ is $213$-avoiding if and only if (1) its chessboard representation has the same number of rows and columns, (2) its chessboard representation has nonzeros in each square of the diagonal from top left to bottom right, and (3) its chessboard graph is a directed plane forest.
\end{lemma}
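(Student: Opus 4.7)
The plan is to prove the biconditional by treating the two directions separately.

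For the forward direction ($\pi$ 213-avoiding $\implies$ (1), (2), (3)), I would prove (3) locally and (1), (2) by induction. For (3)(a), the idea is to argue contrapositively at each cell: if some cell $(i,j)$ had two incoming neighbors---that is, a nonempty cell $(i',j)$ with $i'<i$ and a nonempty cell $(i,j')$ with $j'<j$---then picking one point from each of these three cells, say $(p_1,v_1),(p_2,v_2),(p_3,v_3)$, and using the ascending-run structure of column $i$ and row $j$ yields positions $p_1<p_3<p_2$ carrying values $v_3<v_1<v_2$, which is exactly a 213 pattern in $\pi$. A nearly identical argument handles (3)(b): three points chosen from three nonempty endpoint cells of a pair of crossing edges---specifically from $(i_1,j), (i,j_1), (i,j_2)$---give the requisite 213 pattern. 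For (1) and (2), I would induct on $|\pi|$ using the standard decomposition $\pi=B_L\cdot 1\cdot B_R$ of a 213-avoiding permutation, where all values of $B_L$ strictly exceed all values of $B_R$ and both factors are themselves 213-avoiding. The chessboard of $\pi$ assembles from those of $B_L$ (in the upper-left block) and $B_R$ (in the lower-right block), with the value 1 extending $B_R$'s bottom-left corner one cell down and one cell leftward. Both the matching row/column counts and the nonzero anti-diagonal entries propagate cleanly through this merge.

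For the converse ((1), (2), (3) $\implies$ 213-avoiding), I would argue by contrapositive. Suppose $\pi$ contains a 213 pattern at positions $a<b<c$ with values $v_b<v_a<v_c$; let $(\alpha,\rho), (\beta,\sigma), (\gamma,\tau)$ denote the cells holding $v_a,v_b,v_c$, so that $\alpha<\beta\le\gamma$ and $\sigma<\rho\le\tau$. The analysis hinges on the corner cell $(\beta,\rho)$. If $(\beta,\rho)$ is nonempty, it inherits $(\alpha,\rho)$ as a predecessor in its row and $(\beta,\sigma)$ as a predecessor in its column, hence two incoming edges---violating (3)(a). If $(\beta,\rho)$ is empty, let $N$ be the common chessboard dimension from (1), and split on $\beta+\rho$ versus $N$. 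When $\beta+\rho\le N$, (2) supplies anti-diagonal cells in row $\rho$ past column $\beta$ and in column $\beta$ past row $\rho$, so the rightward edge out of $(\alpha,\rho)$ in row $\rho$ and the upward edge out of $(\beta,\sigma)$ in column $\beta$ must cross at the empty corner, violating (3)(b). The boundary $\beta+\rho=N+1$ places $(\beta,\rho)$ on the anti-diagonal, forcing it nonempty by (2) and reducing to the first alternative. When $\beta+\rho\ge N+2$, the cell $(\gamma,\tau)$ cannot equal the empty $(\beta,\rho)$, so $\gamma>\beta$ or $\tau>\rho$; the strict inequality lifts $\gamma+\rho$ or $\beta+\tau$ to at least $N+3$, and (2) then supplies a cell below $(\gamma,\rho)$ in its column or to the left of $(\beta,\tau)$ in its row, yielding again a cell with two incoming edges.

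The main technical obstacle will be the last sub-case of the converse, $\beta+\rho\ge N+2$, where the anti-diagonal cells in row $\rho$ and column $\beta$ no longer sit past the empty corner and the witness forcing (3) to fail must instead be located via the third 213-cell $(\gamma,\tau)$. The key observation is that the emptiness of $(\beta,\rho)$ forces the strict inequality $\gamma>\beta$ or $\tau>\rho$, which provides precisely the slack needed for (2) to furnish the witness in a different row or column than one might initially expect.
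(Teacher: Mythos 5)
Your forward direction takes a genuinely different route for (1) and (2): you induct on the decomposition $\pi = B_L \cdot 1 \cdot B_R$, while the paper instead fixes a carefully chosen longest decreasing subsequence $L$ (chosen so no index can be pushed rightward) and shows that $L$ meets every row and every column exactly once and sits on the anti-diagonal. Both work; the inductive route is more elementary but needs care with the degenerate cases ($B_L$ or $B_R$ empty) and with the merge itself---the value $1$ lands in the bottom-left cell of $B_R$'s block and increments it, rather than spawning a fresh row and column as your ``one cell down and one cell leftward'' phrasing suggests, though the row and column counts still come out right. Your local extraction of a $213$ pattern from a two-in-degree cell or a crossing pair of edges is essentially what the paper does for (3).

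The converse is where you have a real gap. The paper's first move is to observe that (1)--(3) already force every cell strictly above the anti-diagonal to be empty: any nonzero cell $(i,j)$ with $i+j>N+1$ would have the nonzero diagonal cells $(N+1-j,\,j)$ to its left and $(i,\,N+1-i)$ below it, giving it in-degree two. Once this is in hand, the cell $(\gamma,\tau)$ of the ``3'' satisfies $\gamma+\tau\le N+1$, and since $\beta\le\gamma$ and $\rho\le\tau$ the corner obeys $\beta+\rho\le N+1$---so your third sub-case $\beta+\rho\ge N+2$ simply never occurs. As written, that sub-case does not close: you identify $(\gamma,\rho)$ or $(\beta,\tau)$ as the cell with two incoming edges, but you never establish that it is nonzero, and if it is empty the claim fails. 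Even staying in the spirit of your argument, the correct witness there is $(\gamma,\tau)$: it is nonzero, and $\gamma+\tau\ge\beta+\rho\ge N+2$ places it above the anti-diagonal, so it already has in-degree two. Prepending the above-diagonal observation both repairs this and collapses your remaining two sub-cases into the paper's single dichotomy: the corner $(\beta,\rho)$ is at or below the diagonal, and either it is nonzero (giving in-degree two, from $(\alpha,\rho)$ on its left and $(\beta,\sigma)$ below) or it is empty, in which case the rightward path in row $\rho$ from $(\alpha,\rho)$ and the upward path in column $\beta$ from $(\beta,\sigma)$ both extend at least to the diagonal, hence past the corner, and must cross there.
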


\begin{proof}
We first verify that a permutation meeting these conditions can have no $213$ pattern.
In a chessboard representation meeting these conditions, the squares above the diagonal must all be zero, for a minimal nonzero square above the diagonal would be the target of two incoming edges from the diagonal, violating the directed forest condition. Additionally, every vertex of the chessboard graph that is below the diagonal must have two outgoing edges, one upward and one rightward, because its row and column have at least one more nonzero each, on the diagonal itself. Suppose a $213$ pattern existed, then the nonzero squares of the chessboard corresponding to the $2$ and $1$ positions in this pattern would necessarily be to the left and below the nonzero square corresponding to the $3$ position. The nonzero square of the $3$ position is at or below the diagonal, so a nonzero square must exist on the same row as the $2$ position and on a column strictly to the left of the $1$ position whose rightward path reaches the column of the $3$ position, and a nonzero square must exist on the same column as the $1$ position and on a row strictly below the $2$ position whose upward path reaches the row of the $3$ position. These two paths must cross or meet at a nonzero square, violating the directed plane forest condition. Therefore, a $213$ pattern cannot exist.

It remains to show that a $213$-avoiding permutation necessarily has a chessboard representation that meets these conditions. Let $\pi$ be such a permutation, and let $L$ be the set of indexes of a longest decreasing subsequence of $\pi$. Among all such subsequences, choose $L$ to be maximal, in the sense that there is no index $i\in L$ and $i'\notin L$ with $i'>i$ and $L\setminus\{i\}\cup\{i'\}$ an equally long decreasing subsequence. With this choice, there can be no $i'>i$ with $\pi(i')>\pi(i)$, for such an~$i'$ would either violate the assumption of maximality or would form a $213$ pattern together with two members of $L$. Additionally, every column of $\pi$ must be represented in $L$, for if the maximum value in a column did not belong to $L$ then it, the next element in the permutation, and the next element in $L$ would together form a $213$ pattern. By a symmetric argument, every row of $\pi$ must be represented in $L$. Since $L$ is a decreasing sequence, each element of $L$ is in a distinct row and distinct column. Therefore, the numbers of rows and columns both equal $L$. Moreover, because $L$ is maximal, each diagonal square of the chessboard representation contains a member of $L$, so the diagonal is nonzero. The chessboard graph of $\pi$ must be a directed plane forest, because if not it would contain two edges that either meet or cross, and in either case the elements of the chessboard squares at the endpoints of these edges contain a $213$ pattern.
\end{proof}

The above proof also gives us the following:
\begin{lemma}
The chessboard representation of a $213$-avoiding permutation has zeros in every square above the diagonal.
\end{lemma}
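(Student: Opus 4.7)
\smallskip\noindent\textbf{Proof proposal.}
The plan is to extract this as a direct consequence of the structural characterization in Lemma~\ref{lem:213-characterization}. That lemma guarantees, for any $213$-avoiding permutation $\pi$, a chessboard representation with equal numbers $n$ of rows and columns, a nonzero entry on every square of the diagonal from top-left to bottom-right (i.e.\ every square with column index $i$ and row index $j$ satisfying $i+j=n+1$), and a chessboard graph that is a directed plane forest---in particular, every vertex has at most one incoming neighbor. The forest condition will supply the contradiction.

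Suppose for contradiction that some nonzero square $X$ lies strictly above the diagonal, say in column $i$ and row $j$ with $i+j>n+1$. In row $j$ the diagonal square sits in column $n-j+1<i$ and is nonzero, so there is at least one nonzero square strictly to the left of $X$ in row $j$; letting $Y$ be the nearest such square, the definition of the chessboard graph provides a rightward edge from $Y$ into $X$. Symmetrically, in column $i$ the diagonal square sits in row $n-i+1<j$ and is nonzero, so letting $Z$ be the nearest nonzero square of column $i$ strictly below $X$, the chessboard graph contains an upward edge from $Z$ into $X$.

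Then $X$ has at least two incoming edges in the chessboard graph, contradicting the directed plane forest condition. Hence no nonzero square can lie strictly above the diagonal, which is exactly the claim of the lemma.

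There is no real obstacle here: the argument is essentially the parenthetical observation already appearing inside the proof of Lemma~\ref{lem:213-characterization}, and the only point one has to verify is that the diagonal squares in the same row and column as $X$ are positioned, respectively, strictly to the left of and strictly below $X$, which follows immediately from $i>n-j+1$ and $j>n-i+1$. Note also that minimality of $X$ is not needed---any nonzero square above the diagonal already yields two incoming edges.
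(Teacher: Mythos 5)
Your proposal is correct and follows essentially the same route as the paper, which simply extracts this statement from the proof of Lemma~\ref{lem:213-characterization} (``a minimal nonzero square above the diagonal would be the target of two incoming edges from the diagonal, violating the directed forest condition''). Your observation that minimality of $X$ is unnecessary is a genuine, if small, simplification: any nonzero square strictly above the diagonal already has a nonzero square (the diagonal square) to its left in the same row and below it in the same column, hence a nearest nonzero square in each direction, hence two incoming edges; the paper's use of a minimal square is just a shortcut to identify those incoming edges as coming directly from the diagonal, which the conclusion does not require.
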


We now characterize the chessboard representation and chessboard graph of $\{213,3412\}$-avoiding permutations. The characterization for the chessboard representation will be used later to give a superpattern for this class of permutations.

\begin{lemma} \label{lem:213-3412-characterization}
A permutation is $\{213,3412\}$-avoiding if and only if its chessboard representation is a square, where the diagonal elements are all ones except for a special square which may be greater than one, and the nonzero off-diagonal elements plus the special diagonal element (if it exists) form a sequence of squares whose two coordinates are monotonically increasing.
\end{lemma}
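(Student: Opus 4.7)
The plan is to verify both directions of the equivalence. Throughout, the key technical fact is that within a single chessboard cell of a 213-avoiding permutation the elements occupy consecutive positions and consecutive values, both in ascending order, and that elements of distinct cells sharing a column (respectively, row) are ordered in position and value by the row (respectively, column) coordinates of the cells.

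For the forward direction, assume $\pi$ is $\{213,3412\}$-avoiding. By Lemma~\ref{lem:213-characterization} the chessboard is $k\times k$, the diagonal is entirely nonzero, and the chessboard graph is a directed plane forest, so it remains to extract the two new conditions from 3412-avoidance. First I would show that at most one diagonal cell has multiplicity greater than one: two diagonal cells in columns $a_1<a_2$ each containing at least two elements would yield four positions $p_1<p_2<p_3<p_4$ (in column order) whose values fall in two rows, the higher row coming from column $a_1$, producing a 3412 pattern. Next I would show that the off-diagonal cells together with the special diagonal cell (if any) are pairwise comparable in the coordinate product order. Given an incomparable pair $(a_1,b_1),(a_2,b_2)$ with $a_1<a_2$ and $b_1>b_2$, the column-based 3412 witness takes $p_1\in(a_1,b_1)$, $p_3\in(a_2,b_2)$, $p_2$ in the diagonal cell of column $a_1$, and $p_4$ in the diagonal cell of column $a_2$; the only nontrivial row inequality $\pi(p_4)<\pi(p_1)$ requires $a_2+b_1>k+1$. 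This inequality is automatic under 213-avoidance, because if instead $a_2+b_1\le k+1$ then one element from each of $(a_1,b_1)$, $(a_2,b_2)$, and the diagonal cell of column $a_2$ (taken in column order) already realize a 213 pattern, contradicting 213-avoidance.

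For the backward direction, assume the chessboard has the described staircase structure. I would invoke Lemma~\ref{lem:213-characterization} for 213-avoidance by verifying its third hypothesis: the chain ordering of the off-diagonal plus special cells forces every cell to have at most one nonzero neighbor to its lower-left in its row or column (so in-degrees are $\le 1$), and rules out the anti-monotone pair of cells that would be needed to create a crossing between a row-edge and a column-edge of the chessboard graph. For 3412-avoidance, suppose positions $p_1<p_2<p_3<p_4$ formed a 3412 pattern; mapping each element to its chessboard cell and applying the value inequalities $\pi(p_3)<\pi(p_4)<\pi(p_1)<\pi(p_2)$ in conjunction with the column/row ordering forces either two diagonal cells each of multiplicity at least two (contradicting the at-most-one-special condition) or an incomparable pair among the off-diagonal plus special cells (contradicting the chain condition), yielding the desired contradiction.

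The hardest step is the forward argument that incomparable pairs cannot exist. The clean version relies crucially on the preliminary observation that 213-avoidance forces any incomparable pair of off-diagonal cells to satisfy $a_2+b_1>k+1$; this is what guarantees the row $k+1-a_2$ sits strictly below row $b_1$, so that the diagonal cells of the two relevant columns supply the precise value separation needed for a 3412 witness. Without this observation one would have to mount a delicate separate case analysis in the complementary regime $a_2+b_1\le k+1$.
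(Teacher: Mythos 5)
Your proof is correct and follows essentially the same strategy as the paper's. In the forward direction you take the same four cells the paper uses (the two anti-monotone cells and the diagonal cells in their columns) and split on whether $a_2 + b_1 \le k+1$ or not, obtaining a 213 or a 3412 witness respectively --- this is exactly the paper's ``either we have a 213 pattern\ldots or we have a 3412 pattern,'' with the case boundary made explicit. In the backward direction you and the paper both apply Lemma~\ref{lem:213-characterization} for 213-avoidance; for 3412-avoidance the paper observes that the region strictly right of and below the cell of the ``3'' contains only singleton diagonal cells (which carry no ascending pair), while you phrase the same fact as ``the cells of the `3' and the `1' would be incomparable chain cells.'' These are equivalent, though I'll note two small presentation points in your version: the case ``two diagonal cells each of multiplicity at least two'' is subsumed by the incomparable-pair case (two special diagonal cells are automatically incomparable), and the claim that 3412 always forces one of these two outcomes deserves the short verification you leave implicit --- namely that the cell of the ``1'' cannot be a singleton diagonal cell, since then the cell of the ``2'' would be forced above the diagonal.
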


\begin{proof}
Let $\pi$ be a $\{213,3412\}$-avoiding permutation. By Lemma~\ref{lem:213-characterization}, its chessboard representation is square with nonzero diagonal elements. Call a diagonal element \emph{special} if it is greater than one. Consider the set $S$ consisting of the off-diagonal squares with nonzero elements and the special diagonal squares. Suppose there are two squares in $S$, one at column $i$ and one at column $j > i$, such that the square at column~$j$ is below the one at column~$i$. Then either we have a $213$ pattern formed by these two squares plus the diagonal square at column~$j$, or we have a $3412$ pattern formed by the two squares and the diagonal squares in their columns (note that one of the squares may be special and contributes two elements to the pattern). In either case, we get a contradiction, so $S$ must consist of a single sequence of squares with increasing coordinates. Since the second coordinates of diagonal elements are decreasing, $S$ contains at most one diagonal square. This shows that $\pi$ satisfies all the conditions in the lemma.

Conversely, let $\pi$ be a permutation whose chessboard representation satisfies the conditions in the lemma. By Lemma~\ref{lem:213-characterization}, $\pi$ has no $213$ patterns. Moreover, to embed a $3412$ pattern in $\pi$, the square corresponding to the $3$ must be an off-diagonal square or the special diagonal square with value greater than one. In either case, the only elements to the right and below the square corresponding to the $3$ are diagonal squares with value one, which can have no $12$ patterns. This shows that $\pi$ has no $3412$ patterns, and so is $\{213,3412\}$-avoiding. This completes the proof.
\end{proof}

\begin{lemma}
The chessboard graph of a $\{213,3412\}$-avoiding permutation is a disjoint union of caterpillars (trees in which there is a single path that contains all non-leaf nodes).
\end{lemma}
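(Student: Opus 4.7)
The plan is to combine the structural description of Lemma~\ref{lem:213-3412-characterization} with the directed-plane-forest property of Lemma~\ref{lem:213-characterization}. The latter already gives that the chessboard graph is a forest, so it suffices to show that each connected component is a caterpillar. I would classify the nonzero squares into \emph{regular} (diagonal with value one), \emph{special} (the at-most-one diagonal square with value greater than one), and \emph{off-diagonal} squares, and call the last two classes \emph{extras}. Lemma~\ref{lem:213-3412-characterization} lists all extras in a single sequence $(a_1,b_1),\dots,(a_m,b_m)$ whose two coordinates are both monotonically non-decreasing.

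As a warm-up I would observe that every regular diagonal square is a leaf or an isolated vertex: because the chessboard of a $213$-avoiding permutation has zeros above the diagonal, there is no nonzero square above it in its column or to its right in its row, so it has no outgoing edges, and the forest indegree bound then forces its total degree to be at most one.

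Next I would show that within each connected component the extras form a simple path, which will serve as the caterpillar's spine. Two consecutive extras $e_k,e_{k+1}$ in the monotonic sequence share a row or column exactly when $a_k=a_{k+1}$ or $b_k=b_{k+1}$, in which case they are joined by a chessboard-graph edge in the shared row or column; otherwise their column-diagonals and row-diagonals are all distinct, and since each regular diagonal has degree at most one, no diagonal can bridge them either, so $e_k$ and $e_{k+1}$ lie in different components. Consequently the extras inside a single component form a contiguous sub-interval of the monotonic sequence, strung along rows or columns into a simple path; within any fixed column the $b$-coordinate is strictly increasing, so edges chain through consecutive-in-column extras rather than branching, and symmetrically for rows.

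Finally, the remaining vertices are regular diagonals attached as leaves: each extra's column-diagonal and row-diagonal attach only to the topmost, respectively rightmost, extra of their column or row. A short monotonicity argument rules out the case where a single regular diagonal $(i,n{+}1{-}i)$ could receive edges from both its column and its row, since that would require extras at $(i,b)$ and $(a,n{+}1{-}i)$ with $a<i$ and $b<n{+}1{-}i$, violating joint monotonicity of the sequence. Hence each regular diagonal is attached to at most one spine vertex as a leaf, and each component is a spine of extras with pendant regular diagonal leaves---exactly a caterpillar. The delicate point I anticipate is handling the special diagonal: it is itself an extra and may sit in the interior of the spine rather than appear as a leaf, so leaf and spine roles must be assigned consistently, but treating the special square on the same footing as the off-diagonal extras makes the decomposition go through uniformly.
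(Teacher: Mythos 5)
Your proof is correct and takes essentially the same route as the paper: both use the monotone-coordinate sequence from Lemma~\ref{lem:213-3412-characterization} as the caterpillar spine, with the regular diagonal squares serving as pendant leaves; the paper merely compresses this into a few sentences, while you spell out the leaf/spine classification and the degree bounds. One small tightening you could make: since the chessboard is zero above the diagonal and the sequence of extras has both coordinates non-decreasing, the special diagonal square (if any) must in fact be the \emph{last} element of the sequence, so it is always an endpoint of the spine rather than an interior vertex, which removes the last ``delicate point'' you flagged.
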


\begin{proof}
In the chessboard representation, the neighbors of each off-diagonal square are either diagonal squares or the next element in the increasing sequence of squares. The diagonals squares are leaves of the chessboard graph, so each nonleaf node can have at most one other nonleaf neighbor. This means that the nonleaf nodes induce paths in the forest, as needed.
\end{proof}

\begin{figure}[t]
\centering
\includegraphics[width=0.9\textwidth]{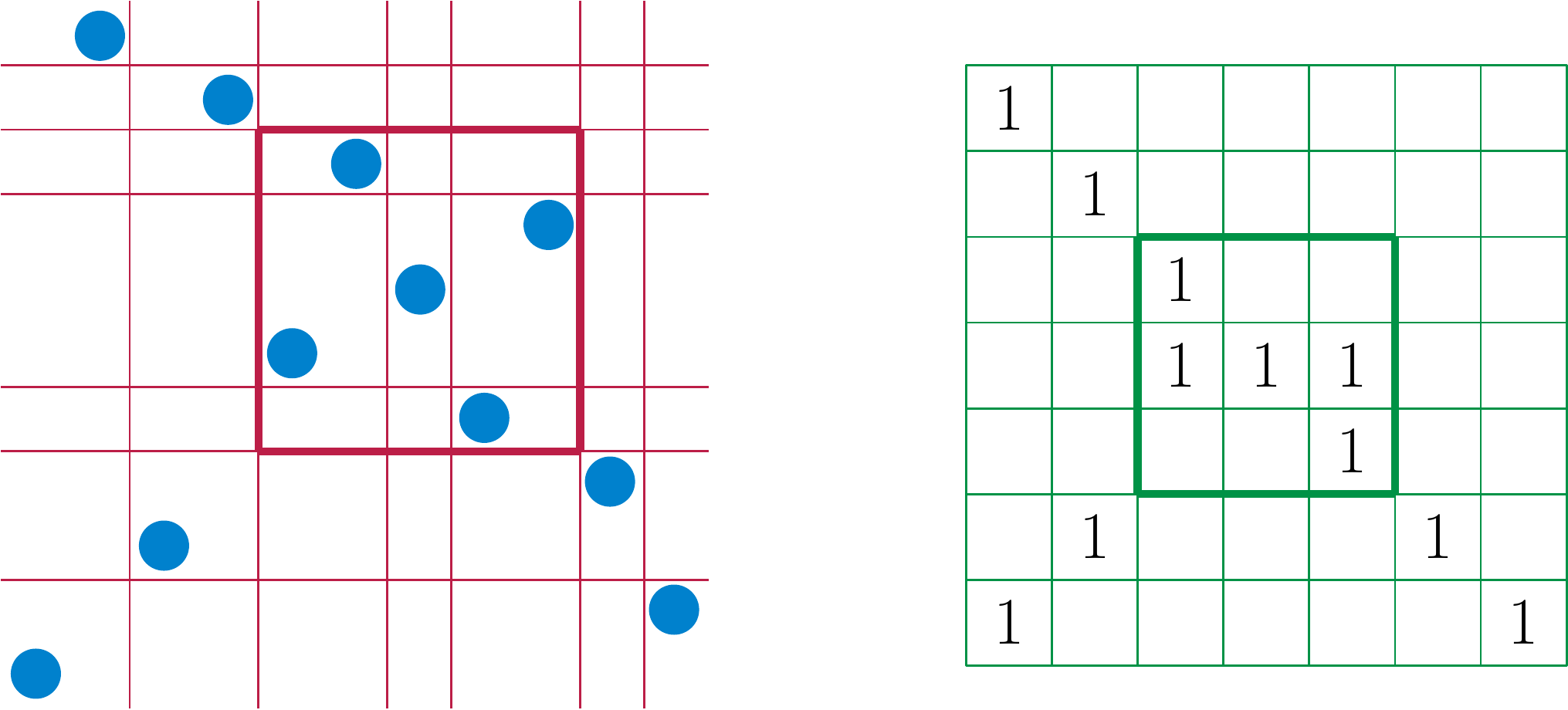}
\caption{The $S_5(213,3412)$-superpattern. The pattern inside the bolded box is the base case $S_3$-superpattern.}
\label{fig:213-3412}
\end{figure}

\begin{theorem}\label{thm:213-3412-superpattern}
$S_n(213, 3412)$ has a superpattern of length $3n-4$, for $n \ge 3$.
\end{theorem}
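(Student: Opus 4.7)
The plan is to construct the superpattern $\sigma_n$ recursively, with a base case for $n = 3$, and to show that every $\pi \in S_n(213, 3412)$ embeds into $\sigma_n$ by exploiting the chessboard characterization from Lemma~\ref{lem:213-3412-characterization}.

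First, I would explicitly describe $\sigma_n$ via its chessboard representation, following the recursive pattern illustrated in Figure~\ref{fig:213-3412}. The base case $\sigma_3$ is a specific permutation of length $5 = 3\cdot 3 - 4$ (the one shown inside the bolded box) that is directly verified by inspection to contain all five $213$-avoiding permutations of length $3$ as patterns. The inductive step constructs $\sigma_n$ from $\sigma_{n-1}$ by adding three new elements, corresponding to extending the chessboard by one more anti-diagonal cell and two more cells that extend the off-diagonal increasing sequence. This yields $|\sigma_n| = |\sigma_{n-1}| + 3$, so the length formula $3n - 4$ follows from the base case by induction.

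Next, given any $\pi \in S_n(213, 3412)$, I would produce an embedding of $\pi$ into $\sigma_n$. By Lemma~\ref{lem:213-3412-characterization}, the chessboard of $\pi$ is a $k \times k$ square for some $k \le n$, with each diagonal cell equal to $1$ except possibly one \emph{special} cell of larger value, together with a single monotonically increasing sequence of off-diagonal cells. I would proceed by induction on $n$: peel off the outermost row and column of $\pi$'s chessboard---which contributes at most three elements, distributed between the outermost diagonal cell and the outermost cell of the off-diagonal sequence---and embed these into the three most-recently-added cells of $\sigma_n$, then apply the inductive hypothesis to embed the remainder of $\pi$ into $\sigma_{n-1}$. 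The slack of three extra positions per step (compared to $\pi$'s length increment of one) gives enough room to handle the several cases of whether $\pi$'s outermost layer contributes only a diagonal element, only an off-diagonal element, or both, and whether the outermost diagonal element is the special one.

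The hard part will be the case analysis on the location and value of the special diagonal cell and on the starting column of the off-diagonal sequence. In particular, when the special cell's value substantially exceeds $1$, its extra elements must be accommodated within the auxiliary off-diagonal cells of $\sigma_n$ without disturbing the structure needed to apply induction. A secondary subtlety is verifying that the residual permutation obtained by peeling the outer layer of $\pi$ is again $\{213,3412\}$-avoiding so that the inductive hypothesis applies; this should follow from the fact that a contiguous sub-chessboard of a $\{213,3412\}$-avoiding chessboard (matching both the characterization's diagonal and off-diagonal structure) still satisfies the same conditions.
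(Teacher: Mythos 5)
Your high-level strategy—a chessboard-based recursive construction of $\sigma_n$ with $|\sigma_n|=|\sigma_{n-1}|+3$, plus an inductive embedding argument driven by Lemma~\ref{lem:213-3412-characterization}—does match the spirit of the paper's proof. But the decomposition step you propose does not work as stated, and the paper uses a different one.

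The paper classifies $\pi\in S_n(213,3412)$ into three (implicitly exhaustive) cases: $\pi_1=1$, $\pi_n=1$, or $\pi_1=n$. In each case it removes exactly \emph{one} element of $\pi$, placing it in the bottom row or leftmost column of $\sigma_n$, and recurses on the resulting length-$(n-1)$ permutation. You instead peel the entire outer row and column of $\pi$'s chessboard and claim this "contributes at most three elements." That claim is false: individual chessboard cells of a $\{213,3412\}$-avoiding permutation can hold arbitrarily many elements, and such a cell can sit in the outer layer. For example $\pi=51234$ has a $2\times2$ chessboard whose bottom-right (special) diagonal cell has value $4$, so peeling that layer removes four elements; and $\pi=123654$ has an off-diagonal cell $(1,1)$ of value $3$, so peeling its leftmost column and bottom row removes five elements. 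In such cases the removed elements cannot be packed into the three new value-$1$ cells that distinguish $\sigma_n$ from $\sigma_{n-1}$, and the induction breaks. You flag the special-cell difficulty but not the fact that ordinary off-diagonal cells can also be large, and you offer no actual mechanism to absorb the overflow—this is the genuine gap. The paper's per-element peeling (which always decreases the length by exactly one) sidesteps the issue entirely and is the argument you should be making.
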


\begin{proof}
As a base case, $25314$ is a superpattern for $S_3(213, 3412)$, because it is a superpattern more generally for $S_3$.
For larger $n$, construct the chessboard representation for the superpattern $\sigma_n$ as follows: start with a $2n-3$ by $2n-3$ chessboard, place a copy of the permutation $25314$ in the central $3$ by $3$ portion of the grid, then put ones along the remaining main diagonal and the diagonal from the bottom left to the center of the grid (Figure~\ref{fig:213-3412}).
To embed a permutation $\pi \in S_n(213, 3412)$ into $\sigma_n$, for $n>3$, there are three cases based on the characterization in Lemma~\ref{lem:213-3412-characterization}: the smallest element of $\pi$ may be its first element, it may be its last element, or the largest element of $\pi$ may be the first element. In each case, the element matching the case may be covered by the bottom row or leftmost column of $\sigma_n$, and the result follows by induction.
\end{proof}

\begin{figure}[ht]
\centering
\subfloat{\includegraphics[width=0.25\textwidth]{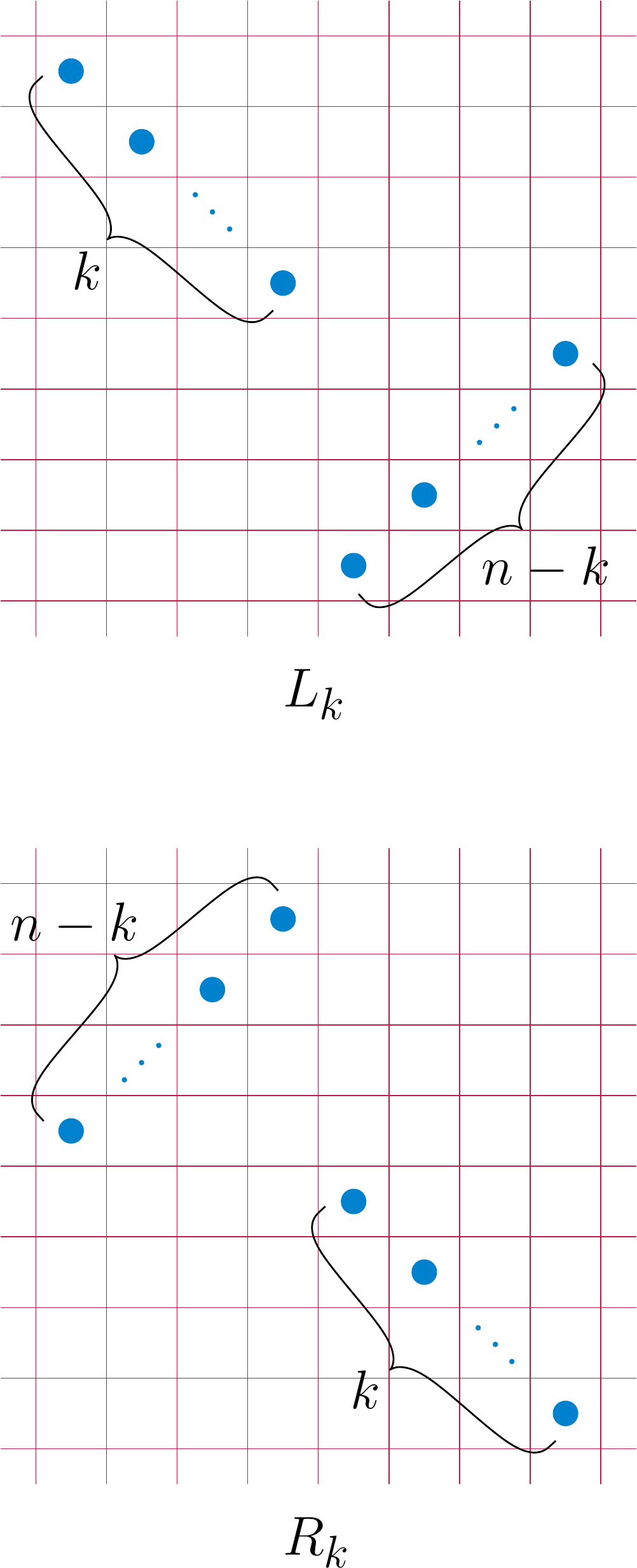}}
\hfill
\subfloat{\includegraphics[width=0.65\textwidth]{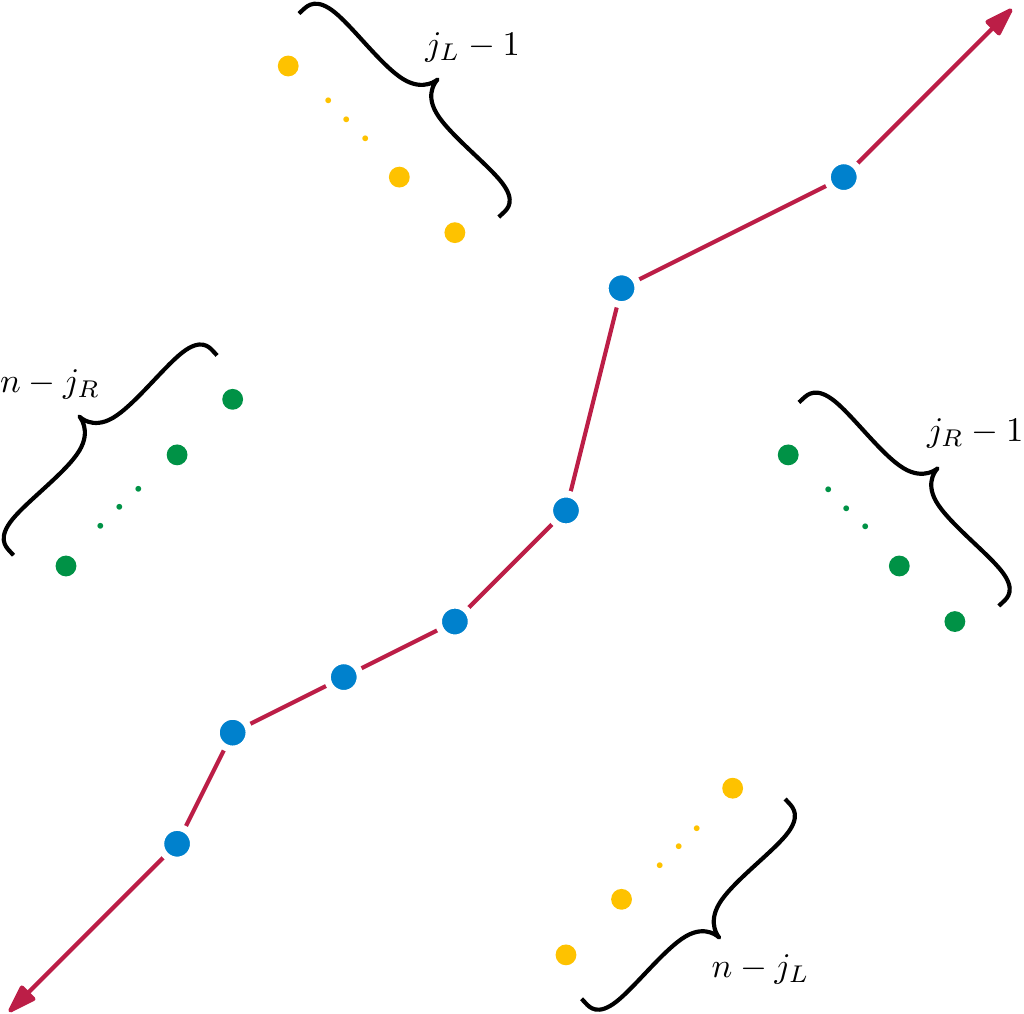}}

\caption{Left: The permutations we use to prove our lower bound for $S_n(213, 132, 3412, 4231)$-superpatterns. Right: The three sets of points we describe in the proof are shown respectively in blue, yellow, and green. The spine is shown in red.}
\label{fig:213_3412_optimality}
\end{figure}

\begin{theorem}
Every superpattern for $S_n(213,132,3412,4231)$, for $n\ge 2$, has length at least $3n-4$.
\end{theorem}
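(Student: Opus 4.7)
The plan is to fix any $S_n(213,132,3412,4231)$-superpattern $\sigma$ and identify $3n-4$ distinct positions of $\sigma$ that are forced by the embedding requirements. A structural observation extending Lemma~\ref{lem:213-3412-characterization} and the reasoning of Section~\ref{sec:213-132} is that $S_n(213,132,3412,4231)$ consists exactly of the ascending permutation $A_n=12\cdots n$, the descending permutation $D_n=n(n{-}1)\cdots 1$, and two one-parameter families $L_k=(n{-}k{+}1)(n{-}k{+}2)\cdots n\,(n{-}k)(n{-}k{-}1)\cdots 1$ and $R_k=n(n{-}1)\cdots(k{+}1)\,12\cdots k$ for $2\le k\le n-1$. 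The combinatorics is forced: $\{213,132\}$-avoidance gives a descending sequence of ascending runs, $3412$-avoidance forbids two runs of length $\ge 2$, and $4231$-avoidance forbids the unique long run from being sandwiched strictly between shorter runs. These permutations are exactly the ones illustrated in the left panel of the figure.

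I would then extract four groups of distinct points from $\sigma$, matching the coloring in the figure. Embedding $A_n$ produces an ascending subsequence $s_1,\dots,s_n$ of length $n$, the spine (red). Embedding $D_n$ produces a descending subsequence of length $n$, which shares at most one point with the spine and so contributes at least $n-1$ off-spine blue points. For the yellow and green groups, I would embed $R_k$ and $L_k$ for each $2\le k\le n-1$ and track how each embedding interacts with the spine. The key geometric observation is this: if an embedding of $R_k$ reuses a spine suffix of length $k$ as its ascending tail, then its descending head of length $n-k$ must lie strictly above and strictly to the left of that suffix, and its leftmost element is a point that can lie neither on the spine nor among the blue points (the blue points come from a single fixed $D_n$ embedding, whose descending shape is incompatible with accommodating all the different $R_k$ heads simultaneously). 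By symmetry, each $L_k$ produces a new green point strictly below and to the right of the spine.

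A careful accounting across the $n-2$ values of $k$, separating yellow points by the horizontal strip determined by which spine element the $R_k$ tail begins at (and symmetrically separating the green points into vertical strips), yields at least $n-3$ distinct new points in the yellow and green groups combined, for a grand total of $n+(n-1)+(n-3)=3n-4$. The main obstacle is this final accounting: one must verify that the yellow points produced by distinct values of $k$ cannot coincide, and that at most a bounded number of them coincide with blue points. The natural route is to observe that if two different $R_k$ and $R_{k'}$ embeddings shared their new yellow point, then concatenating the spine suffixes they reuse with that shared point would produce an ascending subsequence longer than $n$, contradicting the choice of $A_n$ as an embedded pattern; the same argument applies to the green group, and the $-3$ in the final tally reflects the small number of boundary coincidences with the blue $D_n$ embedding that the proof must allow.
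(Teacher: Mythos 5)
Your setup matches the paper's (same three families of permutations, same ``spine'' drawn from an embedding of the identity), but the core counting argument is different from the paper's and, as written, does not close. The paper does \emph{not} try to extract one new point per $L_k$ and per $R_k$; instead it finds the smallest index~$i$ at which the decreasing part of the embedded $L_i$ first touches or crosses the spine, and then takes the $i-1$ decreasing points of $L_{i-1}$ (all strictly left of the spine) together with the $n-i$ increasing points of $L_i$ (all on or right of the spine), getting $n-1$ points at once; the $R_k$ family is handled symmetrically, and the two resulting $(n-1)$-point sets can meet each other and the spine in at most two points because a descending and an ascending chain of points share at most one point. That ``transition index'' idea is the engine of the proof, and it is absent from your proposal.

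The concrete gap in your argument is the premise ``if an embedding of $R_k$ reuses a spine suffix of length $k$ as its ascending tail.'' Nothing forces the embedding of $R_k$ to touch the spine at all, let alone to reuse a spine suffix: the superpattern may embed the ascending tail of $R_k$ entirely off the spine. Consequently the ``leftmost element of the descending head'' you want to harvest has no reason to be a new point, and the distinctness argument you sketch---concatenating the spine suffixes of two different $R_k$'s to get a too-long ascending subsequence---has no foundation, because there is no spine suffix being reused. You acknowledge yourself that ``the main obstacle is this final accounting,'' and that is exactly where the proof still needs an idea. Your structural classification of $S_n(213,132,3412,4231)$ (ascending, descending, and one long run at top or bottom) is a correct and nice observation, but the paper's proof does not need it; the proof needs the transition-index extraction, which your sketch replaces with an unjustified per-$k$ claim. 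As it stands, I would call this an incomplete attempt rather than an alternative proof.
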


\begin{proof}
To show optimality, let $\pi$ be a $S_n(213,132,3412,4231)$-superpattern. Consider embedding the permutations $L_k = n\,(n-1) \dots (n-k+1)\, 1\, 2 \dots (n-k)$ and $R_k = (k+1)\, (k+2) \dots n\, k \dots 1$ for $k = 0$ to $n-2$. Note that $L_0 = R_0$ is just a length $n$ increasing sequence and $L_n = R_n$ a length $n$ decreasing sequence. Each of these permutations belongs to $S_n(213,132,3412,4231)$.

Consider the points where $L_0$ is embedded in $\pi$. Call the line segments connecting adjacent pairs of these points, with two rays of slope one tending towards the infinities in the top right and bottom left, the \emph{spine} (Figure~\ref{fig:213_3412_optimality}). We consider three (not necessarily disjoint) sets of points in $\pi$. The first set consists of the points of $\pi$ on the spine, and has size at least $n$.

For the second set, consider embedding $L_k$ for $k = 1$ to $n - 2$. There are two cases for each $L_k$: (1) the decreasing sequence of $L_k$ intersects the spine or has a point to the right of the spine, or (2) the decreasing sequence is completely to the left of the spine. Let $L = \{ k : L_k \text{ satisfies case (1)} \}$. If $L$ is empty, then the second set consists of the $n - 2$ points of the decreasing sequence of $L_{n-2}$, which must be to the left of the spine, plus an arbitrary point on the spine. If $L$ is not empty, let $i$ be the smallest index of an element of $L$. Here, the second set consists of the $i - 1$ points in the decreasing sequence of $L_{i-1}$, which must be to the left of the spine, plus the $n - i$ points in the increasing sequence of $L_i$, which must be to the right of the spine. In either case, the second set consists of $n - 1$ points, made up of $j-1$ decreasing points to the left of the spine and either one point on the spine (if $j=n-1$) or $n-j$ increasing points to the right of the spine (if $1 \le j \le n-2$).

The third set is defined in the same way as the second one, by finding the minimum index of a sequence $R_k$ whose embedding intersects or is to the left of the spine, and by selecting a descending set to the right of the spine and an ascending set on or to the left of the spine whose total length adds up to $n-1$.

The three sets have total size at least $n + 2(n-1) = 3n - 2$. A short case analysis based on the fact that an increasing sequence and a decreasing sequence of points can intersect in at most one point shows that they can have at most two points in common. Hence, the union of the three set have size at least $3n - 4$. This shows that $\pi$ must have at least $3n - 4$ points, as claimed.
\end{proof}

\begin{corollary}
For every permutation class $P$ with $S_n(213,132,3412,4231)\subset P\subset S_n(213,3412)$ and $n\ge 3$, the optimal length of a superpattern for $P$ is exactly $3n-4$.
\end{corollary}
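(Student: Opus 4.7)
The plan is to observe that the corollary follows immediately by sandwiching $P$ between the two permutation classes whose superpattern sizes were just computed, and using the basic monotonicity principle: if $P \subseteq Q$, then any $Q$-superpattern is automatically a $P$-superpattern (so the minimum superpattern length is monotone in the class).

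First I would establish the upper bound. Since $P \subseteq S_n(213, 3412)$, Theorem~\ref{thm:213-3412-superpattern} gives a permutation $\sigma_n$ of length $3n-4$ that contains every element of $S_n(213, 3412)$ as a pattern; in particular $\sigma_n$ contains every element of $P$ as a pattern, so $\sigma_n$ is a $P$-superpattern of length $3n-4$.

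Next I would establish the matching lower bound. Since $S_n(213, 132, 3412, 4231) \subseteq P$, any $P$-superpattern $\pi$ must in particular contain every permutation in $S_n(213, 132, 3412, 4231)$ as a pattern, so $\pi$ is an $S_n(213, 132, 3412, 4231)$-superpattern. The preceding theorem then forces $|\pi| \ge 3n-4$.

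Combining the two bounds gives that the optimal length is exactly $3n-4$. There is no real obstacle here; the only thing to note is that the upper bound theorem requires $n \ge 3$, which matches the hypothesis of the corollary, and the lower bound theorem was stated for $n \ge 2$, hence covers this range as well.
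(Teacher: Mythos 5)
Your proof is correct and is exactly the sandwich argument the paper intends: the upper bound comes from Theorem~\ref{thm:213-3412-superpattern} applied to the larger class $S_n(213,3412)$, and the lower bound comes from the preceding theorem applied to the smaller class $S_n(213,132,3412,4231)$, using monotonicity of superpattern length under class inclusion. Your observation about the $n\ge 3$ versus $n\ge 2$ hypotheses is a nice touch confirming the ranges line up.
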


\section{General subclasses of 213-avoiding permutations and bounded-pathwidth graphs}
\label{sec:general-subclasses}

The previous section described near-linear superpatterns for certain subclasses of the $213$-avoiding permutations.
In this section we generalize these results to all proper subclasses of the $213$-avoiding permutations. As we show, all such classes have superpatterns whose size is within a polylogarithmic factor of linear. Our superpattern construction is recursive, and combines ideas from Section~\ref{sec:213-132} (a majorizing sequence along the main diagonal of a chessboard representation) and Section~\ref{sec:213-3412} (a tree with a long path, each node of which has descendants on both sides of the path).
This leads to near-linear universal sets for families of planar graphs that include all planar graphs of bounded pathwidth.

\subsection{Tree augmentations}

Recall from Lemma~\ref{lem:213-characterization} that if $\pi$ is a $213$-avoiding permutation, then the chessboard graph of $\pi$ may be a forest rather than a single tree, and the chessboard representation of $\pi$ may have some squares containing numbers greater than one. However, both of these types of complication may be removed by adding additional elements to $\pi$.

\begin{definition}
Let $\pi$ be a $213$-avoiding permutation of length $n$. Then a \emph{tree augmentation} of $\pi$ is a $213$-avoiding permutation $\sigma$ that contains $\pi$, such that the chessboard graph of $\sigma$ is a tree and every square of its chessboard representation is either zero or one.
\end{definition}
An example of a tree augmentation is depicted in Figure~\ref{fig:tree-aug}. 

\begin{figure}[t]
\centering
\includegraphics[width=0.9\textwidth]{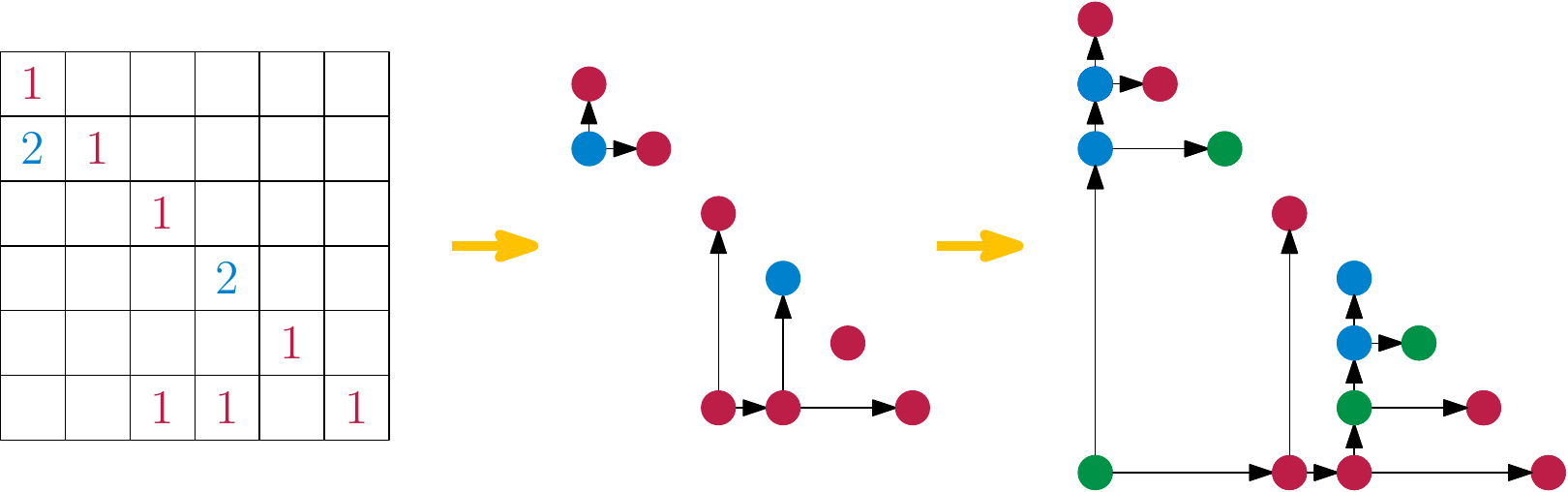}
\caption{The chessboard of a permutation, its corresponding chessboard graph, and its tree augmentation from Lemma~\ref{lem:tree-aug-size}.  Real vertices are colored red and blue, and fictitious vertices are colored green.}
\label{fig:tree-aug}
\end{figure}

\begin{definition}
Let $\sigma$ be a tree augmentation of a permutation $\pi$. Then nodes in the chessboard graph of $\sigma$ are called \emph{real} if they correspond to elements of $\pi$ and \emph{fictitious} if they were added in the augmentation process.
\end{definition}

\begin{lemma} \label{lem:tree-aug-size}
Every $213$-avoiding permutation $\pi$ has a tree augmentation of length at most $2n-1$.
\end{lemma}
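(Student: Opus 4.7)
The plan is to prove the bound via a double-counting argument that pinpoints the right target length, followed by a constructive insertion procedure.

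First, I would establish the identity $f = 2c - V$, where $c$ is the chessboard size of $\pi$ (the number of rows, equal to the number of columns by Lemma~\ref{lem:213-characterization}), $V$ is the number of nonzero squares, and $f$ is the number of components of the chessboard graph. Every row and every column contains at least one nonzero on the diagonal, so the nonzeros of a single row with $k$ nonzeros form a row-path contributing $k - 1$ edges; summing over all $c$ rows yields $V - c$ row-edges, and the analogous column sum gives another $V - c$, for a total of $2V - 2c$ edges. Since the chessboard graph is a forest on $V$ vertices with $f$ components, it has $V - f$ edges, so $f = 2c - V$. In particular, a tree augmentation $\sigma$ (with $f = 1$ and every square $0$ or $1$) satisfies $|\sigma| = V_\sigma = 2 c_\sigma - 1$, meaning it suffices to construct one with $c_\sigma \le n$.

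Second, I would build $\sigma$ from $\pi$ by adding two kinds of fictitious elements: $k - 1$ elements per multi-valued square of value $k$ (to split it into $k$ singleton squares, contributing $n - V$ insertions total), and $f - 1$ further elements to merge the separate components of the chessboard forest into a single tree. The resulting length is
\[ |\sigma| = n + (n - V) + (f - 1) = 2n + 2(c - V) - 1 \le 2n - 1, \]
where the inequality uses $V \ge c$ (the $c$ diagonal squares alone contribute $c$ nonzero entries).

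Third, the technical heart of the argument is showing that the insertions can be realized concretely while preserving $213$-avoidance. A multi-valued square corresponds to a block of $k$ consecutive values at $k$ consecutive ascending positions of $\pi$; to split it, I would insert a fictitious element with value and position chosen to break either the block's column-run (its ascending run in $\pi$) or its row-range (the consecutive-value interval in $\pi^{-1}$). To merge two components of the chessboard forest, I would insert a fictitious element at an off-diagonal square that lies in the same row as a square in one component and the same column as a square in the other, adding a new edge that joins them. After each insertion I would invoke Lemma~\ref{lem:213-characterization} to verify that the updated chessboard remains square with nonzero diagonal and that its graph remains a plane forest with one fewer defect. The main obstacle is choosing insertion positions so that they compose cleanly without inadvertently creating $213$ patterns or disturbing adjacent squares; the cleanest way to avoid local case analysis is to perform all insertions simultaneously, describing $\sigma$'s chessboard globally as a $c_\sigma \times c_\sigma$ grid of singleton squares forming a plane tree and containing $\pi$'s nonzero entries as a prescribed substructure.
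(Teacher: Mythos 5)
Your proposal takes essentially the same approach as the paper's proof: split each chessboard square of value $k>1$ into $k$ unit squares by inserting $k-1$ elements along the diagonal (accounting for $n-V$ insertions where $V$ is the number of nonzero squares), then merge the $f$ trees of the chessboard forest with $f-1$ further insertions. The only divergence is your counting detour through the identity $f = 2c - V$ combined with $V \ge c$; the paper needs only the simpler observation that $V \ge f$ (every tree has at least one vertex) to get $n - V + f - 1 \le n - 1$, and the concern you raise in your third step about verifying that the insertions preserve $213$-avoidance and the plane-forest structure is a detail the paper also handles informally rather than with a fully spelled-out case analysis.
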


\begin{proof}
For each chessboard square containing a number greater than one, adding an additional element along the main diagonal of the chessboard representation (creating an additional row and column) can split this nonzero into two smaller numbers, while preserving $\pi$ as a pattern. This adds $k-1$ new elements for each square with number $k$. Moreover, each tree in the chessboard graph may be connected to another tree by adding one additional element, again without affecting $\pi$ as a pattern. This shows that at most $n-1$ elements need to be added to create a tree augmentation of $\pi$.
\end{proof}

\subsection{Strahler number}
\begin{definition}
The \emph{Strahler number} of a node $x$ in a directed tree $T$ is a number defined by a bottom-up calculation in the tree, as follows: if $x$ is a leaf, its Strahler number is one. Otherwise, let $s$ be the largest Strahler number of a child of~$x$. If $x$ has only one child with Strahler number~$s$, its Strahler number is also~$s$, and if $x$ has multiple children with Strahler number~$s$, then its Strahler number is~$s+1$.
\end{definition}
Equivalently, the Strahler number of $x$ is the number of nodes on a root-to-leaf path of the largest complete binary tree that may be obtained from the subtree of $T$ rooted at $x$ by contracting edges (Figure~\ref{fig:strahler}).
\begin{figure}[t]
\centering
\includegraphics[width=0.9\textwidth]{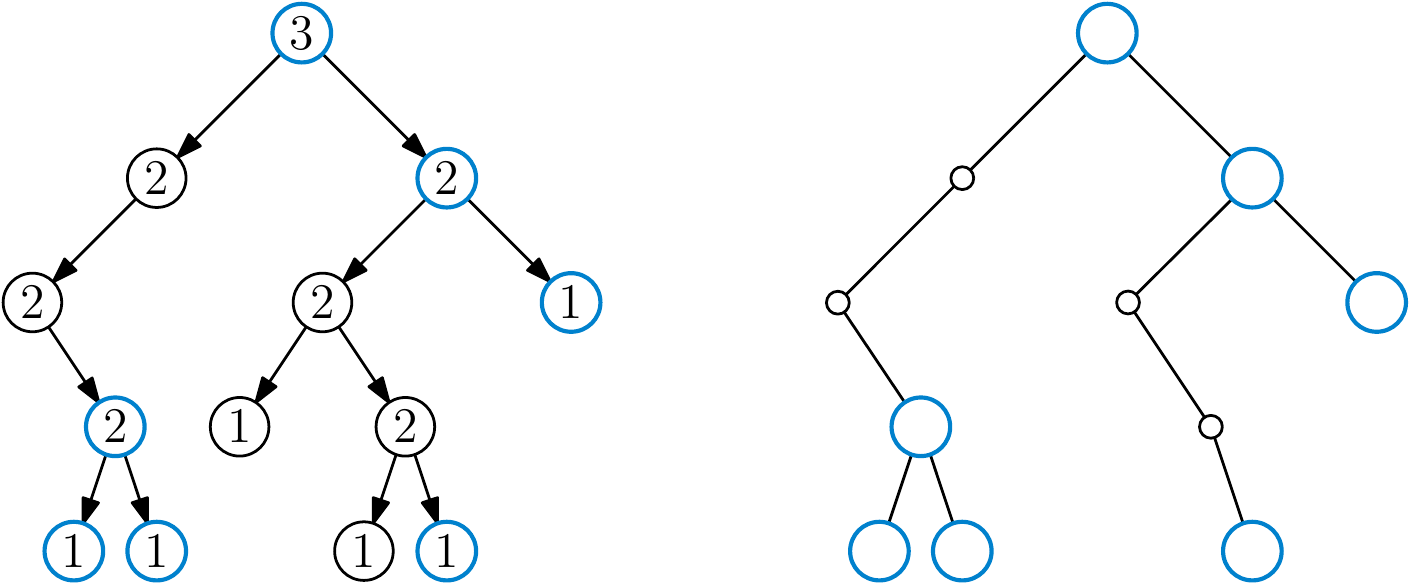}
\caption{A tree with nodes labeled by their Strahler number and one of the underlying complete binary trees.}
\label{fig:strahler}
\end{figure}

\begin{definition}
\label{def:Strahler-permutation}
If $\pi$ is a $213$-avoiding permutation, then we define the \emph{Strahler number of~$\pi$} to be the minimum Strahler number of any tree that can be obtained as the chessboard graph of a tree augmentation of~$\pi$.
\end{definition}
The only permutation with Strahler number one is the length-$1$ permutation. For every longer permutation, every tree augmentation must contain a nonleaf node; to prevent its subtree from collapsing into a single square of the chessboard representation, this node must have Strahler number at least two.

\begin{definition}
Let $\pi$ be a $213$-avoiding permutation with Strahler number $s$. A tree augmentation $\sigma$ of $\pi$ is called \emph{minimal} if the root of every subtree in the chessboard graph of $\sigma$ has the smallest Strahler number among all possible augmentations of elements of $\pi$ in that subtree, and if the length of $\sigma$ is the least among all such augmentations.
Note in particular, that the minimal tree augmentation of $\pi$ also has Strahler number $s$. 
\end{definition}

\begin{lemma}\label{lem:fake-child-real-parent}
Let $\sigma$ be a minimal tree augmentation of a permutation $\pi$. Then every fictitious leaf node must have a real parent in the chessboard graph of $\sigma$.
\end{lemma}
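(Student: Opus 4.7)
The plan is a proof by contradiction: assume $\sigma$ is a minimal tree augmentation of $\pi$ containing a fictitious leaf $v$ whose parent $u$ in the chessboard graph is also fictitious, and build a strictly shorter tree augmentation $\sigma'$ of $\pi$ that still satisfies the Strahler-minimality conditions.

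The first step is a local structural observation about $v$. Because the chessboard graph of a tree augmentation is a tree, every vertex has a unique incoming neighbor, and a leaf has no outgoing neighbors. Any potential neighbor of $v$ other than $u$ would therefore have to be either a second incoming edge or an outgoing edge, and neither is allowed. It follows that if $u$ lies directly below $v$ in the same column, then $v$ is the only nonzero entry of its row; symmetrically, if $u$ lies directly to the left of $v$ in the same row, then $v$ is the only nonzero entry of its column.

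Next I would form $\sigma'$ by deleting the permutation element corresponding to $v$, which simultaneously deletes the singleton row or column that contained only $v$. The key technical step is to check that the chessboard of $\sigma'$ is exactly the chessboard of $\sigma$ with that row or column removed: because the descents in $\sigma$ that border the singleton row or column remain descents after relabeling values greater than $v$ by subtracting one (a monotone transformation that preserves the sign of every difference), no adjacent rows or columns merge. Once this is established, $\sigma'$ is a tree augmentation of $\pi$: it contains $\pi$ since $v$ was fictitious, it is $213$-avoiding as a subpattern of $\sigma$, its chessboard graph is the tree obtained by deleting the leaf $v$, and all of its chessboard entries remain $0$ or $1$.

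Finally I would verify the Strahler conditions. The real elements lying in each subtree of the chessboard graph of $\sigma'$ coincide with those in the corresponding subtree of $\sigma$, so the minimum Strahler number attainable by any augmentation of those real elements is unchanged. Since $v$ has Strahler number $1$, deleting it can only preserve or decrease the Strahler numbers of its ancestors, so every subtree root of $\sigma'$ continues to realize the minimum. Hence $\sigma'$ is a tree augmentation of $\pi$ satisfying the full Strahler-minimality condition with $|\sigma'|=|\sigma|-1$, contradicting the length minimality of $\sigma$. I expect the main obstacle to be the chessboard bookkeeping in the second step: verifying carefully that removing a fictitious leaf really corresponds to removing exactly one row or one column of the chessboard, with no side effects on neighboring rows, columns, or chessboard graph edges.
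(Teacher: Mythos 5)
Your proposal has a genuine gap in the step you yourself flagged as the main obstacle, and the gap is fatal: removing the fictitious leaf $v$ does \emph{not} simply delete one row or one column from the chessboard with no side effects. By Lemma~\ref{lem:213-characterization}, the chessboard of a $213$-avoiding permutation is always square, so deleting one element must reduce the number of rows and the number of columns by the same amount. If $u$ sits directly below $v$ in its column, you correctly observe that $v$ occupies a singleton row, so that row disappears; but $v$'s column still contains $u$ and therefore does not disappear. The only way the column count can drop is for $v$'s column to merge with the adjacent one, and this merge is forced, not avoided. Your descent argument only checks that the two rows bordering the singleton row stay separate; it says nothing about the column boundary at $v$'s position. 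Worse, the merge is harmful: since $v$ is a leaf it lies on the diagonal at $(c, j)$ with $j = n-c+1$, so the diagonal cell of column $c+1$ is $(c+1, j-1)$, which is nonzero by Lemma~\ref{lem:213-characterization}, and $(c,j-1)$ is nonzero because $u$ is there (when $u$ is immediately below $v$). After the merge these two ones land in the same cell, producing a chessboard entry of value $2$, so $\sigma'$ is not a tree augmentation at all. A concrete instance is $\sigma = 1\,3\,7\,4\,6\,5\,2$ with $v$ the element of value $6$ and $u$ the element of value $4$: deleting $v$ yields $1\,3\,6\,4\,5\,2$, whose chessboard has a $2$ in cell $(2,2)$. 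The symmetric case ($u$ to the left of $v$) fails the same way with rows merging.

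This is exactly why the paper's proof does not simply delete $v$. It first observes that $u$ must have a second child $w$ (which is in fact the diagonal neighbor forced by the squareness constraint), and then performs a more involved surgery that removes \emph{two} cells and relocates the surviving ones---either deleting $v$ and collapsing $u$ into $w$, or deleting $u$ and sliding $w$ and $v$ into $u$'s former row and column---so that the chessboard shrinks by one in each dimension without creating a value-$2$ cell. Your one-element deletion cannot accomplish this; you need to modify at least two cells.
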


\begin{proof}
Suppose towards a contradiction that $v$ is a fictitious leaf node with a fictitious parent $u$ in the chessboard graph of $\sigma$. Observe that $u$ must have another child $w$; otherwise, the chessboard collapses to create a square with value greater than one. If $u$ has no parent, then $\pi$ must have length one and the contradiction is obvious. Otherwise, either (1) deleting $v$ and replacing $u$ by $w$, or (2) deleting $u$, moving $w$ to the column of the parent of $u$, and moving $v$ to the row of the parent of $u$ will produce a tree augmentation with both smaller length and smaller Strahler number at a root, contradicting the assumption that $\sigma$ is minimal.
\end{proof}

Given a node $v$ in a tree augmentation $T$ we say that $u$ is an \emph{immediate real descendant} of $v$ if $u$ is a descendant of $v$, a real node in $T$, and the path from $v$ to $u$ contains only fictitious nodes.

\begin{lemma}
\label{lem:Strahler-node-lemma}
Let $\pi$ be a $213$-avoiding permutation and $T$ be the tree obtained as the chessboard graph of a minimal tree augmentation of $\pi$. Then every node in $T$ of Strahler number $t \ge 3$ without a descendant of Strahler number $t$ has at least two immediate real descendants of Strahler number $t-1$.
\end{lemma}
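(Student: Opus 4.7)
The plan is to use the Strahler definition to locate at least two Strahler-$(t-1)$ children of $v$, trace chains of uniquely determined Strahler-$(t-1)$ descendants through fictitious nodes, and then invoke minimality of the tree augmentation to force each such chain to reach a real node. From the Strahler definition, since $v$ has Strahler $t$ but no descendant of Strahler $t$, its children have maximum Strahler $t-1$ and at least two of them attain this value. For each such child $c$, I would define a chain $c = c_0, c_1, \dots, c_k$ by letting $c_{i+1}$ be the unique child of $c_i$ with Strahler $t-1$, whenever one exists; uniqueness follows because two such children would force $c_i$ to have Strahler $t$, contradicting the assumption on $v$'s descendants. The chain terminates at $c_k$, a node with no child of Strahler $t-1$; since every chessboard-graph vertex has at most two children (one upward and one rightward), $c_k$ must have exactly two children, both of Strahler $t-2$.

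The next step is to argue that some node in this chain is real. Suppose for contradiction that $c_0, \dots, c_k$ are all fictitious. The key modification is to delete the permutation element corresponding to $c_k$ and reattach its two Strahler-$(t-2)$ children to the parent of $c_k$ (to $c_{k-1}$, or to $v$ when $k=0$). Following the style of the chessboard-level modifications used in the proof of Lemma~\ref{lem:fake-child-real-parent}, this should produce a valid tree augmentation of $\pi$ whose length is one less than before. A local Strahler recomputation then shows that when $k \ge 1$ the parent $c_{k-1}$ retains Strahler $t-1$ (its former unique Strahler-$(t-1)$ descendant $c_k$ is replaced by two Strahler-$(t-2)$ grandchildren, and $c_{k-1}$ had no other Strahler-$(t-1)$ child by uniqueness), and that when $k=0$ the Strahler of $v$ either stays at $t$ or drops to $t-1$ (depending on how many Strahler-$(t-1)$ children besides $c_0$ remain). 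In no case does any subtree root's Strahler number increase, so the length decrease would contradict the minimality of the original tree augmentation.

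Consequently some $c_j$ in the chain must be real. Choosing $j$ minimal, the nodes $c_0, \dots, c_{j-1}$ are all fictitious, so the path from $v$ down to $c_j$ visits only fictitious intermediate nodes, and $c_j$ is an immediate real descendant of $v$ with Strahler number $t-1$. Applying this argument to each of the at least two Strahler-$(t-1)$ children of $v$, and noting that the corresponding chains lie in the disjoint subtrees rooted at these children, I would obtain at least two distinct immediate real descendants of $v$ of Strahler $t-1$, completing the proof.

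The hard part will be verifying the geometric validity of the central modification: specifically, that removing the permutation element corresponding to $c_k$ produces a permutation whose chessboard graph is truly the old tree with $c_k$ deleted and its children reattached to its parent, without creating a chessboard square of value greater than one or inducing row and column merges that alter the tree structure elsewhere. This verification parallels the chessboard-level reasoning carried out in the proof of Lemma~\ref{lem:fake-child-real-parent}, and I expect an analogous case analysis on the local geometry around $c_k$ (a fictitious internal node with two children of equal Strahler number) will succeed.
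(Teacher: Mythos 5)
Your proof takes a genuinely different route from the paper's, but it has a gap in the crucial modification step that I don't think can be patched locally.

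Your high-level plan differs from the paper as follows. The paper collects the entire set $R$ of immediate real descendants of $v$, then performs one global restructuring: it deletes \emph{all} fictitious nodes on paths from $v$ to $R$ and replaces them with fictitious nodes placed along an L-shape (up $v$'s column, then right along the row of the maximum-Strahler node $u \in R$). Because the replacement nodes sit precisely at the intersections of $v$'s column and $u$'s row with the rows and columns already occupied by nodes of $R$, the resulting graph is automatically a tree and every real node of $R$ retains a valid parent. The paper then argues that if fewer than two nodes of $R$ had Strahler $t-1$, the new augmentation would give $v$ Strahler $t-1$, contradicting minimality. You instead trace a chain of Strahler-$(t-1)$ nodes from each Strahler-$(t-1)$ child of $v$ and try to force a real node onto the chain by a \emph{local} deletion of the chain's endpoint $c_k$.

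The gap is in the claim that deleting $c_k$ and ``reattaching'' its two children to $c_{k-1}$ yields a valid tree augmentation. In a chessboard graph, $c_k$ has one parent (either directly below it in its column or directly to its left in its row) and two children (one directly above in its column and one directly to its right in its row). If the parent $c_{k-1}$ lies in $c_k$'s column, then after deleting $c_k$ the upward child $u$ is indeed inherited by $c_{k-1}$, but the rightward child $r$ loses its only incoming edge: it had no column-parent below it (that would have given it two parents in a tree), and $c_k$ was necessarily the leftmost nonzero square in its row (since $c_k$'s parent was in the column, not the row), so nothing is left to $r$'s left. One can check via $213$-avoidance that no rows or columns merge after removing a single element, so $r$ simply becomes a second root and the chessboard graph becomes a forest, not a tree. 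The symmetric disconnection happens when $c_{k-1}$ is in $c_k$'s row. The word ``reattach'' hides the real difficulty: reattaching $r$ would require moving it (and hence its whole subtree) to a new row or column, which is impossible if $r$ or any of its descendants is a real node that cannot move. By contrast, in Lemma~\ref{lem:fake-child-real-parent} the nodes being moved are all fictitious, which is what makes that proof's move-and-delete operations legitimate. Your chain argument is an attractive idea, but to make the contradiction go through you would need a larger restructuring of the kind the paper performs, not a one-element deletion.
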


\begin{proof}
Let $v$ be a node in $T$ with Strahler number $t \ge 3$ without any descendants of Strahler number $t$. Note that $v$ does not have any leaf children, so by Lemma~\ref{lem:fake-child-real-parent}, every path from $v$ to a leaf contains a real node. Let $R$ be the set of all immediate real descendants of $v$. Note that every node in $R$ has a distinct row and a distinct column in the chessboard; otherwise, a node would have two parents and $T$ could not be a tree. Let $u$ be a node in $R$ with the largest Strahler number. Consider the following chessboard graph $T'$ obtained by modifying the subtree of $T$ rooted at $v$: delete the fictitious nodes on every path from $v$ to a node of $R$, then add a fictitious node at (1) the intersection of the column of $v$ with every row containing a node of $R$ from the row of $v$ to the row of $u$, and (2) the intersection of the row of $u$ with every column containing a node of $R$ from the column of $v$ to the column of $u$ (see Figure~\ref{fig:Strahler-node-lemma}).

The graph $T'$ corresponds to a tree augmentation. For $v$ to have Strahler number $t$, at least two nodes in $R$ must have Strahler number $t-1$; otherwise, $v$ under $T'$ would have Strahler number $t-1$, contradicting that the original tree augmentation is minimal.
\end{proof}

\begin{figure}[t]
\centering
\includegraphics[width=0.3\textwidth]{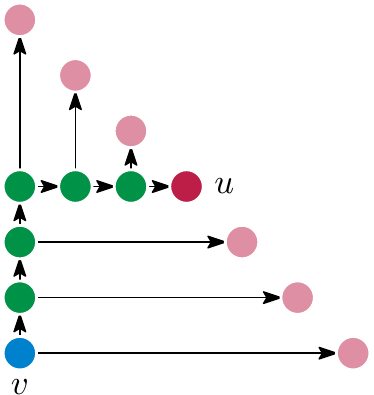}
\caption{Example of the augmentation in Lemma~\ref{lem:Strahler-node-lemma}. Immediate real descendants are colored red and the new fictitious nodes are colored green.}
\label{fig:Strahler-node-lemma}
\end{figure}

\begin{lemma}
\label{lem:Strahler-implies-tree}
If $\pi$ has Strahler number $s$, then $\pi$ contains a pattern whose chessboard graph is a complete binary tree of height $s-2$.
\end{lemma}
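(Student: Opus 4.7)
The plan is to proceed by induction on $s$, establishing the stronger claim that for every node $v$ of Strahler $t \ge 2$ in the minimal tree augmentation $T$, the real descendants of $v$ contain a subset whose pattern in $\pi$ has chessboard graph equal to a complete binary tree of height $t-2$. Applied at the root of $T$, which has Strahler $s$, this will yield the lemma.

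The base case $t=2$ will be immediate, since Lemma~\ref{lem:fake-child-real-parent} guarantees a real descendant and a single real node forms a complete binary tree of height $0$. For $t \ge 3$, I will first descend along the main Strahler-$t$ stream in $T$, repeatedly choosing a child of the same Strahler whenever possible, to locate a node $v^*$ of Strahler $t$ with no descendant of Strahler $t$. Lemma~\ref{lem:Strahler-node-lemma} will then provide two immediate real descendants $u_1, u_2$ of $v^*$ with Strahler $t-1$; because the chessboard graph of $T$ is a directed plane forest and $u_1, u_2$ lie in disjoint subtrees of $v^*$, they will occupy distinct rows and distinct columns of $\pi$'s chessboard, sitting on opposite sides of the plane partition induced by $v^*$.

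I then plan to apply the inductive hypothesis inside the Strahler-$(t-1)$ subtrees rooted at $u_1$ and $u_2$, obtaining two disjoint sets of real descendants whose patterns in $\pi$ realize complete binary trees of height $t-3$. The goal will be to unite these two subpatterns, together with one further real node serving as the combined root, into a pattern of $2(2^{t-2}-1)+1 = 2^{t-1}-1$ real elements whose chessboard graph is a complete binary tree of height $t-2$, with $u_1$ and $u_2$ playing the roles of the two children of the new root.

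The hard part will be verifying that this combined selection really does yield a complete binary tree of height $t-2$ as the chessboard graph of the combined pattern. Because a pattern's chessboard structure is determined by its own subsequence rather than inherited from $\pi$'s chessboard, I will need to argue that the two inductively-built subpatterns meet cleanly at the intended root, contributing disjoint rows and columns except where the root joins them. This will almost certainly require strengthening the inductive hypothesis to record the chessboard position of each subpattern's root, and then using the noncrossing structure of $T$ together with the disjoint-quadrant placement of $u_1, u_2$ around $v^*$ to secure the required alignment; locating the additional real node to serve as the common root, in the case that $v^*$ itself is fictitious, will likely require one further structural argument within the framework of Lemma~\ref{lem:Strahler-node-lemma}'s construction.
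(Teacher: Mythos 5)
Your overall strategy matches the paper's: descend to a Strahler-$t$ node with no Strahler-$t$ descendant, invoke Lemma~\ref{lem:Strahler-node-lemma} to obtain two immediate real descendants of Strahler $t-1$, and recurse. But you have diverged in one structural choice that creates a difficulty the paper avoids. The paper uses one of the two real Strahler-$(s-1)$ descendants, say $u_1$, as the \emph{root} of the complete binary tree minor, and then finds the two children of $u_1$ by applying Lemma~\ref{lem:Strahler-node-lemma} again \emph{inside} $u_1$'s subtree (a Strahler-$(s-1)$ node with no Strahler-$(s-1)$ descendant there has two immediate real descendants of Strahler $s-2$, which are descendants of $u_1$). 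Your plan instead places $u_1, u_2$ as the two children and then must conjure ``one further real node serving as the combined root'' above them. That extra root has to be a real common ancestor of $u_1$ and $u_2$, but the node $v^*$ produced by Lemma~\ref{lem:Strahler-node-lemma}, and indeed the whole path from $v^*$ to the branching point, may consist entirely of fictitious nodes, so such a real common ancestor need not exist. This is a genuine hole in the plan, and it is avoidable: follow the paper and let $u_1$ itself be the root, since it is already real and of Strahler $s-1$.

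On your other concern---that one must verify the restricted pattern's chessboard graph really is the complete binary tree---you are right that the paper's final sentence glosses over this, but I think you are over-engineering the fix. The clean observation (used implicitly, and spelled out in the proof of Lemma~\ref{lem:chessboard-to-starting-graph}) is that for a $213$-avoiding permutation whose chessboard graph is a plane tree, one node is an ancestor of another in the chessboard graph exactly when it precedes it both in position and in value. Since these two orderings are inherited unchanged by any subpattern, the chessboard graph of the restricted pattern is determined by the ancestor relation in $T$ restricted to the chosen nodes; because your chosen set is (by construction) closed under ``nearest selected ancestor'' and has a unique root, this restriction is precisely the intended complete binary tree. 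Tracking chessboard coordinates through a strengthened induction is not needed---the position/value partial order does the work. (Both you and the paper also leave the last level, from Strahler $2$ to Strahler $1$, outside the hypotheses of Lemma~\ref{lem:Strahler-node-lemma}, which requires $t \ge 3$; your appeal to Lemma~\ref{lem:fake-child-real-parent} in the base case is the right instinct for patching that.)
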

\begin{proof}
Let $T$ be a tree obtained as the chessboard graph of the minimal tree augmentation of $\pi$. By Lemma~\ref{lem:Strahler-node-lemma}, $T$ has a real node of Stahler number $s-1$ with two real descendants of Strahler number $s-2$. We may continue this process recursively to construct a complete binary tree of height $s-2$ as a minor of $T$ consisting of real nodes. The pattern of $\pi$ associated with these nodes satisfies the requirements of the lemma.
\end{proof}

\subsection{Superpatterns for small Strahler number}

If we parameterize the $213$-avoiding permutations by their Strahler number, then (as we show now) the permutations with bounded parameter values have superpatterns of near-linear size. As we will also show, every proper subclass of the $213$-avoiding permutations has bounded Strahler number. Therefore, the same near-linear bound on superpattern size applies to every proper subclass of the $213$-avoiding permutations.

\begin{figure}[t]
\centering
\includegraphics[width=0.9\textwidth]{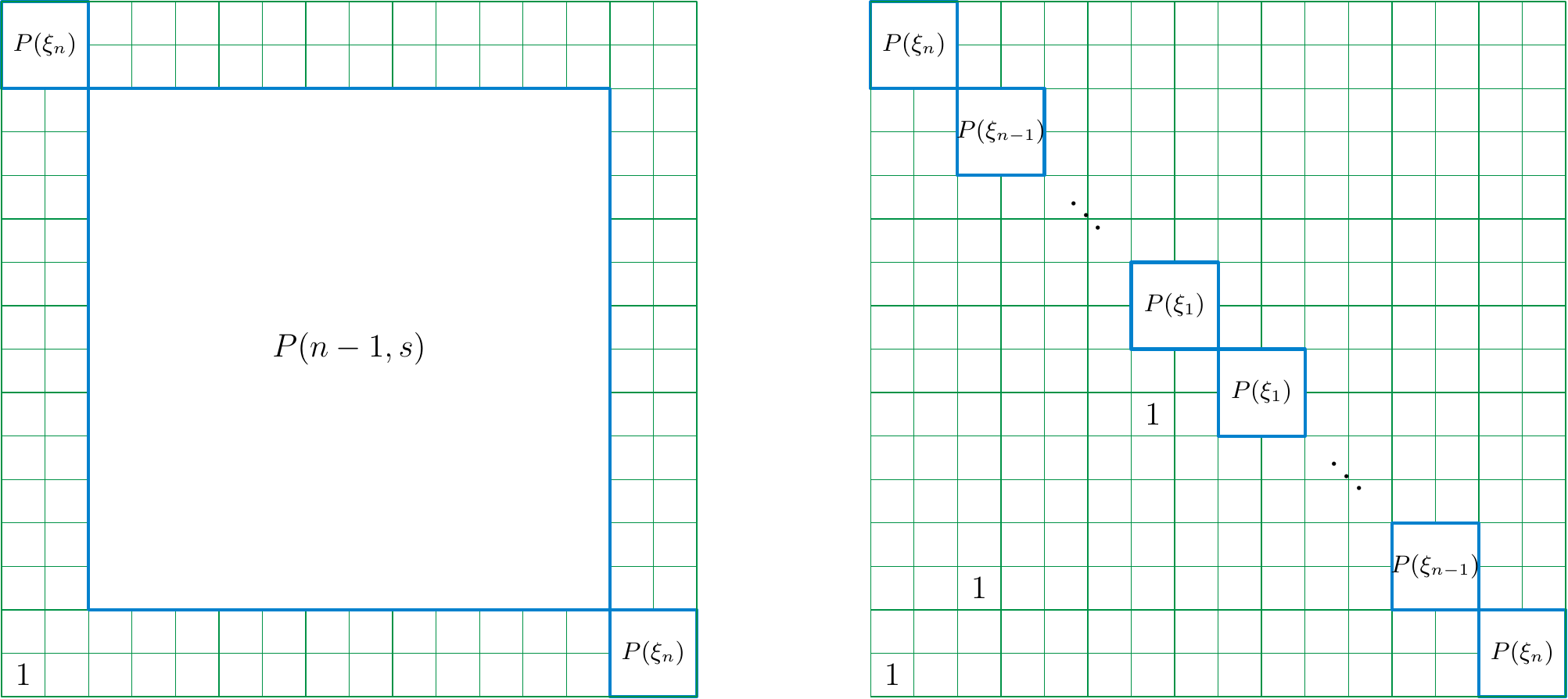}
\caption{Chessboard representation of the superpattern $P(n,s)$. Each $P(\xi_i)$ in the figure is $P(\xi_i, s-1)$, with the $s-1$ omitted because of space constraints.}
\label{fig:Strahler-construction}
\end{figure}

We construct our superpatterns by recursion. Let $P(n,s)$ denote the recursively constructed superpattern for minimally tree-augmented permutations of length $n$ and Strahler number at most $s \ge 2$. If a tree-augmented permutation has Strahler number~$2$ then its chessboard graph must be a caterpillar and the permutation must avoid the pattern $3412$. So for a base case we let $P(n,2)$ be the superpattern from Theorem~\ref{thm:213-3412-superpattern} of length $3n-4$. 

Let $P(0,s)$ be the empty permutation. Recursively define $P(n,s)$ for $n\ge 1$ and $s\ge 3$ to be the permutation formed by taking the length-$1$ permutation and adding above and to the right of it the three permutations $P(\xi_n, s-1)$, $P(n-1,s)$ and $P(\xi_n, s-1)$, arranged so that the left-to-right order of these permutations agrees with the top-to-bottom order. Figure~\ref{fig:Strahler-construction} (left) depicts one step of this construction, and Figure~\ref{fig:Strahler-construction} (right) depicts the whole pattern $P(n,s)$ in terms of patterns constructed in the same way for smaller values of~$s$.

\begin{theorem}
\label{thm:strahler}
For every fixed positive integer $s$, the $213$-avoiding permutations with Strahler number at most $s$ have superpatterns of length $O(n\log^{s-1}n)$.
\end{theorem}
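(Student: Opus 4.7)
The plan is to prove both assertions simultaneously by induction on $s$: the length bound $|P(n,s)| = O(n\log^{s-1}n)$, and the claim that every $213$-avoiding permutation of length at most $n$ with Strahler number at most $s$ embeds as a pattern into $P(n,s)$. The length calculation is the easy half, and the pattern-containment claim is the substantive one.

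For the length bound, I will unfold one level of the recursion $n$ times along the $s$-fixed direction to obtain
\begin{equation*}
|P(n,s)| = n + 2\sum_{j=1}^{n}|P(\xi_j,s-1)|.
\end{equation*}
Assuming inductively that $|P(m,s-1)| \le C_{s-1}\, m\log^{s-2} m$ and using that $\xi_j$ is at most the largest power of two not exceeding $2n$, each summand is bounded by $C_{s-1}\,\xi_j\log^{s-2}(2n)$. Lemma~\ref{lem:brockhaus} then gives $\sum_{j=1}^n \xi_j = \zeta_n = O(n\log n)$, so $|P(n,s)| \le C_s\, n\log^{s-1}n$ for a constant $C_s$ growing like a power of two in~$s$.

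For the pattern-containment claim, take $\pi$ of length $n$ with Strahler number at most $s$, and let $T$ be its minimal tree augmentation, which has the same Strahler number. The base case $s=2$ is the observation that a tree of Strahler number at most $2$ is a caterpillar, so by the characterization preceding Theorem~\ref{thm:213-3412-superpattern} the permutation $\pi$ avoids both $213$ and $3412$ and therefore embeds into $P(n,2)$. For $s\ge 3$ I decompose $T$ along its heavy path $u_1,\dots,u_m$, the maximal root-to-leaf path that at each step follows the child of maximum Strahler number (ties would force Strahler $>s$ by Lemma~\ref{lem:Strahler-node-lemma}, so this path is well defined). The off-path subtrees $T_1,\dots,T_\ell$ hanging from the heavy path each have Strahler at most $s-1$; let $\alpha_i = |T_i|$ count real elements, and observe $\sum_i\alpha_i \le n$. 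Using Lemma~\ref{lem:sawtooth} I select spine positions $i_1<i_2<\dots<i_\ell$ of $P(n,s)$ with $\xi_{i_k}\ge\alpha_k$, so that the side block $P(\xi_{i_k},s-1)$ at position $i_k$ has length at least $\alpha_k$. By the inductive hypothesis at level $s-1$, each $T_k$ embeds as a pattern into $P(\xi_{i_k},s-1)$; I route this embedding into the column-side copy if $T_k$'s root is an up-child of its heavy-path parent in $T$, and into the row-side copy if it is a right-child. The real elements of the heavy path itself are placed on the remaining spine singletons, in their naturally ascending order.

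The main obstacle is verifying that these local placements combine into a single pattern-embedding of $\pi$ into $P(n,s)$, and I expect this to be the technical heart of the proof. Two observations should suffice. First, because the chessboard graph of $T$ has edges oriented only upward and rightward, the heavy path is monotone in both column and row indices and therefore matches the strictly ascending spine of $P(n,s)$; the spine positions not used by any $T_k$ provide exactly the room needed to host the real heavy-path elements in between. Second, at each chosen spine position the two copies of $P(\xi_{i_k},s-1)$ occupy the correct geometric regions relative to the spine singleton (one continuing its column upward, one continuing its row rightward), which matches the up/right branching of $T$ at $u_{j_k}$ and lets the inductive embedding of $T_k$ sit inside the appropriate region without interfering with the embeddings at other spine positions or with the spine itself. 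Together with the length bound already established, this completes the inductive step and yields the theorem.
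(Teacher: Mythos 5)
Your proposal follows the same overall strategy as the paper: the recursive superpattern $P(n,s)$, a heavy-path decomposition of the (minimal) tree augmentation, majorization via Lemma~\ref{lem:sawtooth}, and recursion on Strahler number for the off-path subtrees. The length calculation and base case also match the paper's. However, there is a genuine gap in your embedding step, precisely in the part you flagged as ``the technical heart.''

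The gap is in the placement of the heavy-path elements. You embed the off-path subtree $T_k$ into a side block at spine position $i_k$, and then place the real heavy-path elements on the \emph{remaining} spine singletons, i.e.\ positions outside $\{i_1,\dots,i_\ell\}$. But the order constraints demand that a heavy-path node $u$ and its off-path subtree sit in the \emph{same} recursive slot: the paper's proof maps the root node of each $C_i$ (the $i$th heavy-path node, counting from the leaf) to the length-$1$ permutation of slot $T_i$, and the off-path subtree of that very node to one of the two $P(t_i,s-1)$ side blocks of that same slot. If instead the heavy-path node goes to a different singleton, the required interleaving $\text{level}(u_j)\ge\text{level}(D_j)>\text{level}(u_{j+1})\ge\text{level}(D_{j+1})>\cdots$ can fail. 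A concrete failure: take a heavy path $u_1,u_2,u_3$ (all real) with off-path subtrees of one real element each. Lemma~\ref{lem:sawtooth} applied to $(1,1)$ returns the consecutive indices $1,2$, so the two off-path subtrees land in side blocks at spine levels $1$ and $2$; there is then no ``remaining'' spine level strictly between them to host $u_2$, and placing $u_2$ at a level outside $\{1,2\}$ violates the order isomorphism. Co-locating $u_2$ with its own off-path subtree, as the paper does, is what avoids this. Separately, the parenthetical ``ties would force Strahler $>s$ by Lemma~\ref{lem:Strahler-node-lemma}'' is not correct: two children sharing maximum Strahler number $t$ raise the parent only to $t+1$, which can still be $\le s$; the heavy path is well defined simply by breaking ties arbitrarily (this slip does not affect the argument, but the misattribution to Lemma~\ref{lem:Strahler-node-lemma} should be removed).
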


\begin{proof}
Without loss of generality, we may consider solely the tree-augmented permutations, since a superpattern for the minimal tree augmentation for $\pi$ is also a superpattern for $\pi$ of the same asymptotic size. 

Let $\pi'$ be a tree-augmented permutation of length $n$. Then $\pi'$ may be found as a pattern in $P(n,s)$ as follows: let $p$ be a path from the root to a leaf of the chessboard graph of $\pi'$ that contains all nodes of Strahler number $s$, let $C_i$ denote the subgraph of the chessboard graph consisting of the $i$th node (counting starting from the leaf) on this path together with the child that is not on the path and all descendants of this child, and let $c_i$ denote the number of elements in $C_i$. Then $\sum c_i=n$, so we may apply Lemma~\ref{lem:sawtooth} to find a subsequence $t_i$ of the sequence $\xi$ with the property that, for all $i$, $c_i \le t_i$.
Let $T_i$ be the pattern $P(j,s)$ in our construction of $P(n,s)$ corresponding to $t_i$ (namely, $j$ is the index of $t_i$ in $\xi$). We may find $\pi'$ as a pattern in $P(n,s)$ by mapping the root node of $C_i$ to the length-$1$ permutation in $T_i$, and by mapping the elements in the remaining subtree of $C_i$ to one of the two copies of $P(t_i,s-1)$ (whichever copy is on the correct side of the root node). This shows that $P(n,s)$ is a superpattern for all minimal tree augmentations of length $n$.

The length bound on these superpatterns follows by the following straightforward calculation,
\begin{align*}
    |P(n,s)| &= 1 + 2|P(\xi_n,s-1)| + |P(n-1,s)|\\
             &= 1 + O(\xi_n \log^{s-2} \xi_n) + |P(n-1,s)|\\
             &= \sum_{k = 1}^n O(\xi_k \log^{s-2} \xi_k)\\
             &= O(n \log^{s-1} n),
\end{align*}
proving the theorem.
\end{proof}

\begin{corollary}
\label{cor:near-linear-for-all-proper-subclasses-of-213}
Let $\pi$ be an arbitrary $213$-avoiding permutation, and let $h$ be the number of nodes on the longest root-to-leaf path of a tree augmentation $\pi'$ of $\pi$, with the augmentation chosen to minimize~$h$.
Then the $\{213,\pi\}$-avoiding permutations have superpatterns of length $O(n\log^{h+1} n)$.
\end{corollary}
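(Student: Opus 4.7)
The plan is to reduce to Theorem~\ref{thm:strahler} by showing that every $\{213,\pi\}$-avoiding permutation has Strahler number bounded as a function of~$h$. Since $\pi'$ is a tree augmentation of $\pi$, the permutation $\pi$ appears as a pattern inside~$\pi'$; hence every occurrence of $\pi'$ inside some permutation $\sigma$ restricts to an occurrence of~$\pi$, and any $\pi$-avoiding permutation necessarily avoids~$\pi'$ as well. It therefore suffices to bound the Strahler number of $213$-avoiding permutations that avoid~$\pi'$.

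Let $\sigma$ be a $213$-avoiding permutation of Strahler number~$s$ that avoids~$\pi'$. By Lemma~\ref{lem:Strahler-implies-tree}, $\sigma$ contains a pattern $\tau$ whose chessboard graph is a complete binary tree with $s-1$ nodes on every root-to-leaf path. I would then show that once $s-1$ exceeds $h$ by a small constant, this $\tau$ already contains $\pi'$ as a pattern, producing a contradiction. The chessboard graph of $\pi'$ is itself a binary tree, since in any tree augmentation each node has at most one outgoing row-edge (to the next nonzero to the right within its row) and one outgoing column-edge (to the next nonzero above within its column). The embedding is built recursively: map the root of $\pi'$ to the root of $\tau$'s complete binary tree, then embed the row-subtree of $\pi'$ into the row-subtree of $\tau$'s tree and the column-subtree of $\pi'$ into the column-subtree of $\tau$'s tree, recursing until the depth-$h$ tree of $\pi'$ has been entirely placed. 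Because a row-child of a node shares its row and a column-child shares its column in both chessboards, the image is a set of squares in $\tau$ whose induced sub-permutation has chessboard layout isomorphic to that of $\pi'$, and so is order-isomorphic to $\pi'$. Consequently $\sigma$ would contain $\pi'$ (and hence $\pi$), contradicting the assumption and forcing the Strahler number of $\sigma$ to be at most $h+O(1)$.

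Plugging this Strahler bound into Theorem~\ref{thm:strahler} immediately yields superpatterns of length $O(n\log^{h+1}n)$ for the $\{213,\pi\}$-avoiding permutations. The main obstacle is the embedding in the second step: although the abstract tree-to-tree map is natural, I would need to verify carefully that the induced sub-permutation of $\tau$ on the chosen squares really is order-isomorphic to~$\pi'$, which means checking both that the rows and columns of the chosen squares appear in the correct relative order within~$\tau$, and that the chessboard adjacency structure (which nonzero is the row- or column-neighbor of which) is inherited faithfully. This hinges on the fully recursive structure of the complete binary tree chessboard layout, in which each subtree occupies a disjoint block of rows and columns, so that the induction hypothesis applies to each subtree independently.
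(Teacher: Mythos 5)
Your proposal follows essentially the same route as the paper: bound the Strahler number of $\{213,\pi\}$-avoiding permutations in terms of $h$ via Lemma~\ref{lem:Strahler-implies-tree}, then invoke Theorem~\ref{thm:strahler}. The one step you flag as needing verification---that a permutation whose chessboard graph is a complete binary tree of sufficient height contains $\pi'$ as a pattern---is precisely the step the paper treats as evident, so your recursive row-subtree/column-subtree embedding sketch is the suppressed detail rather than a divergence, and the ``small constant'' in your Strahler bound can be taken to be $1$ (Strahler number at most $h+1$), matching the paper and yielding $O(n\log^{h}n)\subseteq O(n\log^{h+1}n)$.
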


\begin{proof}
A $\{213,\pi\}$-avoiding permutation cannot have Strahler number $h+2$ or greater, for if it did then by Lemma~\ref{lem:Strahler-implies-tree} it would contain a pattern whose chessboard graph is a complete binary tree of height $h$. This pattern contains $\pi'$, and therefore also $\pi$, as a pattern. The result follows from Theorem~\ref{thm:strahler}.
\end{proof}

\subsection{Bounded pathwidth}

The \emph{pathwidth} of a graph $G$ is one less than the size of a maximum clique in an interval supergraph of $G$ chosen to minimize this maximum clique size~\cite{BodKlo-Algs-96,KirPap-DM-85,RobSey-JCTSB-83}.
Pathwidth is closely related to treewidth, which may be defined in the same way with chordal graphs in place of interval graphs. Both treewidth and pathwidth are monotonic under graph minor operations (vertex and edge deletion, and edge contraction). The pathwidth of a graph is therefore at least as large as the treewidth, and is also at most $O(\log n)$ times the treewidth~\cite{BodKlo-Algs-96}. Examples of graphs for which this $O(\log n)$ bound is tight include the complete binary trees, for which the treewidth is one and the pathwidth is $\Omega(\log n)$. As we show, the fact that these trees have high pathwidth allows us to apply our results on pattern-avoiding permutations to derive near-linear universal point sets for the planar graphs of bounded pathwidth.

\begin{lemma}
\label{lem:chessboard-to-starting-graph}
Let $G$ be a maximal planar graph together with a canonical representation, and let $T = \ctree(G)$, $\pi_G = \cperm(G)$ be respectively the canonical tree and permutation derived from that canonical representation, as in Section~\ref{sec:superpattern-to-universal}. Let $\tau$ be a pattern in $\pi_G$ whose chessboard graph is another tree $T'$. Then a tree isomorphic to $T'$ may be formed by contracting edges in $T$.
\end{lemma}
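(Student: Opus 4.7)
The plan is to construct a surjection $\phi\colon V(G)\to V(T')$ with connected fibers in $T$, so that the induced quotient of $T$ is isomorphic to $T'$.

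I would begin with two structural observations. First, by the ancestor characterization ($u$ is an ancestor of $v$ in $T$ iff $\pre(u)<\pre(v)$ and $\rpost(u)<\rpost(v)$), the $S$-elements within a single chessboard square of $\tau$---being $\pre$- and $\rpost$-consecutive in $S$ and ascending in both---form an ancestor chain in $T$ whose successive pairs are joined by $T$-paths containing no other $S$-vertex (since those pairs are $\pre$-consecutive in $S$). Second, each edge of $T'$ corresponds to a similar ancestor pair across adjacent squares: a vertical adjacency supplies an $S$-pair that is $\pre$-consecutive within $S$ (inherited from the shared column), and a horizontal adjacency supplies a pair that is $\rpost$-consecutive within $S$ (inherited from the shared row); in each case the connecting $T$-path has no intermediate $S$-vertex.

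I would then define $\phi$ as follows. For $v\in S$, let $\phi(v)$ be the chessboard square of $\tau$ containing $v$. For $v\in V(G)\setminus S$ that has a strict $S$-ancestor in $T$, let $\phi(v)=\phi(a(v))$ where $a(v)$ is the deepest $S$-ancestor of $v$. The remaining vertices form a (possibly empty) connected subtree $A\subseteq T$; send every vertex of $A$ to the root square $(1,1)$ of $T'$. Two facts justify this fallback. The hypothesis that $T'$ is a tree forces $(1,1)$ to be nonzero, since otherwise the leftmost nonzero square in row~1 and the bottommost nonzero square in column~1 would both be sources of the chessboard graph, producing a forest with at least two components. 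Moreover, the same hypothesis forces $\tau_1=1$, so that the $\pre$-minimal element of $S$ is also the $\rpost$-minimal element and lies in $(1,1)$; this element is then the unique top-level $S$-vertex $w^*$, because any other top-level vertex would be $T$-incomparable to $w^*$ and would therefore need a strictly smaller $\rpost$-rank than the already minimal rank of $w^*$.

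Connectivity of the fibers now follows. For each $u\in S$ the set $\{v:a(v)=u\}$ is the subtree of $T$ rooted at $u$ with the subtrees of its top-level $S$-descendants removed; the $T$-paths between successive elements of a common chessboard square described in the first observation glue these sets into a single connected fiber, and the fiber above $(1,1)$ additionally absorbs $A$ through the edge from $w^*$ to its parent in $T$, whenever $A$ is nonempty. Contracting each fiber in $T$ then realizes every $T'$-edge as the last $T$-edge on one of the connecting paths of the second observation. The main obstacle will be the converse direction: showing that every cross-fiber $T$-edge $vw$ projects to a genuine $T'$-edge, not merely to a pair of distinct squares. I would handle this by a short case analysis on whether $v,w\in S$, invoking the $213$-avoidance of $\pi_G$ to forbid configurations in which $\phi(v)$ and $\phi(w)$ would be separated by another nonzero square in their common row or column, and thereby forcing $\phi(v)$ and $\phi(w)$ to be adjacent in $T'$.
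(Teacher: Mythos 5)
Your construction is more explicit than the paper's (which contracts each vertex not in $S$ into its parent, plus possibly one more edge at the root, without ever addressing chessboard squares containing more than one element of $S$). Your fiber-based quotient handles the multi-element case and the root correctly, and the intermediate observations---that $T'$ being a tree forces $(1,1)$ to be nonzero, that this forces $\tau_1=1$, and that the $\pre$-minimal element $w^*$ of $S$ is therefore the unique top-level $S$-vertex and a common $T$-ancestor of all of $S$---are all right.

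The one thing to fix is the final paragraph, where you flag ``the converse direction'' as the main obstacle and propose a case analysis. No case analysis is needed, and the one you sketch would not in fact suffice: you plan to ``forbid configurations in which $\phi(v)$ and $\phi(w)$ would be separated by another nonzero square in their common row or column,'' but the dangerous configuration to rule out is the one where $\phi(v)$ and $\phi(w)$ share \emph{neither} a row nor a column, and your sketch does not address it. Instead, observe that contracting the (connected, nonempty, $V(T)$-partitioning) fibers of $\phi$ in the tree $T$ automatically yields a tree $T/\!\!\sim$ on the same vertex set as $T'$, so $T/\!\!\sim$ has exactly $|V(T')|-1$ edges. You have already shown, via the adjacent-square ancestor pairs, that every edge of $T'$ induces a $T$-edge between the corresponding two fibers, so the natural bijection $V(T')\to V(T/\!\!\sim)$ carries $E(T')$ injectively into $E(T/\!\!\sim)$. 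Equality of edge counts then forces this injection to be a bijection, so the two trees are isomorphic; the ``converse'' follows for free and the diagonal case never needs to be confronted directly.
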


\begin{proof}
In $T'$, an element $y$ is a descendant of an element $x$ if and only if $y$ is greater than~$x$ both in sequence order and in value order; this means that, for the corresponding vertices in~$T$, $y$ is after~$x$ both in preorder and in reverse postorder. Recall from Section~\ref{sec:superpattern-to-universal} that this is only possible when $y$ is also a descendant of~$x$ in~$T$ as well as in~$T'$. To obtain $T'$ from~$T$, we need merely find each node~$z$ that does not belong to~$T'$, contract the edge from $z$ to its parent, and if necessary contract one more edge to the root of~$T$ (if that node does not belong to~$T'$).
\end{proof}

The following lemma allows us to deal with planar graphs of low pathwidth that are not themselves maximal planar.

\begin{lemma}
\label{lem:good-tree}
Let $G$ be a connected planar graph that is not maximal. Then there exists a maximal planar supergraph $G'$ of $G$, on the same vertex set, and a canonical representation of $G'$, such that if $T = \ctree(G')$ is derived from the canonical representation,
then all edges of $T$ that do not have $v_1$ as an endpoint belong to $G$.
\end{lemma}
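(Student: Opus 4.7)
My approach is to construct $G'$ and a canonical representation simultaneously, using a DFS tree~$T$ of~$G$ to guide both the triangulation and the vertex ordering. First, fix a planar embedding of~$G$ with a chosen vertex~$v_1$ on the outer face, and perform a clockwise planar depth-first search from~$v_1$ to obtain~$T$. Then pick two neighbors of~$v_1$ on the outer cycle to play the roles of~$v_2$ and~$v_n$, and triangulate the outer face of the embedding by adding a fan of edges from~$v_1$ to every other vertex on the outer cycle. All these outer-face additions are incident to~$v_1$, so the lemma permits them to appear in $\ctree(G')$.

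Next, I would build $G'$ and the canonical representation incrementally, starting from $G_2$ = triangle $v_1 v_2 v_n$. Repeatedly, I pick the next vertex $v_{k+1}$ to be an unplaced vertex whose $T$-parent $p_T(v_{k+1})$ already lies on $C_k$; I then add $v_{k+1}$ on the outer face and join it to $p_T(v_{k+1})$ together with a contiguous path of further vertices on $C_k$ extending toward~$v_2$, adding any new $v_{k+1}$-incident edges required to triangulate the resulting face in a planar way. Because $C_k$ is ordered along the clockwise boundary so that $v_1$ comes first along $C_k\setminus v_1 v_2$ (as in Lemma~\ref{lem:C_k_sorted}), the $T$-parent becomes the leftmost earlier neighbor of~$v_{k+1}$, and hence $\parent(v_{k+1})=p_T(v_{k+1})$ in $\ctree(G')$, an edge of~$G$. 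Any additional triangulation edges from $v_{k+1}$ go to later vertices along the path and are not tree edges, so they do not threaten the conclusion. After $n$ such steps we have a maximal plane graph $G'$ together with the canonical representation we built.

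The main obstacle is verifying the inductive invariant that at every step a valid next vertex exists and that its $T$-parent sits correctly at the left end of the chosen path. For existence, connectivity of~$G$ and the DFS structure of~$T$ guarantee an unplaced $T$-child of some vertex currently on $C_k$; for the ordering, the clockwise planarity of the DFS ensures that the subtrees of~$T$ are embedded in disjoint regions bounded by tree paths and back edges, so when the DFS backtracks the ancestors it exposes stay on the current outer boundary. The delicate case is when the planar embedding forces some earlier neighbor of~$v_{k+1}$ to lie strictly between~$v_1$ and~$p_T(v_{k+1})$ on $C_k\setminus v_1 v_2$; here I would resolve the conflict by additionally joining $v_{k+1}$ directly to~$v_1$, making $v_1$ the leftmost earlier neighbor and hence the parent of $v_{k+1}$ in $\ctree(G')$. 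Since the lemma exempts edges of $\ctree(G')$ incident to~$v_1$ from belonging to~$G$, this fallback preserves the desired property while leaving the rest of the canonical tree within~$G$. Finally, by Lemma~\ref{lem:recanonize} we may re-sort the vertices by $\rpost$ without changing the tree or its edges in~$G$.
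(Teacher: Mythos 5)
Your approach differs substantially from the paper's: you try to make $\ctree(G')$ coincide (away from $v_1$) with a fixed planar DFS tree of $G$, whereas the paper never picks a tree in advance --- it constructs the canonical representation greedily and then simply \emph{observes} that the leftmost incoming edge of each added vertex is either a $G$-edge or incident to $v_1$. The paper's key device, which your proposal lacks, is the choice rule ``among unplaced vertices adjacent to $S$ at a vertex other than $v_2$, pick the one whose rightmost earlier neighbor is as far right as possible (tie-break by leftmost neighbor).'' That rule is what guarantees two crucial invariants: (i) every later vertex's earlier $G$-neighbors remain on the current outer cycle $C_k$ and form a contiguous interval (so $G\subseteq G'$ can actually be realized planarly), and (ii) adding the new triangulation edges cannot interfere with any other unplaced vertex's interval. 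Your rule --- ``pick any unplaced vertex whose $T$-parent is on $C_k$'' --- does not give either of these. In particular, nothing prevents you from choosing a $v_{k+1}$ whose addition (together with the fan of triangulation edges toward $v_2$) walls off another vertex $w$ on $C_k$ that some still-unplaced vertex $u$ needs as a $G$-neighbor; once $w$ leaves $C_k$, the $G$-edge $uw$ can never be drawn, so $G'$ cannot both extend $G$ and be a plane triangulation built this way.

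Two further gaps. First, your ``fallback'' of joining $v_{k+1}$ directly to $v_1$ is invoked without any planarity argument: joining $v_{k+1}$ to $v_1$ and to everything between $v_1$ and its rightmost earlier neighbor adds a long fan across $C_k$, which is exactly the move the paper only allows in the special case where all remaining unplaced vertices touch $S$ only at $v_2$ (and then justifies that it is safe precisely for that reason). You have not shown that your ``delicate case'' coincides with such a safe situation, nor is it clear the fallback is needed at all in your setting --- if the leftmost earlier $G$-neighbor is some other ancestor $w$ rather than $p_T(v_{k+1})$, the parent edge $v_{k+1}w$ is still a $G$-edge and no repair is required; the case analysis is not cleanly matched to the obstacle. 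Second, several conditions of a canonical representation (that each $G_{k+1}$ is 2-connected, that the earlier neighbors form a path of length at least two in $C_k\setminus v_1v_2$) are asserted implicitly but not verified; the up-front fan triangulation of the outer face (which is not even a simple cycle if $G$ has cut vertices) further muddies this. The DFS idea is a reasonable intuition, but as written the proof does not establish the invariant that makes the construction possible at every step, which is the heart of the lemma.
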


\begin{proof}
We choose arbitrarily a base edge $v_1v_2$ in $G$.
Next, we construct the supergraph $G'$ and the canonical representation greedily, at each step maintaining a canonical representation of a subset $S$ of the vertices of $G$, and a supergraph of $G$ for which $S$ induces a triangulated disk $D$ having the base edge on its boundary ($D$ could be degenerate, consisting of only the subgraph induced by $v_1$ and $v_2$).
At each step of the construction, until $S$ contains all of the vertices of $G$, we choose one new vertex of $G\setminus S$ to add to $S$.

We order the vertices around the boundary of $D$ left-to-right from $v_1$ to $v_2$ along the path that does not include edge $v_1v_2$ (in the case $D$ is degenerate, we do include the edge $v_1 v_2$). Each vertex $u$ that does not belong to $S$ but is adjacent to $S$ has a set of neighbors along that path that lie within some interval from the leftmost neighbor to the rightmost neighbor (but possibly with vertices interior to the interval that are not neighbors of $u$). If $u_i$ and $u_j$ are two different vertices that do not belong to~$S$ but are adjacent to $S$, their intervals are either disjoint or nested. We distinguish two cases:
\begin{itemize}
\item If there exist vertices in $G \setminus S$ that are adjacent to a vertex in $S$ other than $v_2$, then choose $u$ to be one such vertex whose rightmost neighbor $v_i \ne v_2$ is as far to the right as possible, and whose leftmost neighbor is also as far to the right as possible (breaking ties arbitrarily).
If $u$ has at least two neighbors in $S$, we add to $G'$ edges between $u$ and all the vertices on $D$ between its leftmost and rightmost neighbors; this cannot cause $G'$ to become nonplanar because the nesting property of the intervals ensures that these vertices on $D$ have no other neighbors outside of $D$. If $u$ has only $v_i$ as its neighbor, we add to $G'$ an edge from $u$ to the next vertex to the right of $v_i$ on the boundary of $D$. Then, we add $u$ to the canonical representation.
\item If all vertices in $G \setminus S$ adjacent to $S$ have only $v_2$ as a neighbor, then choose $u$ arbitrarily among such vertices. We add to $G'$ an edge from $u$ to every other vertex on the boundary of $D$. Because only $v_2$ has neighbors outside of $S$, this again cannot cause $G'$ to become nonplanar. Then, we add $u$ to the canonical representation.
\end{itemize}
The result after either of these two cases is a canonical representation of a triangulated disk with one more vertex than before. After repeating $n-2$ times, all vertices must belong to the triangulated disk. We complete the remaining graph to a maximal planar graph by adding edges from the final vertex $v_n$ to all vertices that share a face with it.

Under the constructed canonical representation, $\ctree(G')$ consists of the leftmost incoming edge for every vertex other than $v_1$. This edge either belongs to~$G$ (if the vertex was added by the first case or is $v_2$) or has $v_1$ as an endpoint (if the vertex was added by the second case or is $v_n$).
\end{proof}

\begin{theorem}
For every constant $w$, the planar graphs of pathwidth $w$ have universal point sets of size
$O(n\log^{O(1)} n)$.
\end{theorem}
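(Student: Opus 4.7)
The plan is to show that for any planar graph $G$ of pathwidth $w$, the $213$-avoiding permutation $\cperm(G')$ obtained from a carefully chosen maximal planar supergraph $G'$ has Strahler number bounded by a constant depending only on $w$, and then to invoke Theorem~\ref{thm:strahler} together with the stretching construction of Theorem~\ref{thm:U_sigma_universal}.

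First, I would assume without loss of generality that $G$ is connected (disjoint components of pathwidth at most $w$ may be linked by additional edges while preserving this pathwidth bound, by concatenating the component path decompositions and inserting a small transition bag). Apply Lemma~\ref{lem:good-tree} to obtain a maximal planar supergraph $G'$ together with a canonical representation whose canonical tree $T = \ctree(G')$ has every non-$v_1$ edge contained in $G$. Then the forest obtained by removing from $T$ all edges incident to $v_1$ is a subgraph of $G$ and so has pathwidth at most $w$; reintroducing the $v_1$-edges amounts to putting a single vertex back into the graph, which increases pathwidth by at most one. Hence $\mathrm{pw}(T)\le w+1$.

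Next, I would bound the Strahler number $s$ of $\pi_G := \cperm(G')$. By Lemma~\ref{lem:Strahler-implies-tree}, $\pi_G$ contains a pattern whose chessboard graph is a complete binary tree of height $s-2$, and by Lemma~\ref{lem:chessboard-to-starting-graph} this complete binary tree is a minor of $T$. Since pathwidth is monotone under minor operations, and the complete binary tree of height $h$ is well known to have pathwidth $\Omega(h)$, we conclude $s-2 = O(\mathrm{pw}(T)) = O(w)$, so $s = O(w)$ is a constant.

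Finally, Theorem~\ref{thm:strahler} produces a superpattern of length $O(n\log^{s-1}n) = O(n\log^{O(1)}n)$ covering every $213$-avoiding permutation of Strahler number at most $s$, and in particular covering each $\pi_G$ arising from a pathwidth-$w$ graph. Feeding such a superpattern $\sigma$ into the stretching construction of Theorem~\ref{thm:U_sigma_universal} (whose proof uses only that $\augment(\sigma)$ contain $\cperm(G')$ as a pattern) yields the desired universal point set of size $O(n\log^{O(1)}n)$. The main technical obstacle lies in translating the pathwidth bound on $G$ into the Strahler bound on $\pi_G$: this step depends crucially on Lemma~\ref{lem:good-tree}, because a naive maximalization of $G$ could inflate the pathwidth of the canonical tree $T$ and invalidate the chain of minor-monotonicity and structural lemmas used above.
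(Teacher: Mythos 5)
Your proposal is correct and reaches the conclusion by a genuinely more direct route in the middle step. The paper does not directly bound the pathwidth of $\ctree(G')$; instead it fixes a concrete forbidden pattern $\tau$ whose chessboard graph is a tree $T'$ formed by attaching a complete binary tree $T$ (chosen to have pathwidth exceeding $w$) below a new pendant root, shows via Lemma~\ref{lem:good-tree} and Lemma~\ref{lem:chessboard-to-starting-graph} that $\cperm(G')$ avoids $\tau$, and then invokes Corollary~\ref{cor:near-linear-for-all-proper-subclasses-of-213}, which \emph{internally} deduces a Strahler bound and applies Theorem~\ref{thm:strahler}. You instead bound $\mathrm{pw}(\ctree(G')) \le w+1$ directly (delete the $v_1$-edges, observe the resulting forest is a subgraph of $G$, and reinsert $v_1$ into every bag), then chain Lemma~\ref{lem:Strahler-implies-tree}, Lemma~\ref{lem:chessboard-to-starting-graph}, and minor-monotonicity of pathwidth with the $\Theta(h)$ pathwidth of a height-$h$ complete binary tree to get the explicit Strahler bound $s = O(w)$, and finally apply Theorem~\ref{thm:strahler} directly. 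Both routes lean crucially on Lemma~\ref{lem:good-tree}, as you correctly emphasize, and both must handle the possibly-missing root edges: the paper does so by attaching a pendant root to form $T'$, while you absorb $v_1$ into every bag at a cost of $+1$ in pathwidth. Your route buys an explicit dependence on $w$ (exponent $O(w)$ in the logarithm), whereas the paper's emphasizes that the canonical permutations of pathwidth-$w$ plane graphs land in a fixed proper subclass of $S_n(213)$, reusing machinery already proved. You also address the connectedness hypothesis of Lemma~\ref{lem:good-tree}, which the paper leaves implicit; just note that the linking edges must be added while preserving planarity (always possible by placing components in distinct faces) and that the degenerate $w=0$ case is trivial on its own.
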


\begin{proof}
Let $T$ be a complete binary tree of sufficiently large size that the pathwidth of $T$ is greater than~$w$, and $T'$ be a tree whose root has a single child subtree isomorphic to $T$. Suppose $G$ is a planar graph of pathwidth $w$. We apply Lemma~\ref{lem:good-tree} to augment $G$ to a maximal planar graph $G'$ with a canonical representation having the property given in the lemma. Because pathwidth is closed under minors, if $G$ is a graph of pathwidth at most $w$, then it cannot contain $T$ as a minor. Moreover, $\ctree(G')$ cannot have a minor isomorphic to $T'$, for if it did, then by removing the root edge (which might not belong to $G$), $G$ would have a minor isomorphic to $T$, which is not possible.

Let $\tau$ be the pattern whose chessboard graph is $T'$. Since $T'$ is not a minor of $G'$, it follows from Lemma~\ref{lem:chessboard-to-starting-graph} that $\tau$ is not a pattern in $\cperm(G')$. By Corollary~\ref{cor:near-linear-for-all-proper-subclasses-of-213}, the permutations that avoid both $213$ and $\tau$ have a superpattern $\sigma$ of size $O(n \log^{O(1)} n)$. Applying the same ideas as in the proof of Theorem~\ref{thm:U_sigma_universal}, $\stretchperm(\augment(\sigma))$ is a universal point set of the same asymptotic size for planar graphs of pathwidth $w$.
\end{proof}

\section{Simply-nested planar graphs}

Angelini et al.~\cite{AngDibKau-GD-2012} define a \emph{simply-nested planar graph} to be a graph in which a tree is surrounded by a sequence of chordless cycles, which form the \emph{levels} of the graph (the tree is level zero); additional edges are allowed connecting one level to another. As they showed, if the number of vertices~$n_i$ in the $i$th level is fixed, for all~$i$, then one can define a universal point set with $8\sum n_i$ vertices, consisting of $n$ concentric circles with $8n_i$ equally spaced points on the $i$th circle and with carefully chosen radii.
They use this result to prove an $O\left(n (\log n / \log\log n)^2\right)$ bound on the size of universal point sets for simply-nested planar graphs, which we improve to $O(n\log n)$ using our results on subsequence majorization.

\begin{theorem}
\label{thm:simply-nested}
There is a universal point set of size $O(n\log n)$ for the simply-nested planar graphs.
\end{theorem}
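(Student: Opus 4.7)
The plan is to construct a single point set $U_n$ of size $O(n\log n)$ consisting of concentric circles that is universal for all simply-nested planar graphs on $n$ vertices. The strategy is to combine the per-graph concentric-circles construction of Angelini et al. with the subsequence majorization machinery of Section~\ref{sec:subsequence-majorization}, which lets us accommodate the varying level-size profiles of different graphs without committing in advance to any particular profile.

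First I would define $U_n$ to consist of $n$ concentric circles $C_1,C_2,\ldots,C_n$, where $C_i$ carries $8\xi_i$ equally spaced points and $\xi$ is the sawtooth sequence from Section~\ref{sec:subsequence-majorization}. The total number of points is $8\zeta_n$, which is $O(n\log n)$ by Lemma~\ref{lem:brockhaus}. The radii $r_1<r_2<\cdots<r_n$ must be chosen, once and for all, so that the Angelini et al.\ embedding argument still applies when only an arbitrary subsequence of the circles is used as the levels of some graph. A natural choice is to let the radii grow rapidly (e.g.\ super-exponentially, in the spirit of the stretching construction in Section~\ref{sec:superpattern-to-universal}), so that the ratio between consecutive selected radii always dominates the slack required for the tree-and-cycle drawing.

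For a given simply-nested planar graph $G$ with level sizes $n_1,n_2,\ldots,n_k$ satisfying $\sum_j n_j\le n$, I would apply Lemma~\ref{lem:sawtooth} to find indices $1\le i_1<i_2<\cdots<i_k\le n$ with $n_j\le\xi_{i_j}$ for each $j$. Level $j$ of $G$ is placed on circle $C_{i_j}$, using $8n_j$ evenly spaced points out of the $8\xi_{i_j}$ available on that circle (the evenly spaced subset preserves the angular regularity exploited by Angelini et al.). Combined with the radii schedule above, this yields a valid straight-line planar drawing of $G$ on $U_n$, and the size bound $O(n\log n)$ then follows from Lemma~\ref{lem:brockhaus}.

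The main obstacle is the radii calculation. Angelini et al.'s radii are tuned to the particular level sizes of the input graph, whereas our radii must be fixed in advance and work uniformly over all possible subsequence selections made in Lemma~\ref{lem:sawtooth}. I expect this to reduce to a quantitative verification that a sufficiently fast-growing schedule of radii makes every drawing produced by the scheme planar, independently of which subset of circles is chosen; once that verification is in place, the rest of the argument is a direct substitution of the majorization bound into Angelini et al.'s framework.
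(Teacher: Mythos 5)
Your proposal takes essentially the same approach as the paper: place $8\xi_i$ points on the $i$th of $n$ concentric circles, bound the total by Lemma~\ref{lem:brockhaus}, and for an arbitrary simply-nested graph use Lemma~\ref{lem:sawtooth} to majorize its level sizes by a subsequence of $\xi$, assigning each level to the corresponding circle. The one place you diverge is the radii: you propose designing a new super-exponentially-growing schedule and acknowledge you have not verified it, whereas the paper sidesteps this by invoking Angelini et al.'s construction directly with the fixed profile $n_i=\xi_i$, so the radii come pre-certified for that profile, and the only remaining claim (which the paper states without detailed verification) is that skipping some circles and using a strict subset of the points on a circle cannot create crossings; reusing their radii rather than inventing your own is the cleaner route and lets you avoid re-deriving their embedding analysis.
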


\begin{proof}
We use the sequence of nested circles given by Angelini et al. for the version of the problem with fixed values of $n_i$, and for the assignment $n_i=\xi_i$, $i=1,2,\dots n$.
The resulting point set has $O(n\log n)$ points by Lemma~\ref{lem:brockhaus}.
For an arbitrary $n$-vertex simply-nested planar graph, the numbers of vertices per level can be majorized by a subsequence of $n_i$, and the vertices of each level can be assigned to the corresponding circle of the point set. Compared to the vertex placement of Angelini et al., this vertex placement skips some of the nested circles (the ones not used in the subsequence of $\xi_i$) and has more than the necessary number of points on some circles, neither of which causes any difficulty in the placement.
\end{proof}

Among the planar graphs covered by this result are the \emph{squaregraphs}~\cite{BanCheEpp-SJDM-10}, planar graphs that have an embedding in which each bounded face is a quadrilateral and each vertex is either on the outer face or has four or more neighbors. For instance, the subset of the integer lattice on or inside any simple lattice cycle (a \emph{polyomino}) forms a graph of this type. Although they are not themselves simply-nested, by Lemma 12.2 of Bandelt et al.~\cite{BanCheEpp-SJDM-10}, the squaregraphs can be embedded as subgraphs of simply-nested planar graphs on the same vertex set (possibly allowing some levels of the nesting structure to be degenerate cycles with one or two vertices). Therefore, by Theorem~\ref{thm:simply-nested}, the squaregraphs have universal point sets of size $O(n\log n)$.

\section{Conclusion}
In this paper we have constructed universal point sets of size $n^2/4 - \Theta(n)$ for planar graphs, and of subquadratic size for graphs of bounded pathwidth and simply-nested planar graphs. In the process of building these constructions we have provided a new connection between universal point sets and permutation superpatterns. We have also, for the the first time, provided nontrivial upper bounds and lower bounds on the size of superpatterns for restricted classes of permutations. We leave the following problems open for future research:

\begin{itemize}
\item Which natural subclasses of planar graphs  (beyond the bounded-pathwidth graphs) can be represented by permutations in a proper subclass of $S_n(213)$?
\item Can we reduce the gap between our $O(n^2)$ upper bound and $\Omega(n\log n)$ lower bound for $S_n(213)$-superpatterns?
\item Our construction uses area exponential in $n^2$; how does constraining the area to a smaller bound affect the number of points in a universal point set?
\end{itemize}

\clearpage
{\raggedright
\bibliographystyle{abbrvurl}
\bibliography{universal}}

\end{document}